\pgfplotsset{compat=1.16}
\newtheorem{assumption}{Assumption}
\newtheorem{theorem}{Theorem}
\newtheorem{lemma}{Lemma}
\begin{document}
\sloppy
\title{Estimating Export-productivity Cutoff Contours with Profit Data:\\ A Novel Threshold Estimation Approach.\thanks{
First ArXiv date: February 25, 2025. Egger acknowledges funding from the Swiss National Science Foundation under grant number IC0010-228258-10001498. We thank the editor in charge (Gianmarco Ottaviano) and two anonymous reviewers for numerous helpful comments on an earlier version of the manuscript. We also thank Ruobing Huang, Carlos Lamarche, Andres Santos, and Xiye Yang for very helpful comments.
}
}
\author{Peter H. Egger\thanks{Professor of Economics, ETH Zürich. Email: pegger@ethz.ch.} \ and \ Yulong Wang\thanks{Associate Professor of Economics, Lehigh University. Email: yuw925@lehigh.edu.}}
\date{\today}
\maketitle

\begin{abstract}
This paper develops a novel method to estimate firm-specific market-entry thresholds in international economics, allowing fixed costs to vary across firms alongside productivity. Our framework models market entry as an interaction between productivity and observable fixed-cost measures, extending traditional single-threshold models to ones with set-valued thresholds. Applying this approach to Chinese firm data, we estimate export-market entry thresholds as functions of domestic sales and surrogate variables for fixed costs. The results reveal substantial heterogeneity and threshold contours, challenging conventional single-threshold-point assumptions. These findings offer new insights into firm behavior and provide a foundation for further theoretical and empirical advancements in trade research.
\flushleft\textbf{Keywords:} International trade, Market-entry thresholds, Regression kinks, Nonparametric estimation
\flushleft\textbf{JEL codes:} C14, C24, D22, F14
\end{abstract}

\newpage

\section{Introduction}
Many economic problems involve thresholds at which the behavior of agents changes discretely. In international economics, such thresholds often arise in unobservable variables, with firm productivity being a leading example. When firms face fixed costs of market entry, discrete choices such as exporting, multinational production, or outsourcing generate kinks or discontinuities in observable outcomes such as profits or sales. Understanding the nature of these thresholds is central to the empirical analysis of firm selection and market participation.

A prominent framework in international trade assumes that firms differ only in productivity, while fixed costs of exporting are common across firms. In the seminal model of \citet{melitz2003impact}, firms decide whether to export based on productivity alone, and exporter selection is governed by a single scalar productivity cutoff. Firms with productivity above this cutoff serve both domestic and foreign markets, while those below it sell domestically only. This structure implies a sharp selection rule and generates kinked relationships between productivity and firm outcomes. Numerous extensions and empirical studies adopt this logic, using observable proxies such as prices, revenues, or sales to infer productivity-based export cutoffs \citep{bernard2003plants,chaney2008distorted,arkolakis2010market,adao2020aggregate}.

Although productivity is unobservable, its structural relationship to observable firm outcomes is often monotone and well understood, which facilitates empirical identification. Fixed costs, on the contrary, are latent and lack a clear theoretical mapping to observables. As a result, empirical work typically assumes fixed export costs to be homogeneous or absorbs their heterogeneity into unobserved error terms. Under this assumption, exporter selection can be summarized by a single cutoff in productivity or its observable counterpart.

However, both theory and evidence suggest that fixed export costs may vary systematically between firms. Firms differ in their access to distribution networks, financial conditions, ownership structure, and market reach, all of which can affect the costs of entering foreign markets. Several theoretical contributions allow for heterogeneous fixed costs and show that exporter selection may no longer be governed by a single cutoff \citep{arkolakis2010market,helpman2016trade}. Empirical studies also suggest that financial variables, ownership, and other characteristics of firms can proxy fixed costs and exhibit substantial variation between firms \citep{wippern1966financial,manova2008credit,manova2013credit}.

When fixed costs vary between firms alongside productivity, exporter selection cannot be characterized by a scalar productivity threshold. Instead, the cutoff becomes set-valued: for a given level of productivity, some firms export while others do not, depending on their fixed costs. Equivalently, the export productivity threshold varies with observable characteristics of the firm that shift fixed costs. In such settings, exporter selection is governed by a contour in the joint space of productivity and fixed-cost shifters rather than by a single point.

This paper develops an econometric framework to estimate such heterogeneous export thresholds. We consider settings in which firms’ total profits exhibit a kink with respect to a running variable such as domestic sales, which serves as a monotone proxy of productivity. Crucially, we allow the location of this kink to depend on observable firm characteristics that proxy for fixed costs. The resulting threshold is modeled as an unknown nonparametric function of these characteristics.

Methodologically, our approach extends regression-kink models, which traditionally assume a constant cutoff common to all observations \citep{hansen2000sample,hansen2017regression}. In contrast, we allow the location of the kink to vary systematically between firms. The econometric object of interest is therefore not a scalar threshold parameter, but a \emph{threshold contour} that maps fixed-cost proxies into productivity or domestic-sales cutoffs.

We propose a kernel-based estimation procedure that recovers this contour locally. At each value of the fixed-cost proxy, the threshold is estimated using a locally weighted regression-kink estimator combined with a grid search over candidate kink locations. Repeating this procedure across the support of the fixed-cost proxy yields an estimate of the threshold function. We establish the large sample properties of the estimator and show that the bias structure differs from that of standard regression-kink models due to the functional nature of the threshold. The method can be extended to accommodate endogenous running variables using a control-function approach.

The proposed framework delivers economically interpretable objects that are directly relevant for empirical trade analysis. Rather than imposing a homogeneous export cutoff, the estimator quantifies how export thresholds shift with observable firm characteristics. This allows researchers to study heterogeneous selection patterns and to assess whether exporter premia and productivity differences are driven by productivity alone or by interactions between productivity and fixed costs.

We apply the method to a large cross section of Chinese manufacturing firms using matched production and customs data. The application exploits detailed profit and sales information, along with firm-level proxies for fixed costs such as financial expenses and fixed assets. The results reveal substantial heterogeneity in export cutoffs between firms and strongly reject the hypothesis of a scalar-valued export threshold. Consistent with the presence of heterogeneous fixed costs, the estimated threshold contour implies that firms with higher fixed-cost proxies require higher productivity to profitably export.

Accounting for heterogeneous thresholds alters implied selection patterns. In particular, we do not find clear evidence of positive exporter selection once fixed-cost heterogeneity is taken into account. This finding contrasts with the conventional wisdom derived from models with homogeneous fixed costs and highlights the importance of allowing flexible selection mechanisms in empirical work.

Although the empirical application focuses on China, the econometric framework is broadly applicable to a wide range of economic settings in which discrete choices are governed by heterogeneous, unobserved thresholds. Examples include multinational production decisions, outsourcing versus in-house production, and multiproduct firm choices \citep{helpman2004export,antras2004global,nocke2014globalization}. More generally, the approach can be used whenever an outcome exhibits kinked behavior driven by an unobserved threshold that varies with observable characteristics.

The remainder of the paper is organized as follows. Section \ref{sec:literature} discusses more related literature and our contribution. Section~\ref{sec:theory} presents the theoretical motivation and illustrates how heterogeneous fixed costs generate set-valued thresholds. Section~\ref{sec:econometrics} develops the econometric framework, and Section \ref{sec:estimator} introduces the properties of the estimator. Section~\ref{sec:empirical} presents the empirical application to Chinese firm data. The final section concludes. 
Details of the asymptotic theory, simulation exercises, and proofs are collected in the Appendix. 

\section{Related Literature and Contribution}\label{sec:literature}
\subsection{Firm Heterogeneity and Trade with Selection into Markets}\label{sec:lit_hetfirms}
This paper relates to the large literature on firm heterogeneity and exporter selection following \citet{melitz2003impact}. Empirical studies typically infer export cutoffs under the assumption of homogeneous fixed export costs, implying a single productivity threshold separating exporters from nonexporters \citep[e.g.,][]{bernard1999exceptional,bernard2007firms,chaney2008distorted,eaton2011anatomy,corcos2012productivity,melitz2012gains,melitz2014heterogeneous,melitz2015new}. 

This framework has been highly successful in explaining endogenous selection of firms into markets and, in turn, endogenous productivity and comparative advantage at the country level. However, it typically abstracts from systematic variation in other variables than productivity between firms.\footnote{Notably, \cite{adao2020aggregate} propose a framework where firm heterogeneity is fundamental and applies to productivity, trade costs, and fixed costs.}
The original model of \cite{melitz2003impact} features two types of determinants of all firm-specific and aggregate endogenous variables: firm-level productivity and otherwise common factors such as market size, trade costs, and market-specific fixed access costs. With this design and set of assumptions, the model predictions are very sharp. First, for all firms in a given country of origin, there is a single destination-market-specific cutoff productivity and marginal firm. For any firm with a lower productivity, serving that market is not profitable, and any firm with a higher productivity will make positive profits when serving that market. As a consequence, the productivity distributions between exporters and nonexporters in any given country and sector will be distinct and non-overlapping, and so will be the sales of firms in the domestic market. Moreover, all firms serving the toughest market (i.e., the one with the fewest firms exporting from a given country) will serve any other market, including the domestic one. And the ranking of firms in the toughest market will be the same in all other markets. Finally, the size rank of firms in the domestic market, which is considered to be the least tough one by assumption, is decisive for the rank of firms in all other markets.

However, these sharp predictions are regularly found to be violated in the data. It is true that foreign markets tend to be tougher than domestic ones on average, and exporters tend to be larger \textit{on average} than nonexporters overall and in the domestic market. Hence, there is a first-order stochastic dominance of exporters over nonexporters in size and productivity. However, some large firms do not export at all and some exporters are rather small and unproductive. For example, \cite{dai2016} document this for China, \cite{helpman2016trade} for Brazil, \cite{blank2022structural} for Germany, and \cite{blum2024abc} for Chile. In fact, the productivity distributions of the exporting and nonexporting firms appear to have a large common support (\citealp{blum2024abc}), and the order of firms within a sector from a single country is not preserved across different destination markets (\citealp{blank2022structural}). This should not be interpreted as contradicting the selection mechanism at work in the model of \cite{melitz2003impact}. However, it suggests that there must be other sources of firm heterogeneity beyond productivity. 

An important dimension of such heterogeneity is that of fixed market-access costs. If fixed costs vary between firms with imperfect (negative) correlation with productivity,  the same productivity level can lead to negative, zero, or positive profits for some firms in a market. Then, market-specific profits are zero for a mass of firms with different levels of productivity. In other words, there is not a single marginal firm (or productivity threshold) per market, but a set. Depending on the variance of fixed market-access costs between firms in a market, the variance in cutoff productivity and the size of marginal firms may be smaller or larger. With a large variation in fixed costs between firms, changes in common economic fundamentals (e.g., through changes in tariffs or other policies) may trigger the exit or entry of firms with a significant variation in their size. This is different from the standard model of \cite{melitz2003impact} where marginal changes in policy would lead to marginal changes in the left tail of the size and productivity distribution of firms.

Unlike many outcome variables such as prices or sales, fundamental variables such as variable ad-valorem trade costs (except for tariffs) or fixed market-access costs are not directly observable. It is common practice to parameterize these variables as a function of observable characteristics (see, e.g., \citealp{anderson2004trade,egger2024QE}). The body of work that addresses fixed market-access costs is substantially smaller than the one that addresses trade costs.

Regarding fixed costs, earlier work documents that they are heterogeneous for various reasons: adjustment costs of investment (\citealp{doyle2001fixed}), financial costs associated with the financial structure of firms (\citealp{wippern1966financial}), differences in credit constraints between firms (\citealp{manova2013credit,muuls2015exporters}), and differences in fixed costs due to tenure heterogeneity in the market between firms (\citealp{egger2025gravity}). A consequence of the heterogeneity in fixed costs with firms that differ in productivity is that cutoff productivity values regarding entry and survival in a market are heterogeneous and depend on the relevant fixed costs.

Our contribution complements this literature by providing an econometric framework that directly estimates how export cutoffs vary with observable fixed-cost proxies, rather than imposing homogeneity or absorbing heterogeneity into unobserved components.

\subsection{Estimating Regression-kink Models with Set-valued Thresholds as a Nonparametric Function of Observables}\label{sec:lit_metrics}
Regarding econometrics, our main contribution is to develop a regression kink–type estimator in which the location of the kink is determined by an unknown function rather than as a scalar parameter. Existing regression-kink models typically assume a scalar-valued cutoff point and focus on identifying changes in slopes at that location \citep{hansen2017regression}. Related approaches in the threshold and regression-kink literature similarly treat the cutoff as homogeneous across observations \citep[e.g.,][]{henderson2017m,chiou2018nonparametric}.

In contrast, we allow the cutoff or threshold value to vary systematically with observation-level characteristics.\footnote{In our application, we permit the threshold productivity (or cutoff domestic sales size) of firms to depend nonparametrically on observable measures of firm-level fixed export costs.} This transforms the econometric object of interest from a point-identified threshold into a threshold contour that depends on observables. The estimation proceeds locally with respect to the observable threshold shifters using kernel weighting combined with a grid search over candidate thresholds, followed by recovery of the cutoff function and inference on its shape.

This generalization has several econometric implications. First, the parameter of interest is a functional rather than a scalar, requiring local estimation and aggregation across values of the threshold shifters (in the application, those are observable fixed-cost proxies). Second, because the threshold enters the model non-differentiably and varies across observations, the bias and convergence properties differ from those in constant-threshold designs, requiring careful treatment of bandwidth choice and inference. Third, the estimator delivers economically interpretable objects. In our application, we can infer how productivity cutoffs behind the firm-level selection into exporting differ across firms, depending on their underlying fixed costs.

Our approach is also related to the use of nonparametric methods in international trade \citep{delgado2002firm} and to recent advances in nonparametric inference \citep{cattaneo2024binscatter}. Relative to this work, our contribution lies in targeting heterogeneous selection thresholds in firm-level decision problems (here, regarding exporting) and providing a practical estimation procedure tailored to such settings.

\color{black}
\section{Theoretical Motivation: The Decision to Export}\label{sec:theory}

We use $i$ to index firms and consider the case of their sales in a single domestic and a single, composite foreign country, so that country indices can be suppressed. Use $S_i$ to denote the total sales or revenue value of firm $i$. And use $G_i$ and $X_i$ for domestic and export sales, respectively, so that $S_i=G_i+X_i$. Clearly, not all firms are exporters, whereby for many $i$ in the data $X_i=0$. However, theoretically all (in data, most) firms sell domestically. Hence, $S_i>0$ for active firms and $S_i=G_i$ for nonexporters.

\subsection{Common Fixed Costs Per Firm Type}
The theory of equilibrium firm selection was developed by \cite{hopenhayn1992entry} and popularized as well as first cast in general large-open-economy equilibrium for firm selection in domestic versus export markets by \cite{melitz2003impact}. This work assumes that firms face two important types of periodical costs: variable costs scale with the size of operations, and fixed costs do not. \cite{melitz2003impact} suggested that domestic-market fixed costs $F^G$ and export-market fixed costs $F^X$ differ and are common to firms and not otherwise indexed. The presence of fixed costs makes operating profits, $O_i=O^G_i+O^X_i$, different from total profits, $\Pi_i=\Pi^G_i+\Pi^X_i$, where $\Pi^X_i=\max\{0,O^X_i-F^X\}$. With domestic and export fixed costs being positive, $F^G,F^X>0$, we obtain $O_i,O^G_i,O^X_i>0$ and $O^G_i=\Pi^G_i+F^G$ as well as $O^X_i=\Pi^X_i+F^X$. While operating profits are positive, total profits are smaller and only nonnegative with $\Pi^G_i,\Pi^X_i\geq 0$ for active firms. Total profits can be written as: 
\begin{eqnarray}
\Pi_i&=&O^G_i-F^G + 1^X_{i}(O^X_i-F^X)\geq 0,
\end{eqnarray}
where $1^X_{i}$ is a binary indicator that is unity for operating export profits, $O^X_i$, being at least as large as exporting fixed costs, $F^X$. A large body of work on firm selection invokes so-called constant markups over marginal costs, $\mu=(1/\beta_G)>1$. In \cite{melitz2003impact}, variable operating costs are inversely multiplicative in productivity, and prices are proportional to these costs as well as the markup $\mu$. The statements $O_i=\beta_G S_i$, $O^G_i=\beta_G G_i$, and $O^X_i=\beta_G X_i$ are customary. Regarding revenues in and operating profits associated with exporting relative to domestic sales for any exporting firm $i$, we obtain
\begin{eqnarray}
\frac{X_i}{G_i}=\frac{O^X_i}{O^G_i}=B>0.
\end{eqnarray}
Hence, for exporters, $S_i=(1+B)G_i$, and for every firm, $S_i=(1+1^X_iB)G_i$, where $1^X_i$ is the aforementioned binary indicator for exporters.

On a further note regarding fixed costs, a customary assumption is that $F^X>F^G$, ensuring that every firm sells domestically and the home market is the last one for companies to exit. We could also state that $F^X=\Phi^XF^G$ with $\Phi^X>1$ commonly assumed. Then, $\Phi^X-1>0$  could be dubbed an export-entry fixed-cost margin. One consequence of the domestic-versus-foreign fixed-cost ranking and of productivity sorting is that all exporters make profits in the domestic market, $\Pi^G_i|(X_i> 0)>0$. We could state that total fixed costs are $F_i=F^G+1^X_iF^X=(1+1^X_i\Phi^X)F^G$.

Let us return to the firm indexation of the sales and operating-profit variables and state that there are only two dimensions through which firms differ in \cite{melitz2003impact}: one is productivity (and, in turn, marginal costs) and one is fixed costs to the extent that only exporters incur $(\Phi^X-1)F^G$. It turns out that because $\Phi^X>1$, not only total fixed costs $F_i$ but also total sales $S_i$ and total operating profits $O_i$ jump upward for the marginal exporter, while total profits are kinked and only change their slope (become steeper) when comparing firms with higher versus lower productivity above versus below the marginal exporter at the threshold-productivity level.

When pooling data for domestic-only sellers (nonexporters) and exporters, we could write kinked profits as
\begin{eqnarray}\label{e:tprofA}
\Pi_i=\left\{\begin{array}{ll}\beta_GG_i-F^G & \text{for domestic sellers},\\
\beta_XG_i-\Phi^X F^G & \text{for exporters},
\end{array}\right.
\end{eqnarray}
where $\beta_X=(1+B)\beta_G$, and 
$B\beta_G G_\ast=(\Phi^X-1)F^G$ for the marginal exporter with $G_\ast$.

The slopes of total profits with respect to domestic sales are
\begin{eqnarray}\label{e:marg}
\frac{\partial\Pi_i}{\partial G_i}=\left\{\begin{array}{ll}
\frac{\partial O_i}{\partial G_i}=\beta_G\equiv (1/\mu) & \text{for nonexporters},\\
\frac{\partial O_i}{\partial G_i}=\beta_X\equiv (1+B)\beta_G & \text{for exporters}.
\end{array}\right .
\end{eqnarray}
The profits the marginal exporter makes are identical to the ones that the domestically-only-selling firm with marginally lower productivity (and marginally lower domestic sales $G_i$) makes. Total profits are a smooth function increasing in $G_i$ with a positive slope $\beta_G$ for domestic-only sellers, a bigger positive slope $(1+B)\beta_G$ for exporters, and a kink for the marginal exporter (that is indifferent about exporting, and where $\Pi^X_i=\Pi^X_\ast=0$. The latter has domestic revenues of $G_i=G_\ast$, which we refer to as the export-cutoff domestic revenues.\footnote{The marginal exporter does not make profits in the export market but does so in the home market.} 

\subsection{Firm-specific Fixed Exporting Costs}
If not only productivity but also foreign market-entry costs were firm-specific (see \citealp{jorgensen2008}, \citealp{arkolakis2010market}, and \citealp{helpman2016trade} for alternative microfoundations relative to \citealp{melitz2003impact}), the parameter $\Phi^X$ would carry a firm subscript. The modified profit function then reads\footnote{Note that one could either assume $F_{i}=F^G+F^X_{i}$ (common domestic fixed costs) or $F_{i}=F^G_{i}+F^X_{i}$ (heterogeneous domestic as well as export fixed costs) without any qualitative impact on the fundamental consequences of fixed-cost heterogeneity, namely that thresholds or cutoffs delineating the options of choice are set-valued.}
\begin{eqnarray}\label{e:tprofB}
\Pi_i=\left\{\begin{array}{ll}\beta_G G_i-F^G & \text{for domestic sellers},\\
\beta_XG_i-\Phi^{X}_i F^G & \text{for exporters}.
\end{array}\right.
\end{eqnarray}

In general, if productivity is the only parameter that is firm-specific in this problem, there is a single productivity threshold -- compactly measured by a single associated domestic sales level $G_\ast$ -- above which all firms are exporters and below which they are not. If another parameter is firm-specific as well, such as foreign market-entry costs, the threshold is represented by a set or contour of threshold productivity related to but not only a function of domestic sales. 

In the absence of measurement error in sales and profit data, one could identify $F^G$ from profits minus markup times domestic sales. And in the absence of measurement error in export sales and profit data, one could then identify the scaling factor $(1+B)$ as well as $\Phi^X$ or $\Phi^{X}_{i}$. However, parameterizing fixed costs and estimating $\beta_G$ as well as $1+B$ is preferable in the presence of stochastic shocks. 

To assist the discussion, we include Figures 1 and 2 here for illustration. These figures involve profits of domestic sellers or from domestic sales, $\Pi(G_i,F^G)$, total profits of exporters, $\Pi(G_i,F_i)$, total fixed costs of domestic sellers and exporters, $F^G$ and $F_i=F^G+F^X_i$, respectively, as well as the domestic-sales threshold for the marginal exporter at observed fixed costs, $\gamma(F_i)$, or at surrogate fixed-cost measures $m_i$, $\gamma(m_i)$ 

\begin{figure}[H]
\begin{center}
\begin{tikzpicture}[scale=0.75]
\draw [<->] (0,-6) -- (0,7);
\draw [->] (0,0) -- (10,0); 
\draw (0,-1) -- (10,5); 
\draw (0,-4) -- (7,5); 
\draw (0,-2.5) -- (7,6.5); 
\draw (0,-5.5) -- (7,3.5);
\draw[dashed, red] (2.1875,-4) -- (2.1875,4);
\draw[dashed, red] (6.5625,-4) -- (6.5625,4);
\draw[decoration={brace,mirror,raise=7pt},decorate,red]
  (0,-2.5) -- node[left=8pt] {Crit.~range of $F_i$} (0,-5.5);
\draw[decoration={brace,mirror,raise=7pt},decorate, red]
  (2.1875,-4) -- node[below=8pt] {Crit.~range of $\gamma(F_i=F^G+F^X_i)$} (6.5625,-4);
\draw (-0.2,0)--(-0.2,0) node{$0$} ;
\draw (-0.55,-1)--(-0.55,-1) node{$-F^G$};
\draw (-0.3,7.3)--(-0.3,7.3) node{$\Pi_i(G_i,F^G);\Pi_i(G_i,F_i)$};
\draw (9,-0.3)--(9,-0.3) node{Dom.~sales $G_i$};
\draw[decoration={brace,raise=7pt},decorate,red]
  (7,6.5) -- node[right=8pt] {$\Pi_i(G_i,F_i)$} (7,3.5);
\draw (9.5,3.7)--(9.5,3.7) node{$\Pi^G_i(G_i,F_i)$};

\end{tikzpicture}
\end{center}
\caption{This figure illustrates the positive selection of firms into export market.}
\label{fig:illus1}
\end{figure}

Figure \ref{fig:illus1} addresses the case of a \textit{positive selection} of firms in the export market as in \cite{melitz2003impact}. However, it considers exporting fixed costs to be firm-specific, $F^X_i$. Figure \ref{fig:illus1} suggests that, if not only the productivity or domestic-sales \textit{running variable} are firm-indexed but so are fixed costs (for exporting, $F^X_i$, or also domestic sales, $F^G_i$), then the threshold (or kink) is not degenerate in a point but becomes a set illustrated by the range where the profits of exporters intersect with those of nonexporters in Figure \ref{fig:illus1}. With the considered variation in total fixed costs $F_i=F^G+1^X_{i}F^X_i$, the range of cutoff domestic sales (with associated productivities) spans the area between the red vertical broken lines on the horizontal axis, denoted by $\gamma(\cdot)$. The propensity of entering the export market and expected total profits rise if (i) the export market offers positive revenues and additional profits without affecting operations (sales and profits) in the domestic market, and (ii) the variance in fixed market-access costs is not too large. A larger variance in fixed costs makes selection and this pattern more fuzzy.\footnote{Whether the fixed costs of only exporting or of exporting as well as domestic selling are firm-specific is immaterial for the general insight that the cutoff productivity and domestic sales are set- rather than scalar-valued.} 

If the fixed market-access costs were higher for the domestic than the foreign market, $F^G>F^X_i$, there could be negative selection in that either none or the less productive firms would be found more likely among the exporters (requiring that the slope of domestic profits exceeds the one of export profits, unlike as drawn in Figure \ref{fig:illus1}).

\begin{figure}

\begin{center}
\begin{tikzpicture}[scale=0.75]
\draw [->] (0,0) -- (0,9);
\draw [->] (0,0) -- (10,0); 
\draw[red] (1,3) parabola (9,9);
\draw[blue] (1,2.5) parabola (9,6);
\draw[teal] (1,1.5) -- (9,4);
\draw[red] (8,9)--(8,9) node{$\gamma_1(m_i)$};
\draw[blue] (8,6)--(8,6) node{$\gamma_2(m_i)$};
\draw[teal] (8,4.1)--(8,4.1) node{$\gamma_3(m_i)$};
\draw (-0.3,9.6)--(-0.3,9.6) node{Dom.~sales marg.~exporter};
\draw (-0.3,9.2)--(-0.3,9.2) node{cutoff functions $\gamma(m_i)$};
\draw (8,-0.3)--(8,-0.3) node{Surrog.~fixed-cost var.~$m_i$};
\draw (-0.2,0)--(-0.2,0) node{$0$} ;

\end{tikzpicture}
\end{center}
    \caption{This figure illustrates the unknown threshold contour for different functional forms of true firm-specific fixed costs, $F_{i}=F^G+1^X_{i}F^X_{i}$, as a function of an observable surrogate variable of these fixed costs, $F_i=F(m_{i})$.}
    \label{fig:illus2}
\end{figure}

Figure \ref{fig:illus2} addresses another point, related to the measurement of fixed costs in the data. Recall that economic theory does not provide guidance about the exact functional form of the relationship between observable surrogate variables and true fixed market-access costs. The figure illustrates how different forms of said relationship might affect the shape of a given domestic-sales-fixed-cost contour when mapped into domestic-sales-observable-fixed-cost-variable space.

\subsection{Discussion}
We close this section by discussing the policy implication. In \citet{melitz2003impact}, domestic and export fixed costs, here denoted as $F^G$ and $F^X$, are both common scalars for all firms. \citet{melitz2003impact} is silent about how observables used for the empirical parameterization of fixed costs are linked to the total fixed costs of operation, $F_i=F^G+1^X_iF^X$. But empirical work has to take a stance on this. We are primarily interested in the case, where the fixed costs $F^X$ or even $F^G$ are firm-specific as had been considered in \cite{doyle2001fixed},  \cite{jorgensen2008}, \cite{manova2013credit}, \cite{muuls2015exporters}, \cite{helpman2016trade}, or \cite{egger2025gravity}. 

If both productivity and fixed costs are stochastic (i.e., they differ between firms from the same origin in a sales market), two consequences emerge theoretically. First, marginal changes in, say, fixed exporting costs $F^X_{i}$ can lead to the entry of large firms into the market, even if the average fixed market-entry costs are small. This would not happen in \citet{melitz2003impact}, as the marginal entrants (or exiters, if fixed costs are raised) are always the smallest among all firms that can survive in the market (or, with exit, that sold products prior to their exit from the market). Moreover, in \citet{melitz2003impact} there is a sharp prediction of the pecking order of firms when comparing different sales markets with each other. For instance, the firms selling to the toughest foreign market will serve all less tough markets, too, and market toughness is a common feature to all firms. Similarly, all exporters will be the largest sellers in the domestic market (as long as more firms sell domestically than export, which is common). This is not the case with firm-specific fixed costs, specifically, fixed exporting costs. 
    
Overall, we need to know how specific policy measures affect the respective fixed costs, and how firms differ in terms of the latter to predict how the competitive environment (through entry and exit) will be shaped by policy. In any case, with heterogeneity in productivity as well as fixed costs, the predictions model predictions are more nuanced than in \citet{melitz2003impact}. For instance, \cite{blank2022structural} document dramatic deviations from the pecking order predicted by an original Melitz model of firms across different sales markets of German exporters of goods as well as services. However, this does not invalidate Melitz' insights regarding the endogenous selection of firms into markets or the endogeneity of the entry-dependent average productivity observed. Melitz' basic arguments remain fully intact with heterogeneous fixed costs as well, but the predictions regarding the size of entrants and their pecking order across markets can be substantially different. 

\section{From Theory to Econometric Analysis}\label{sec:econometrics}

Economic theory provides direct implications for our econometric analysis. However, note that the relationships introduced above are measured in levels and profits are measured as a difference of operating profits and fixed costs. The latter represents a linear relationship in levels. In empirical work, it is customary to regularize data using logarithmic or hyperbolic sine transformations for numerical reasons. The structural additive, linear-in-levels form of the profit function does not lend itself to a logarithmic or hyperbolic-sine regularization, as the resulting form would then be nonlinear after transformation. Therefore, we use customary standardization in terms of the variation in the data. See Section \ref{sec:data} for more details.

To establish a clear connection between the theoretical relationship in terms of unnormalized data and the econometric analysis based on normalized data, we first introduce some further notation. Specifically, we use the convention of using lower-case letters to refer to normalized counterparts to unnormalized ones in Section \ref{sec:theory}. To this end, let $\pi_i$ and $g_i$
denote the normalized measures of total profits $\Pi_i$ and domestic sales $G_i$ of firm $i$, respectively. Moreover, let us introduce some firm-specific profit shifters in a column vector $x_i$. The latter can include a constant as well as variables which reasonably affect profits through domestic fixed costs.\footnote{In the model writeup in Section \ref{sec:theory}, we did not explicitly account for firm-specific elements in domestic fixed costs. This would have required an index $i$ on $F^G$. Here, we permit specific ownership indicator types as candidate variables in $x_i$. Specifically, we consider foreign ownership and the neighbor firm's peer effect in the empirical analysis. In general, there is no need for considering such variables apart from the constant in $x_i$.} Finally, we introduce a stochastic, firm-specific error term $u_i$.

The piecewise-linear model in equations \eqref{e:tprofA} and \eqref{e:tprofB} based on normalized data can then be expressed as a regression-kink model:
\begin{equation}
\label{eq:kink}
\pi_i = \beta_G (g_i-\gamma)_{-} + \beta_X (g_i - \gamma)_{+} + x_i'\beta_C + u_i,
\end{equation}
where, for any $a$, $(a)_{-}=\min\{a,0\}$ and $(a)_{+}=\max\{a,0\}$.
Hence, the plus and minus subscripts indicate domains of the running variable above and below the cutoff. 
Here,  $\beta_G$ corresponds to $(1/\mu)$  and $\beta_X$ corresponds to $(1+B)\beta_G$ in Section \ref{sec:theory}. These are the structural slope parameters on domestic sales in \eqref{e:marg}, translating them into operating profits for the two types of firms. With positive selection, $\beta_G$ would apply to domestic sellers and $\beta_X$ to exporters, and with negative selection, the opposite would be the case. From a theoretical perspective, under the assumptions adopted in \cite{melitz2003impact}, one would expect a positive selection with $\beta_X>\beta_G$, as $\beta_X/\beta_G=(1+B)>1$. The coefficient $\beta_C$ on $x_i$ measures the impact of additional controls on outcome.

Equation \eqref{eq:kink} follows the regression-kink framework with an unknown kink point $\gamma$, as studied by \cite{hansen2017regression}. 
The threshold $\gamma$ depends on firm-level fixed costs $F_{i}$, with the theoretical model implying the linear relationship
\[  
\gamma(F_{i})=F_{i}/(B\beta_G).
\]
Substituting $\gamma(F_{i})$ into \eqref{eq:kink}, we obtain:
\begin{equation}
\label{eq:kink12}
\pi_i = \beta_G (g_i-\gamma(F_i))_{-} + \beta_X (g_i - \gamma(F_i))_{+} + x_i'\beta_C + u_i.
\end{equation}

We emphasize that direct measures of total sales $S_i$ (or normalized $s_i$) and of domestic sales $G_i=S_i-X_i$ (or normalized $g_i$) are relatively straightforward to obtain. In contrast, direct measures of fixed costs $F_{i}$ are not directly observable. Hence, $F_i$ in $\gamma(F_i)$ must be based on observables that potentially are not linearly related to $F_i$.\footnote{For example, we use financial cost or fixed assets as surrogate variables for $F_i$ in Section \ref{sec:empirical}.} In what follows, we denote empirical measures or surrogate variables of $F_i$ by $m_i$. Moreover, we allow $\gamma(m_i)$ to take a nonparametric form. 

Ultimately, we consider the model:
\begin{equation}
\label{eq:kink2}
\pi_i = \beta_G (g_i-\gamma(m_i))_{-} + \beta_X (g_i - \gamma(m_i))_{+} + x_i'\beta_C + u_i.
\end{equation}
Comparing \eqref{e:tprofB} and \eqref{eq:kink2}, we replace $\Pi_i$ and $G_i$ with their normalized counterparts $\pi_i$ and $g_i$, respectively, and we substitute $F_{i}$ with $m_i$ to emphasize that this is a surrogate for the firm-level fixed-cost measure.
The following table summarizes the respective notations in theoretical versus the empirical frameworks. 
We observe data of $(\pi_i,g_i,m_i,x_i')'$, and the parameters of interest are $(\beta_G,\beta_X,\beta_C',\gamma(m_i))'$. The function $\gamma(m_i)$ is embedded in the exporting decision as an unknown function, $1[g_i-\gamma(m_i)>0]$, where $1[\cdot]$ denotes the indicator function. 
Such decision function is non-differentiable by construction and hence challenging existing econometric methods. 
Therefore, we develop a novel nonparametric estimation method in the following section and derive its asymptotic properties.

\begin{table}[ht]
\centering
\renewcommand{\arraystretch}{1.1}
\begin{tabular}{|l|l|l|l|}
\hline
\textbf{Theoretical Model} &           & \textbf{Econometric Model}       &                \\ \hline
Profit                     & $\Pi_i$   & Normalized profit                & $\pi_i$        \\ \hline
Domestic sales             & $G_i$     & Normalized dom.~sales            & $g_i$          \\ \hline
Fixed costs                & $F_{i}$   & Observable fixed-cost measures   & $m_i$          \\ \hline
                           &           & Other covariates                 & $x_i$          \\ \hline
\end{tabular}
\caption{Comparison of Theoretical and Econometric Models}
\end{table}

Note that the adopted approach will identify a threshold value of domestic sales, $\gamma(m_i)$ under the adopted assumptions underlying the estimator, if it exists. In the presence of a threshold, we know that there is selection in exporting versus domestic sales. And the signs and values of slope parameters $\{\beta_G,\beta_X\}$ will inform us about whether the selection is positive or negative. Positive selection means that firms with higher domestic sales will more likely be exporters and be found to the right of (or above) the threshold. Then, we would expect $\{\beta_G,\beta_X\}>0$ and $\beta_G<\beta_X$. In contrast, negative selection means that firms with higher domestic sales to the right of the threshold are less likely to be exporters. 

\section{Nonparametric Regression-kink Estimation}\label{sec:estimator}
\subsection{Overview of the Estimator}\label{sec:overview}
To estimate the kink threshold in a nonparametric fashion, we adopt a kernel-based approach. 
The intuition behind our method is straightforward, though the technical implementation is complicated by the inherent non-differentiability at the kink point. 
This section outlines the key ideas and provides heuristic insights. 
Full technical details are deferred to the Appendix.

Our approach focuses on local kernel estimation at each point $m$. 
Assuming that $\gamma(\cdot)$ is continuously differentiable, we approximate it by a constant in a local neighborhood. 
This allows us to perform a locally weighted least squares regression for any interior point $m$ in the support of $m_i$. 
Let $\theta = (\beta_G,\beta_X,\beta_C',\gamma(m))'$ denote the vector of parameters. 
We estimate it via:
\begin{equation}\label{eq:nprk}
    \hat{\theta} = \min_{\theta}\sum_{i=1}^n k_{i,n} 
    \left( \pi_i - \beta_G (g_i-\gamma)_{-} - \beta_X (g_i - \gamma)_{+} - x_i'\beta_C    \right)^2,
\end{equation}
where 
\[k_{i,n}= \frac{ K\left( \frac{m_i-m}{b_n} \right) }{\sum_{j=1}^n K\left(\frac{m_j-m}{b_n} \right)},
\]
$K(\cdot)$ denotes a kernel function, and $b_n$ the bandwidth. 
In the special case where $K(t)= 1[|t|<1]$ (i.e., the uniform kernel), this setup essentially reduces to the standard regression kink model in \eqref{eq:kink}, but applied only to observations in a local neighborhood of $m$, specifically those satisfying $\{|m_j-m|<b_n \}$. 

The estimation procedure for \eqref{eq:nprk} is conducted in two steps. 
First, for any fixed candidate value of $\gamma$, we solve a weighted least squares problem:
\begin{equation}\label{eq:coeff}
    S_n(\gamma) = \min_{(\beta_G,\beta_X,\beta_C)}\sum_{i=1}^n k_{i,n}
    \Big( \pi_i - \beta_G (g_i-\gamma)_{-} - \beta_X (g_i - \gamma)_{+} - x_i'\beta_C   \Big)^2.
\end{equation}
This step amounts to regressing $\pi_i$ on the covariates $((g_i-\gamma)_{-},(g_i-\gamma)_{+},x_i')'$. 

Next, we perform a grid search over possible values of $\gamma$. 
To facilitate this, we normalize $g_i$ to have zero mean and unit variance, and then define a fine grid $\Gamma_n$. 
For each $\gamma \in \Gamma_n$, we compute $S_n(\gamma)$ and obtain the estimator:
\begin{equation}\label{eq:gamma}
\hat{\gamma} = \arg\min_{\gamma \in \Gamma_n} S_n(\gamma).
\end{equation}
Finally, we plug in $\hat{\gamma}$ into \eqref{eq:coeff} to obtain the estimators for the coefficients $\hat{\beta}= (\hat{\beta}_G(\hat{\gamma}),\hat{\beta}_X(\hat{\gamma}),\hat{\beta}_C(\hat{\gamma})'  )'$. 
We note that, unlike in the customary threshold model \citep[e.g.,][]{hansen2000sample}, the objective function $S_n(\gamma)$ in the regression-kink model is continuous in $\gamma$. Therefore, the solution in \eqref{eq:gamma} is unique almost surely.

This entire procedure is carried out pointwise for each $m$, resulting in the estimate $\hat{\gamma} = \hat{\gamma}(m)$ as in \eqref{eq:gamma}. 
By repeating this process over a grid $M$ of values $m$ (for example, the central 98\% quantile range of $m_i$), we obtain the estimated kink function $\hat{\gamma}(\cdot)$.

\subsection{Summary of the Asymptotic Properties}\label{sec:asymptotics}
This section summarizes the large sample properties of the proposed estimator and explains how they inform the empirical implementation in Section \ref{sec:empirical}. All formal assumptions and proofs are provided in the Appendix.

We first establish consistency of the estimated threshold function $\hat{\gamma}(m)$. Under standard regularity conditions, including smoothness of the population threshold function and appropriate bandwidth choices, $\hat{\gamma}(m)$ converges uniformly to the true threshold function $\gamma_0(m)$ over the interior of the support of the fixed-cost proxy $m$. The uniform convergence rate combines a variance component driven by kernel smoothing and a bias component due to local approximation. This result implies that the estimated threshold contour recovers the shape of exporter selection boundaries in large samples. 

Uniform consistency of $\hat{\gamma}(m)$ is particularly important for the second-step estimation of the regression coefficients. In the empirical application, the threshold estimate is evaluated at observation-specific values of the fixed-cost proxy. The uniform convergence result guarantees that the resulting first-stage error remains controlled across firms and does not accumulate in the second step. As a consequence, the estimator of the regression coefficients $\beta=(\beta_G,\beta_X,\beta_C')'$ is root-$n$ consistent and asymptotically normal.

These asymptotic properties also justify the use of conventional variance estimation and bootstrap procedures for inference on $\beta$. While standard bootstrap methods are known to fail for inference on the threshold parameter $\gamma(\cdot)$ \citep{yu2014bootstrap}, the present setting differs because the bootstrap is used only for the regression coefficients, which are smooth functionals and admit an asymptotically linear representation. In the empirical application, we therefore rely on the bootstrap to construct confidence intervals for $\beta$.

From a practical perspective, the asymptotic theory informs key implementation choices in the empirical analysis. Bandwidths are selected to satisfy undersmoothing conditions that render first-stage biases asymptotically negligible for inference on $\beta$, while still allowing precise estimation of the threshold contour. The restriction of attention to the interior of the support of $m$ avoids boundary effects and ensures uniform approximation quality. These choices are reflected in the stability of the estimated threshold contours and in the robustness of the regression-coefficient estimates across specifications.

Overall, the asymptotic results establish that the proposed estimator delivers two complementary objects of interest: a nonparametric estimate of heterogeneous export thresholds and standard, well-behaved inference for the parameters governing the slopes of firms' profit functions. This combination underpins the empirical findings in Section~\ref{sec:empirical} and supports the economic interpretation of heterogeneous fixed costs as a key driver of exporter selection.

\subsection{Extension: Allowing for  Endogeneity}\label{sec:endo}

The baseline model assumes that the running variable $g_i$ is exogenous conditional on
observable covariates. Specifically, profits satisfy \eqref{eq:kink2} with $\mathbb{E}[u_i|g_i,m_i,x_i]=0$. In many applications, however, $g_i$ may be correlated with unobserved determinants of profits, violating this exogeneity condition.

To accommodate endogeneity of $g_i$, we adopt a control-function approach. Let $w_i$ denote an instrument that affects $g_i$ but is excluded from the profit equation. We specify the
reduced-form relationship
\begin{equation}
g_i = \eta'w_i + v_i, 
\label{eq:first_stage}
\end{equation}
where $v_i$ captures the endogenous component of $g_i$. The control variable is defined as the residual $\hat v_i$ from estimating \eqref{eq:first_stage}. Identification relies on the condition $\mathbb{E}[u_i|m_i,x_i,v_i]=0$.

Under this assumption, the profit equation can be written as
\begin{equation}\label{eq:profit_cf}
\pi_i = \beta_G (g_i-\gamma(m_i))_- + \beta_X (g_i-\gamma(m_i))_+ + x_i'\beta_C + \beta_V \hat v_i + \varepsilon_i,
\end{equation}
where $\varepsilon_i$ is mean independent of $(g_i,m_i,x_i,\hat v_i)$. The control variable $\hat v_i$ is included as an additional regressor in both the threshold estimation and the second-stage regression, restoring conditional exogeneity of the running variable.

Estimation proceeds exactly as in the baseline case, treating $\hat v_i$ as an additional covariate. In the first step, the threshold function $\gamma(m)$ is estimated locally using kernel weighting, now conditioning on $(m_i,\hat v_i)$. In the second step,
the regression coefficients $(\beta_G,\beta_X,\beta_C',\beta_V)'$ are estimated using the estimated threshold.

The asymptotic properties derived for the exogenous case continue to hold under this extension. Since the control variable enters the model smoothly and is estimated at a parametric rate, it does not affect the convergence rate of the threshold estimator. Uniform consistency of $\hat{\gamma}(m)$ over the interior of the support is preserved, and the estimator of the regression coefficients remains root-$n$ consistent and asymptotically normal. 

In Section \ref{sec:empirical}, we implement this extension using an external instrument and show that accounting for endogeneity affects the magnitude of the estimates but leaves the qualitative pattern of heterogeneous threshold behavior unchanged.

\section{Empirical Application}\label{sec:empirical}
\subsection{Background}\label{sec:background}
While, along with total and export sales, total profits as well as surrogate variables for fixed costs are readily observable, empirical work on firm selection largely focuses on price data (see \citealp{khandelwal2010long}, \citealp{hallak2013}) or sales data and related productivity estimates for identifying threshold effects (see \citealp{aw1995}, \citealp{vanbiesebroeck2005}, \citealp{arkolakis2010market}, \citealp{bustos2011trade}, \citealp{duan2022}, \citealp{arkolakis2021extensive}). Instead, data on profits are virtually never used in spite of their wide availability from accounting data. We exploit such data for China and describe them in more details in the next subsection.

\subsection{Data}\label{sec:data}
We employ data from the Chinese Annual Survey of Industrial Firms (CASIF), which contains accounting data for a large set of companies throughout China. The data are available between 1998 and 2015, but the reporting standards and the information provided vary over time. Due to varying revenue-threshold provisions regarding reporting requirements, the sample sizes of firms differ over the years. As the proposed econometric model involves nonparametric estimation, we prefer data with a larger sample size. For that reason, we select data from the year 2008, as the sample size is particularly large in that year. 

We apply cleaning steps that are customary prior to using the data (see \citealp{upward2013weighing}, and \citealp{wang2013trading}, for typical data-cleaning steps with the CASIF data). First, we require total revenues (domestic sales plus exports), export revenues, total profits, financial cost and fixed assets (both alternative measures of total domestic plus exporting fixed costs) to be (i) reported at all and (ii) to be positive for a company.\footnote{We additionally require foreign capital holdings in a firm to take on nonnegative values.} Second, there exist some extremely large values for total profits, which compromise the least-squares estimation. Therefore, we trim off the the firms with the largest 15\% of total profits.\footnote{We experimented with alternative trimming thresholds, which led to similar results and findings as reported later. Note that this is not surprising, as one would expect the exporting-threshold productivity or domestic sales not to lie in the right tail of the profit distribution with the data at hand and in general. In any case, such adjustments to avoid the influence of obvious data errors are customary with firm-level data in general.} Overall, this leads to a cross-sectional data sample with $n=187,720$ firms, which we use for estimation.

The dependent variable in our analysis is based on the total profit of firm $i$, $\Pi_i$. The key kink-generating variables are domestic-sales, $G_i$, and financial-cost measures as surrogate variables for fixed market-access costs, $(F_i,F^X_i)$. In this regard, we employ the firm-specific measures \texttt{Financial cost}$_i$ and \texttt{Fixed assets}$_i$ available in the data. This choice is motivated by a literature in finance regarding the heterogeneous financial structure and associated financial costs of firms in finance (\citealp{wippern1966financial}) and the one on fixed assets as collateral in the literature on financial constraints and trade (\citealp{manova2008credit,manova2013credit}).\footnote{We assume that the considered financial variables affect fixed costs conditional on domestic sales, the running variable, which is a function of prices and output.} Finally, we consider two variables behind the shifters included in $x_i$: \texttt{Rel.for.capital}$_i$ (relative foreign capital share in total capital of firm $i$), and \texttt{Avg.neighb.sales}$_i$ (the average neighboring firms' total sales). \texttt{Rel.for.capital}$_i$ is included in $x_i$, because earlier work suggested that foreign-company involvement may affect profits due to profit shifting, the transfer of technology, etc. \texttt{Avg.neighb.sales}$_i$ is included, because close-by firms may induce spillovers. For the latter, we determine a circle with a diameter of one-degree around the firm $i$ and compute the average total sales of the other firms there. It should be noted that those spillovers do not run one-way from other firms to firm $i$ but also in the other direction. Therefore, \texttt{Avg.neighb.sales}$_i$ should be considered an endogenous regressor and will be in some of the models (see \citealp{kelejian1998generalized}). In any case, we hypothesize that \texttt{Rel.for.capital}$_i$ (due to better access to internal finance) and  \texttt{Avg.neighb.sales}$_i$ (due to knowledge spillovers regarding market access) have the potential of shifting fixed market-access costs.\footnote{Note that one could alternatively consider including such regressors in $m_i$ rather than $x_i$ However, \texttt{Rel.for.capital} is positive for relatively few firms, so it cannot simply be included as a factor in $m_i$.}

Table \ref{tab:summary_stats} presents some summary statistics of the cleaned variables used.\footnote{Beyond what is shown in the table, we could add that 40\% of the included companies are exporters and 8\% receive foreign capital.} With variables in levels as the ones considered here, some regularization is necessary to normalize the large degree of heterogeneity as documented in Table \ref{tab:summary_stats}.\footnote{The standard deviation is considerably larger than the mean for the continuous variables, and the interquartile range (the distance between the 75th and the 25th quantile, $Q(0.75)-Q(0.25)$) attests to large tails in the unnormalized data.} This is the case to prevent an excessive degree of heteroskedasticity in estimation and to scale some of the regression parameters. It is customary to log-transform data in international economics -- if not the outcome variable, then at least the explanatory ones used in a log-additive index in exponential-family models. However, it should be noted that total profits (i) are additive in levels in a function that is proportional to domestic sales and (ii) governed by a switching function between domestic sellers and exporters. Hence, with this outcome, a log transformation is not useful, as mentioned above. 

Accordingly, we demean each continuous regressor and normalize the resulting variable by the standard deviation.\footnote{We can normalize the variables in the proposed way without affecting the theoretical relationship for three reasons: (i) the transformations are monotone; (ii) we estimate separate constants to the left and the right of the threshold; and (iii) the threshold set is based on nonparametric estimation which is particularly robust to the transformation of variables. The proposed standardization of variables facilitates the interpretation of the estimated parameters. A one-standard-deviation increase in an included covariate with some parameter $\beta$ then leads to a $\beta$-fold change in standardized profits.}
We refer to the standardized total profits, domestic sales, and financial costs as $\pi_i$, $g_i$, and $m_i$, respectively. The latter two together generate the set of productivity- or domestic-sales-cutoff parameters for marginal exporters.

\begin{table}[ht]
\centering
\caption{Summary Statistics for $n=187,720$ Firms in 2008.}
\begin{threeparttable}
\begin{adjustbox}{width=0.9\textwidth}
\begin{tabular}{lrrrrrrr}
\toprule
Variable & Mean & Std. Dev. & Min & $Q(0.25)$ & $Q(0.5)$ & $Q(0.75)$ & Max \\
\midrule
\multicolumn{8}{c}{Variable underlying normalized firm outcome $\pi_i$: profits $\Pi_i$} \\
\midrule
$\Pi_i$ & 1,586.29 & 1,851.77 & 1.00 & 253.00 & 807.00 & 2254.00 & 8007.00 \\
\midrule
\multicolumn{8}{c}{Variable underlying normalized firm-level $g_i$: domestic sales $G_i$} \\
\midrule
$G_i$ & 38,177.35 & 78,095.16 & 1.00 & 9,213.00 & 19,560.00 & 43,164.00 & 7,867,483.00 \\
\midrule
\multicolumn{8}{c}{Variables alternatively underlying normalized firm-level $m_i$: fixed costs $\phi_i$} \\
\midrule
\texttt{Financial cost}$_i$ & 522.73 & 2,215.06 & 0.00 & 15.00 & 119.00 & 410.50 & 254,784.00 \\
\texttt{Fixed assets}$_i$ & 11,314.66 & 54,136.14 & 1.00 & 1,539.00 & 3,882.00 & 9,595.00 & 9,207,768.00 \\
\midrule
\multicolumn{8}{c}{Variables underlying as normalized profit shifters in $x_i$} \\
\midrule
\texttt{Rel.for.cap.}$_i$ & 0.09 & 1.53 & 0.00 & 0.00 & 0.00 & 0.00 & 349.04 \\
\texttt{Avg.neighb.sales}$_i$ & 271.25 & 4,114.29 & 5.77 & 14.82 & 57.46 & 140.94 & 822,934.00 \\
\bottomrule
\end{tabular}
\end{adjustbox}
\begin{tablenotes}
\footnotesize
\item Notes: $i$ indexes firms, and $\Pi_i$ is total profit (in local currency). 
\texttt{Financial cost} represents financing costs, and \texttt{Fixed assets} denotes fixed capital assets. \texttt{Rel.for.cap} measures the ratio of foreign asset holdings in total assets of a firm.  
\texttt{Avg.neighb.sales} is an average of the total sales of other firms in the same circle around firm $i$ with one-degree diameter.
\end{tablenotes}
\end{threeparttable}
\label{tab:summary_stats}
\end{table}

\subsection{Results}

We use the proposed estimator to estimate the set (or contours) of threshold productivities depending on fixed costs with the aforementioned data. 
For robustness, we consider three different models.
Model 1 is the benchmark as in \eqref{eq:model}, 
\begin{equation}\label{eq:emp:model1}
    \pi_i = \beta_G (g_i-\gamma(m_i))_{-} + \beta_X (g_i - \gamma(m_i))_{+} + x_i'\beta_C + u_i,
\end{equation}
where $\pi_i$ are normalized profits, $g_i$ are normalized domestic sales, $m_i$ is one of the two alternative normalized measures of fixed costs (\texttt{Financial cost}$_i$ or \texttt{Fixed assets}$_i$) , and $x_i$ includes the foreign capital share relative to total capital (\texttt{Rel.for.cap.}$_i$) and the average of neighboring firms' total sales (\texttt{Avg.neighb.sales}$_i$). 
Beyond those variables we include a constant, whose estimate we suppress. 

Our second model considers that $g_i$ might be endogenous. After all, the markup charged by companies affects the consumer price and subsequently (domestic as well as foreign) demand. Moreover, local demand shocks will induce simultaneous effects on sales as well as associated profits. How important these sources of endogeneity are for parameter estimation depends on the data and is an empirical question. To address this potential endogeneity, we create two instrumental variables. 

The first one we call \texttt{Demand~from~firms}$_i$, and it represents an output-share-weighted distance to the potential users of firm $i$'s output as an input in China's domestic market. To this end, we use China's sector-to-sector input-output table for the year 2007 in conjunction with firm-level sales data from CASIF. The latter data contain a sector identifier and permit a mapping of firms to input-output sectors. We assume that a firm in our data is representative of the sector in which it belongs. Accordingly, the shares of use sectors of this firm's output are the same as those of the making sector in which the firm belongs. Moreover, we can compute the share that each firm other than $i$ has in the total sales of the use sector and intermediate purchases of the making sector. As we have information on each firm's coordinates based on the spatial information available from CASIF, we can compute firm-to-firm distances. We assume that demand declines with distance in a log-log relationship with a unitary parameter on distance (which is archtypical for gravity relationships). The latter establishes smaller weights in the input demand of more distant than from less distant potential input users. Putting together, we obtain a weighted input demand from the other firms in CASIF which reflects input-output as well as gravity (distance) aspects in the variable \texttt{Demand~from~firms}$_i$. What is key here is that the latter is systematically related to the demand of other companies for firm $i$'s output without requiring that this prediction is accurate. 

Second, we establish a variable called \texttt{Min.dist.port}$_i$, which is the distance of firm $i$ to the closest international sea port among the $14$ major considered ones.\footnote{The 14 major ports are, in alphabetical order (with rounded latitudes and longitudes in parentheses): 
Basuo port (lat.~19.0915, lon.~ 108.6702); 
Dalian port (lat.~38.9167; lon.~121.6833);
Guangzhou port (lat.~23.0939, lon.~113.4378); Haikou port (lat.~20.0526, lon.~20.0526); Nanjing port (lat.~32.2125, lon.~118.8120); Nantong port (lat.~32.0303, lon.~120.8747); Ningbo-Zhoushan port (lat.~29.8667, lon.~121.5500); Qingdao port (lat.~36.0833, lon.~120.3170); Rizhao port (lat.~35.4167, lon.~119.4333); Shanghai port (lat.~~31.3664, lon.~121.6147); Shenzhen port (lat.~22.5039, lon.~113.9447); Tianjin port (lat.~38.9727, lon.~117.7845); 
Xiamen port (lat.~24.5023, lon.~118.0845); and Yingkou port (lat.~40.6936, lon.~122.2662).} 
This variable uses two ingredients: (i) the information on the longitude and latitude of firm $i$ as already used with the construction of \texttt{Avg.neighb.sales}$_i$ as well as \texttt{Demand~from~firms}$_i$; and (ii) 
the information on the longitude and latitude of each one of the $14$ Chinese ports mentioned in the previous footnote. We compute the great-circle distance of each firm to each port and pick the shortest one to obtain \texttt{Min.dist.port}$_i$. 
This variable is a measure of variable export-market-access costs associated with a firm's location within China. While this variable should directly affect exports, it should matter also for the relative importance of the domestic market for a firm.

We collect these instrumental variables in the row vector $w_i=$ (\texttt{Demand~from~firms}$_i$,\ \texttt{Min.dist.port}$_i$). We apply the estimator in Section \ref{sec:endo} by first regressing $g_i$ on $w_i$ and then adding the residual $\hat{v}_{gi}$ as an additional regressor in a control-function approach. 
This leads to our Model 2:
\begin{equation}\label{eq:emp:model2}
    \pi_i = \beta_G (g_i-\gamma(m_i))_{-} + \beta_X (g_i - \gamma(m_i))_{+} + x_i'\beta_C + \hat{v}_{gi}\beta_V + u_i,
\end{equation}
which is the same as Model 1 except for the control function. 

Our third model considers that domestic sales ($g_i$) and the average sales of neighboring firms (\texttt{Avg.neighb.sales}$_i$) are both endogenous. The latter variable's endogeneity is plausible as neighboring firms' sales capture peer effects and their relevance indicates the presence of inter-firm (productivity or other) spillovers. Due to circular relationships, weighted other firms' outcomes will likely induce an estimation bias (\citealp{kelejian1998generalized}). Accordingly, we regress these two variables on the instruments $w_i$ and obtain the residuals $\hat{v}_{gi}$ and $\hat{v}_{xi}$, respectively, one for each endogenous regressor. 
Adding both residuals to the regression function, we specify Model 3 as:
\begin{equation}\label{eq:emp:model3}
    \pi_i = \beta_G (g_i-\gamma(m_i))_{-} + \beta_X (g_i - \gamma(m_i))_{+} + x_i'\beta_C + \hat{v}_{gi}\beta_{V_g} + \hat{v}_{xi}\beta_{V_x} + u_i,
\end{equation}
which is the same as Model 1, except for the control function. 
We estimate the threshold $\gamma(\cdot)$ first. 
To this end, we employ the nonparametric regression-kink model \eqref{eq:nprk}, using a standard Gaussian kernel with a rule-of-thumb bandwidth $b_n = n^{-1/5}$ in all three models. 
In addition, we also implement the undersmoothing bandwidth $b_n = n^{-1/3.5}$ as a robustness check. 
The estimated threshold $\hat\gamma(\cdot)$ becomes less smooth, as a consequence of the undersmoothing bandwidth. 
But its increasing feature and the estimation results for other coefficients are very similar, suggesting that our findings are robust to the bandwidth choice. 
We present these additional results in the Appendix. 

Figure \ref{fig:emp_threshold1} plots the estimated threshold contour based on Model 1 in \eqref{eq:emp:model1}, using \texttt{Financial cost}$_i$ (left panel) and \texttt{Fixed assets}$_i$ (right panel) respectively as alternative surrogate variables for fixed costs. Both plots exhibit clearly nonlinear and increasing patterns, suggesting that the domestic-sales-threshold increases with higher fixed costs at the firm level, as expected. 
Figures \ref{fig:emp_threshold2} and \ref{fig:emp_threshold3} display the estimated thresholds based on Models 2 and 3, respectively. 
All these figures clearly and robustly suggest that the threshold assumes higher domestic-sales values (productivity) for higher fixed costs.\footnote{Assumption~2.5 concerns the smoothness of the \emph{population} threshold function $\gamma_0(m)$ and does not impose smoothness on its finite sample estimator. Our local constant kernel estimator $\hat{\gamma}(m)$ is evaluated pointwise on a grid and then connected for visualization, which can produce apparent non-smooth features in finite samples. These features reflect estimation variability rather than violations of the smoothness assumption on $\gamma_0(m)$.
} 

In addition, the estimated threshold contour features an economically meaningful dispersion pattern across firms. The difference in estimated fixed costs between the 20th and 80th percentiles corresponds to a gap in the implied export cutoff of approximately 25\% in the scaled productivity (domestic-sales) dimension. This magnitude is nontrivial relative to the overall dispersion of the running variable and to the observed exporter-nonexporter gap in the data, indicating that fixed-cost heterogeneity generates meaningful variation in export selection thresholds. 
This suggests that heterogeneity in fixed costs plays an important, but not implausibly large, role in shaping export participation.

\begin{figure}[H]
\centering
\includegraphics[width=0.45\linewidth]{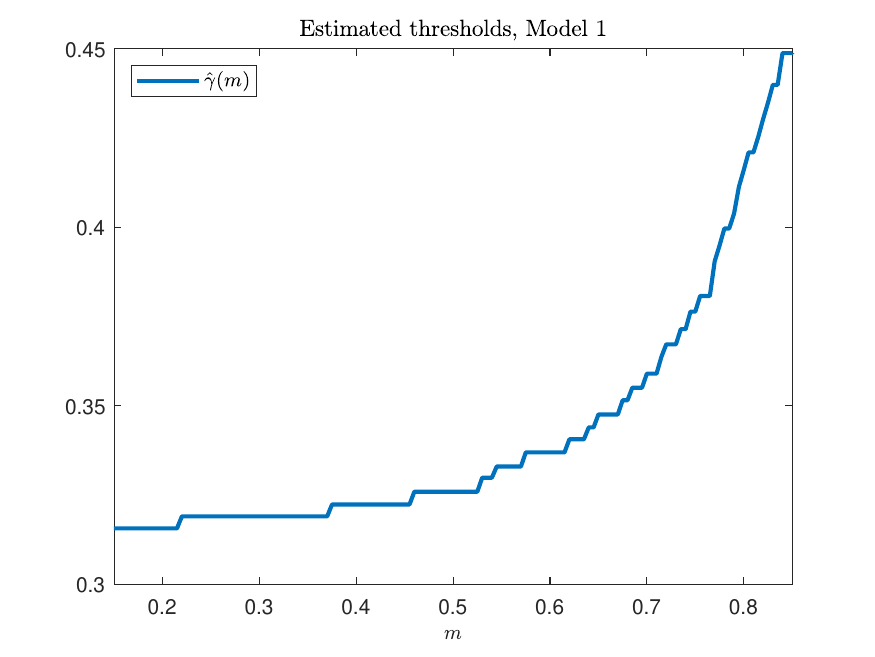}
\includegraphics[width=0.45\linewidth]{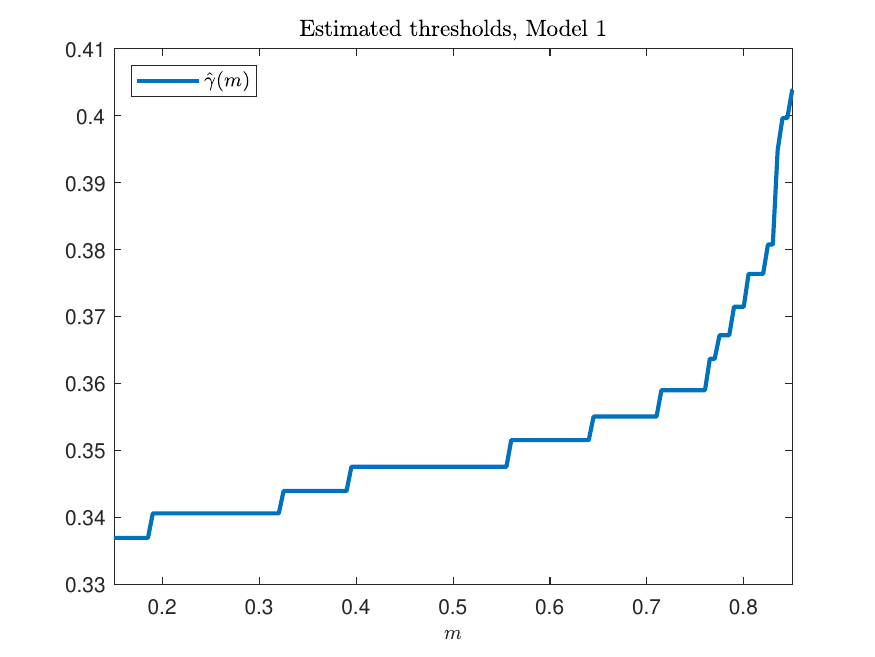}
\caption{This figure depicts the estimated threshold $\hat{\gamma}(m)$ with $m$ referring to the $m$-th quantile of $m_i$ in the range of $m\in [0.15,0.85]$. The left panel refers to models that employ quantiles of \texttt{Financial cost}$_i$  and the right panel to ones that employ \texttt{Fixed assets}$_i$ as an observable fixed-cost metric.} 
\label{fig:emp_threshold1}
\end{figure}

\begin{figure}[H]
\centering
\includegraphics[width=0.45\linewidth]{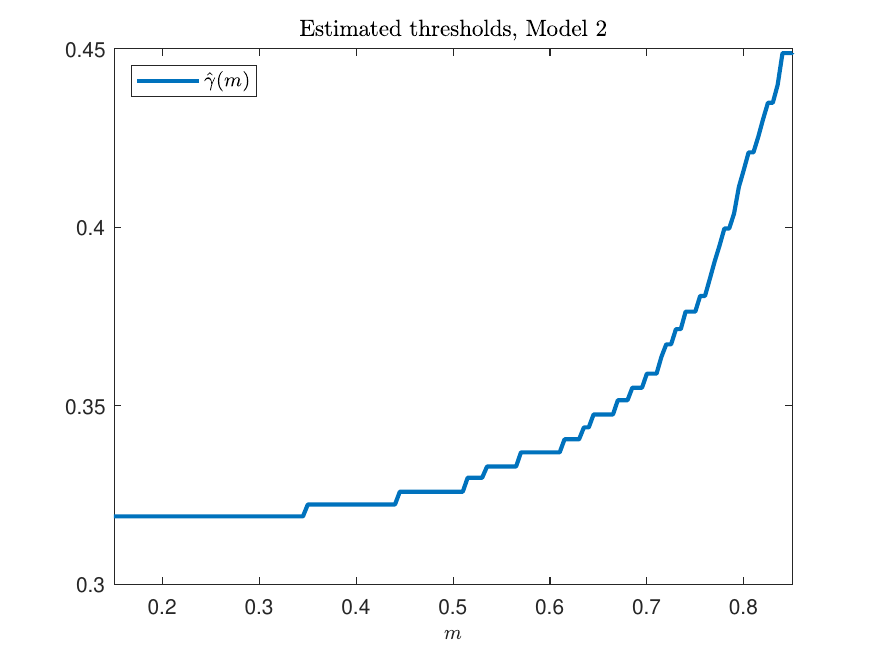}
\includegraphics[width=0.45\linewidth]{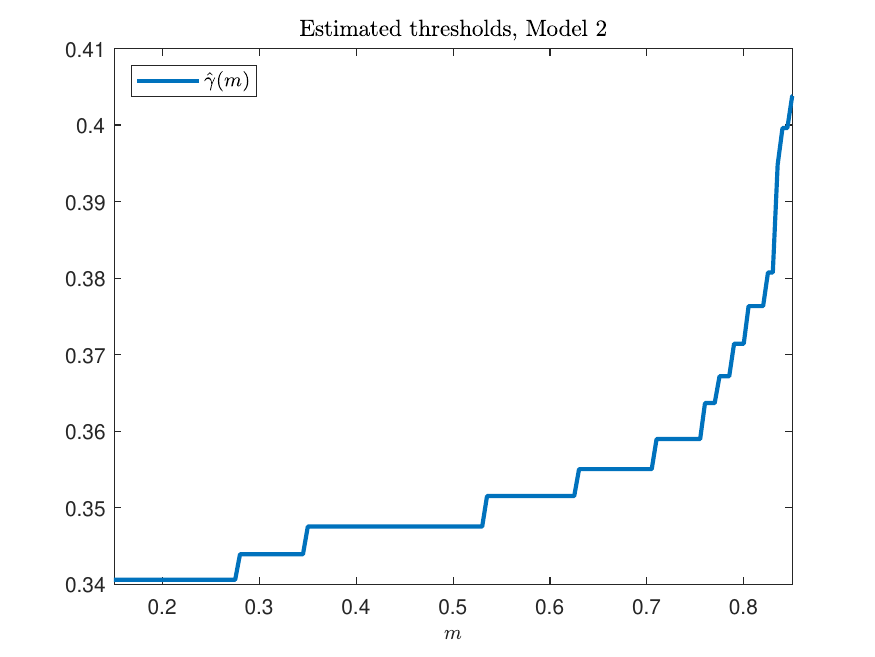}
\caption{This figure depicts the estimated threshold $\hat{\gamma}(m)$ using Model 2, with $m$ referring to the $m$-th quantile of $m_i$ in the range of $m\in [0.15,0.85]$. 
The left panel refers to models that employ quantiles of \texttt{Financial cost}$_i$  and the right panel to ones that employ \texttt{Fixed assets}$_i$ as an observable fixed-cost metric.} 
\label{fig:emp_threshold2}
\end{figure}

\begin{figure}[H]
\centering
\includegraphics[width=0.45\linewidth]{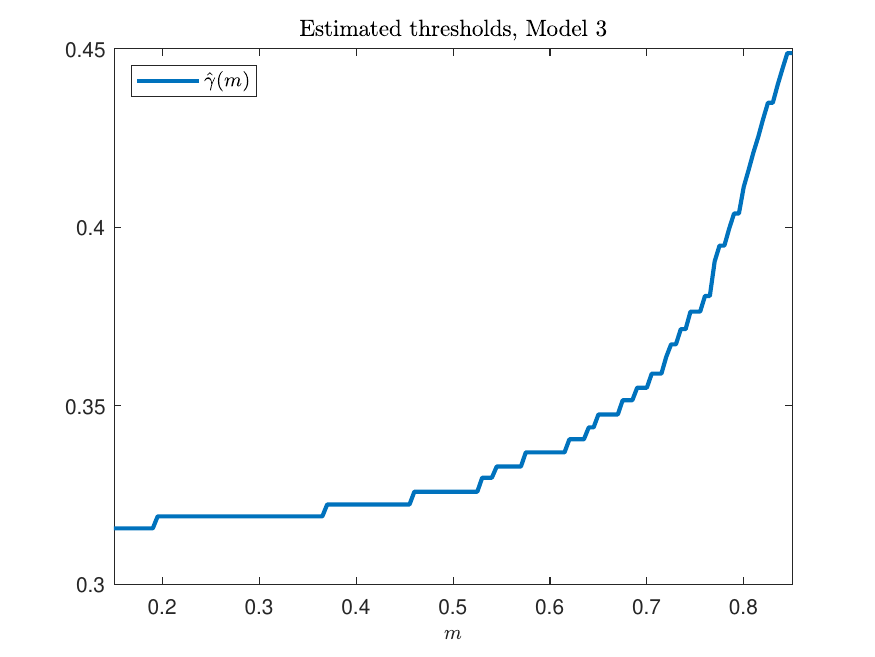}
\includegraphics[width=0.45\linewidth]{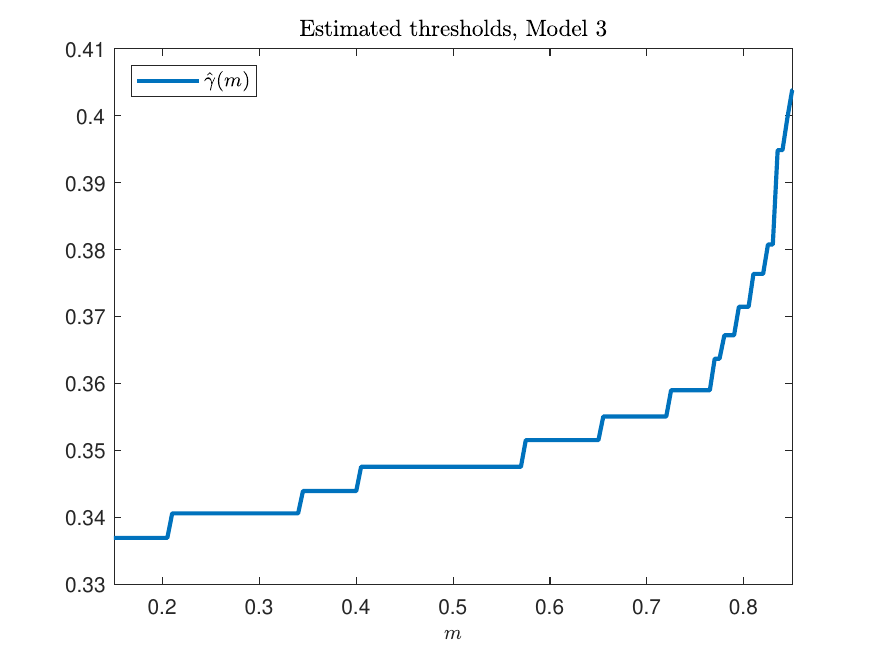}
\caption{This figure depicts the estimated threshold $\hat{\gamma}(m)$ using Model 3, with $m$ referring to the $m$-th quantile of $m_i$ in the range of $m\in [0.15,0.85]$. The left panel refers to models that employ quantiles of \texttt{Financial cost}$_i$  and the right panel to ones that employ \texttt{Fixed assets}$_i$ as an observable fixed-cost metric.} 
\label{fig:emp_threshold3}
\end{figure}

Moreover, we report the estimation results for the coefficients $(\beta_G,\beta_X,\beta_C')'$ with the aforementioned Models 1-3 in Table \ref{tab:regression_results}. 
The interior set $\mathrm{int}(\mathcal{M})$ in \eqref{eq:beta_star} is specified as the interior 98\% quantile range of $m_i$. 
The left panel uses the financial cost as a measure of $m_i$, and the right panel uses total assets instead. 

Let us highlight some interesting findings based on Table \ref{tab:regression_results}. 
First, the coefficient $\hat{\beta}_{X}$ is estimated to be positive and statistically significant, yet smaller than $\hat{\beta}_{G}$, in Model 1. Hence, the profits of firms to the right of the threshold rise less rather than more with an increase in latent productivity as captured by domestic sales. This evidence contradicts the assumptions and associated results in customary Melitz-model applications. When drawing Figure \ref{fig:illus1}, we assumed that the profits of exporters had a lower intercept (higher total fixed market-access costs) and a steeper slope that nonexporters, aligned with the arguments in \cite{melitz2003impact}. The evidence does not support such a mechanism for China, but this is consistent with the aforementioned empirical evidence on the productivity (non-)premium for exporters in China (see \citealp{lu2010exporting,lu2014pure,dai2016}).
In any case, the results based on Model 1 have to be taken with a grain of salt due to the assumption that domestic sales are deterministic and exogenous. We argued above that this assumption appears strong for plausible reasons. 

Second, when relaxing the assumption of exogeneity of $g_i$ and using instrumental-variable estimation through the control-function approach adopted in Models 2 and 3, the evidence of a lack of an exporter productivity or profit premium becomes even stronger. In those models, $\hat{\beta}_{X}$ turns even negative with $\hat{\beta}_{G}$ remaining positive. On the one hand, this aligns with the findings by \cite{lu2010exporting},  \cite{lu2014pure}, \cite{dai2016}, and \cite{gao2022pure} in favor of negative selection. However, it goes beyond those findings in that exporting becomes more likely above if domestic sales and productivity are lower.  

Finally, the considered covariates in $x_i$ are positive and statistically different from zero. Hence, the involvement of foreign companies in China tends to boost profits, and there are positive spillovers from neighboring companies' domestic sales on the average firm $i$. We argued that work in network econometrics suggests that \texttt{Avg.neighb.sales}$_i$ acting as a peer-effect term must be endogenous. The latter is ignored in Models 1 and 2 but not in Model 3. We see that considering the endogeneity of peer effects leads to evidence of considerably larger spillovers than when ignoring it.\footnote{Econometric theory suggests that the peer-effect parameter should be smaller than unity in absolute value, when using weights for other firms that sum up to unity. This is what we find in the two variants of Model 3 as well as in the biased variants of Models 1 and 2. However, the critical threshold for that parameter is not unity here (but lower), due to the normalization of the variables.}

All of the mentioned findings are robust to the alternative considered measurements of observable surrogate variables for fixed costs in $m_i$. The latter can be seen from a comparison of the qualitative findings in the left versus the right block of results in Table \ref{tab:regression_results}.

\begin{table}[ht]
\centering
\begin{threeparttable}

\begin{adjustbox}{width=\textwidth}
\begin{tabular}{lccclccc}
\toprule
\multicolumn{4}{c}{ $m_i=\texttt{Financial cost}_i$}  &  \multicolumn{4}{c}{ $m_i=\texttt{Fixed assets}_i$}  \\
\midrule
\textbf{Variable} & \textbf{Model 1}  &  \textbf{Model 2} & \textbf{Model 3}    & \textbf{Variable}  & \textbf{Model 1} & \textbf{Model 2} & \textbf{Model 3}  \\
\midrule
$(g_i-\gamma_0(m_i))_{-}$     & $0.861^{***}$ & $0.251^{***}$  & $0.229^{***}$  & $(g_i-\gamma_0(m_i))_{-}$     & $1.991^{***}$  & $0.619^{***}$  & $-0.316^{***}$\\
                              & (0.012)       & (0.031)        & (0.033)        &                               &   (0.010)      & (0.042)        & (0.054)  \\
$(g_i-\gamma_0(m_i))_{+}$     & $0.169^{***}$ & $-1.136^{***}$ & $-1.159^{***}$ & $(g_i-\gamma_0(m_i))_{+}$     & $0.006$        & $-1.331^{***}$ & $-2.225^{***}$\\
                              & (0.021)       & (0.069)        & (0.072)        &                               &   (0.005)      & (0.040)        & (0.052) \\
\texttt{Rel.for.cap.}$_i$     & $0.007^{*}$   & $0.006^{*}$    & $0.009^{*}$    &  \texttt{Rel.for.cap.}$_i$    & $0.012$        & $0.012$        & $0.016 $ \\
                              & (0.004)       & (0.004)        & (0.005)        &                               & (0.008)        & (0.008)        & (0.010) \\      
\texttt{Avg.neighb.sales}$_i$& $0.006^{**}$   & $0.008^{***}$  & $0.686^{***}$  & \texttt{Avg.neighb.sales}$_i$  & $0.004^{*}$   & $0.008^{**}$   & $0.923^{***}$ \\
                              & (0.003)       & (0.002)        & (0.021) &                                      & (0.002)        & (0.003)        & (0.024) \\       
                              \hline
$R^2$ & 0.098 & 0.226 & 0.232 & $R^2$ & 0.285 & 0.292 & 0.300 \\
Observations & 187,720 & 187,720  & 187,720 & Observations & 187,720 & 187,720 & 187,720 \\
\bottomrule
\end{tabular}
\end{adjustbox}
\caption{Estimation Results. Dependent variable are profits, $\pi_i$. Running variable are standardized domestic sales $g_i$. Standard errors in parentheses. *$p<0.1$, **$p<0.05$, ***$p<0.01$. Models 1-3 are specified as in \eqref{eq:emp:model1}-\eqref{eq:emp:model3}, respectively. }\label{tab:regression_results}
\end{threeparttable}
\end{table}

In sum, the results attest to theoretically motivated kinks in the functional relationship between profits and domestic sales, with domestic sales exhibiting a monotone, positive mapping with latent firm productivity. Moreover, the kink is not a point but a contour. This will generally be the case whenever not only one running variable (productivity or any monotone reflection of it) but also another variable such as fixed market-access costs is firm-specific and sufficiently independent of the first running variable.

\subsection{Discussion}

The estimation approach and empirical analysis in this paper are inspired by and targeting exporter productivity cutoffs. It is useful to draw the reader's attention to the fact that the only instance where export data had been used up until now is to construct the domestic-sales variable $G_i$ and its normalized counterpart $g_i$. But export data (neither in binary nor continuous form) had not been used in the estimation. However, ex post it is useful to bring exports to the fore in the interest of providing background for differences in the firms' propensity to export on the two sides of the threshold.

Recall the discussion in \cite{melitz2003impact} as well as the illustration in Figure \ref{fig:illus1}. With positive firm selection (i.e., with more productive and larger domestic sellers selecting into exporting), all or most firms to the right of the threshold (above-cutoff productivity and domestic sales) would be exporters and all or most to the left would be nonexporters. Stochastic profit shocks as well as heterogeneous fixed costs (or other fundamentals) generate some fuzziness about this pattern but the qualitative result would persist. If the profit function of domestic sellers would be steeper than the one of exporters and exporting fixed costs would be stochastically dominated by domestic market-access costs, we could have negative selection. Then, exporters would be found more likely below the cutoff productivity and domestic sales than above. 

We commented earlier that the finding of a slope parameter on normalized domestic sales above the threshold of $\beta_X<\beta_G$ with a positive parameter on normalized domestic sales below the threshold, $\beta_G>0$, is consistent with a negative selection of firms into exporting in China. We also said that a negative parameter is in stark contrast with the expectations formed against the backdrop of the customary assumptions adopted by \cite{melitz2003impact} regarding firm selection.

Let us now benchmark the latter finding and conclusion with the probability of exporting above and below the estimated threshold. We do so by generating a $10$-percentile grid on both sides of the threshold contour in $g_i$-$m_i$ space. Then we determine the fraction of exporters in the $10\times 10$ppt.~cells. We illustrate the findings in Figure \ref{fig:emp_export}.

\begin{figure}[H]
\centering
\includegraphics[width=0.45\linewidth]{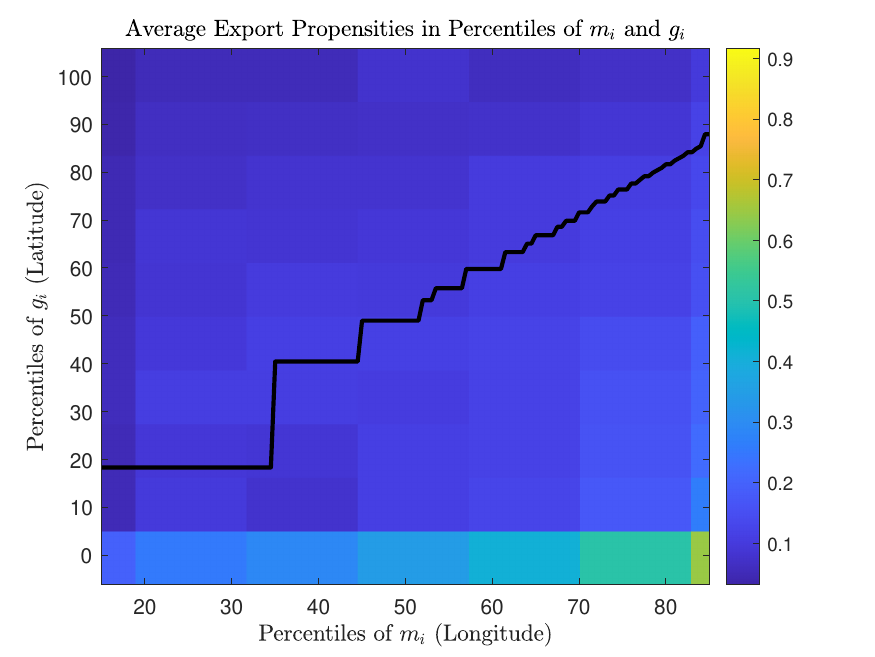}
\includegraphics[width=0.45\linewidth]{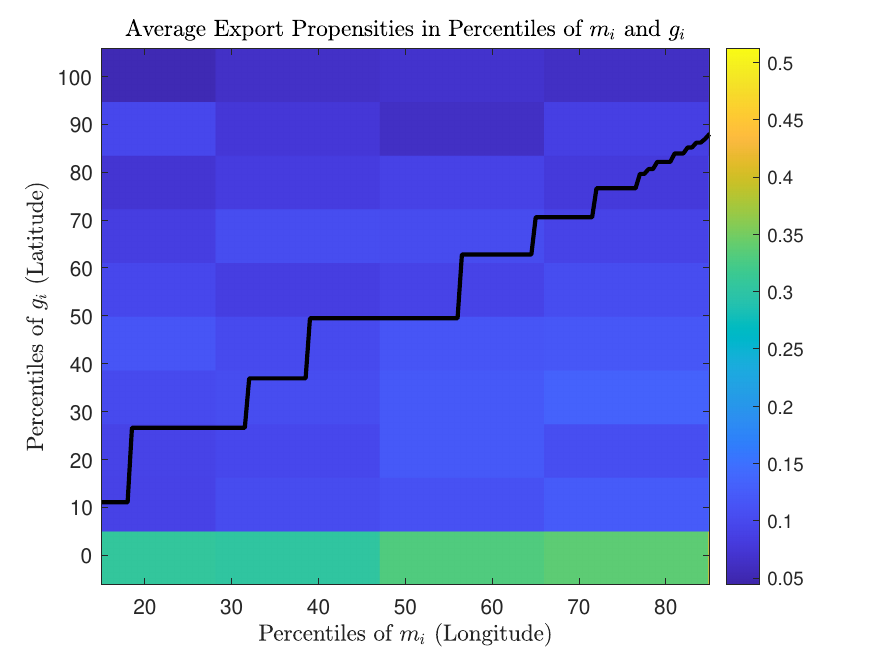}
\caption{This figure depicts the fraction of exporting firms in each tenth (10ppt.) percentile of $m_i$ and $g_i$ and the estimated threshold $\hat{\gamma}(m)$ using Model 3. The horizontal axis is the percentile of $m_i$, scaled from 0 to 100. The left panel refers to models that employ quantiles of \texttt{Financial cost}$_i$  and the right panel to ones that employ \texttt{Fixed assets}$_i$ as an observable fixed-cost metric $m_i$.} 
\label{fig:emp_export}
\end{figure}

Figure \ref{fig:emp_export} contains two panels, depending on the use of surrogate variables for fixed costs in $m_i$. The one on the left employs quantiles of \texttt{Financial cost}$_i$, and the one on the right employs \texttt{Fixed assets}$_i$. We use lighter colors for higher exporter fractions (propensities) and darker ones for lower ones. 
Interpreted through the lens of a standard \citet{melitz2003impact} model, the data would suggest strong positive exporter selection on productivity and a single export cutoff. Our estimates show that this is not the case: heterogeneity in fixed export costs generates a sizable range of effective cutoffs, weakening the link between productivity alone and export participation. Consequently, policy changes affecting marginally the fixed costs, trade costs, or factor prices would not lead to the entry and exit of firms that are marginally small only. Hence, marginal policy changes could have entry and exit implications for relatively large players in the market, which would not be the case under customary assumptions in the model of \cite{melitz2003impact}.

In line with data, essentially for all countries we know of, exporters occur in a wide domain of domestic sales and surrogate variables for fixed costs. Hence, selection is fuzzy rather than sharp. Moreover, we see that exporters are particularly prevalent in the lowest decile of domestic sales, $g_i$. This observation is consistent with earlier-cited work on China. On average, exporters are least prevalent in the lowest-$m_i$ decile and the higher $g_i$ deciles in the left panel and  and in the lower-$m_i$ and higher-$g_i$ deciles in the right panel. The main take-away message is that exporters are less frequent below (to the left of) the cutoff than above (to the right of) the estimated threshold. Hence, exporters emerge less likely in a higher-domestic-sales and lower-fixed-cost environment than in a lower-domestic-sales and higher-fixed-cost environment. This suggests that the assumptions generating positive selection are violated in China. And government or foreign-firm interference may be at the roots of this pattern.

Finally, we note that using data from China may be special in specific regards, and some patterns of behavior may not be simply taken to other data. However, the set of assumptions invoked in this paper is relatively mild. And specific results do not extend to other countries, not only because we use Chinese data, but also because specific links and functions between variables should not be assumed to be the same between countries. In particular, the slopes of the profit functions of exporters and nonexporters depend on the domestic and foreign market potential. The latter depends on market size as well as sales frictions. Neither is the same between countries (China is larger than most markets; the profit tax rates differ between countries; China applies a foreign tariff-rebate policy; etc., all of those aspects affect the slope of profit functions). Moreover, the fixed market-entry costs in a given market differ between firms from different origin countries due to natural frictions (language and cultural barriers, etc.) as well as industrial policy (fixed-cost deductibility rates, subsidies, export promotion, etc.). This means that the intercepts of the profit functions are not the same between countries. Finally, how collateral and other observables affect the creditworthiness and access to external credit varies across countries due to differences in financial frictions, bank liquidity, etc. With this in mind, we clearly would not recommend using estimates for one country, a specific set of sectors, or a specific time period and apply them elsewhere. However, we propose an algorithm that can be applied with any data that permit measurement of the relevant factors. We claim that there is enough evidence across countries on two key aspects: (i) firm productivity distributions largely overlap between exporter and nonexporters, and (ii) any firm-level metric of fixed costs displays variation across firms. 

Specifically, \cite{blum2024abc} demonstrate in their Figure 1 that there is a large overlap in the distributions of (log) productivity between exporters and nonexporters among manufacturing firms in Chile. \cite{blank2022structural} document the same for firms in German manufacturing and services sectors. \cite{helpman2016trade} specifically introduced fixed-cost heterogeneity in their model, because the data on Brazilian firms they used did not corroborate scalar-valued cutoff-productivity thresholds. A large overlap in productivity exporters and nonexoporters means that productivity is unlikely to be the single source of selection. In addition, any accounting data set would suggest that there is a large variance in fixed assets, financial expenses, and other variables between all sellers and, where identifiable, between domestic sellers and exporters. Using such observable variables to parameterize fixed market-entry costs means that the variation in fixed costs within firm type is large. One could document that in China and elsewhere, there is an overlap in observable fixed-cost measures -- ranging from asset tangibility, fixed assets, liquid assets, to financial expenses, and even interest rates paid -- between exporters and domestic sellers, as is the case for productivity. At this generic level, firms in China are not special, and the arguments extend beyond the specific empirical application.

\section{Conclusions}
This paper introduces a novel framework to estimate productivity thresholds in international trade firm models. Specifically, the estimation framework allows for more than one dimension of firm heterogeneity, as is commonly encountered in data but rarely addressed in economic theory (see \citealp{arkolakis2010market}, \citealp{helpman2016trade}, or \citealp{adao2020aggregate}, for exceptions): productivity and fixed costs. In principle, the framework would be capable of considering even more than two such dimensions. In any setting where productivity (or another heterogeneous variable) is not the single firm-indexed 
parameter, and firms make choices about discrete alternatives such as exporting or not, the critical value of the firm parameter such as the exporter cutoff productivity is not scalar but set-valued.

Moreover, the (scaled) firm productivity is proportional to several observable outcomes such as prices or domestic sales. However, fixed costs or other firm-specific attributes are not clearly parametrically related to observable surrogate variables of them. For that reason, we develop the estimator as to combine parametric and nonparametric forms in the threshold-generation function. We prove the large sample properties of the estimator and document its suitability in finite samples by simulation. 

Using a large cross section of non-exporting and exporting firms in China, we document that productivity-threshold estimates are indeed not constant but vary systematically with fixed costs. As theory predicts, firms which have higher fixed market-entry costs require a higher productivity to survive and make profits. The variation of fixed costs across firms is large and this translates into a large variation of the set of productivity thresholds. This translation is not mechanical, as we permit the relationship to be nonparametric. We consider this evidence as particularly strong and more powerful than what one could obtain from parametric estimates such those based on interaction effects.

Our results have broader implications for international trade theory and policy. In particular, trade-cost or market-size shocks do not trigger effects and responses primarily and only in the left tail of the firm-size distribution. Depending on the covariance structure between productivity and fixed costs among firms, trade-cost shocks will affect firms in a large support area of their size distribution. For instance, tariff increases as envisaged by the Trump II administration have the potential of triggering significant exit of medium-sized and even large firms from exporting, if productivity and fixed costs are positively correlated. We find the latter to be the case with the data at hand. On the contrary, trade-liberalization events might lead to an entry of large enterprises in a market which they did not find attractive before, because of high firm-specific fixed entry costs. 

We used the consideration of the choice between exporting and non-exporting in a context of firm-specific productivity and fixed costs as a guiding example. As argued in the paper, the benchmarking of some operating profits or other variable gains against some fixed costs is a fairly generic problem in international economics and industrial organization alike. Many contexts there involve discrete choices of companies and lead to subsets of companies choosing one alternative over another.
Apart from export-market entry closely-related problems include the adoption of new (single or multiple) products, the set up of (domestic or foreign) affiliates, the integration of suppliers or buyers, etc. The search for kinks values of fundamental running variables in the functional relationships with such problems is of a generic importance in those contexts. A natural stochastic approach to identify these cutoff or breakpoints in running variables is the kink-regression framework. If more than a single variable  (such as productivity) has threshold-generating power but there is at least one more determinant, the threshold does not degenerate to a point but is an at least two-dimensional contour. Whenever the surrogate variables of fixed costs of adopting an alternative strategy (exporting, multi-plant production, multinational production, outsourcing or offshoring, integration of suppliers or buyers, etc.) are measured with some variation across firms, one is faced with exactly this problem. And in absence of a tight theoretical guidance on the mapping of surrogate variables to fixed costs or other firm-specific parameters, one would then wish to permit a nonparametric relationship between observables and theoretical fixed costs underlying firms' choices. The present paper provides an estimation framework for exactly such problems.

Future research could extend the proposed methodology to dynamic settings or investigate its applicability to other economic contexts, where firm heterogeneity plays a critical role. Examples are settings where firms choose multinomially to produce a set of products (multi-product firms) or firms choose multinomially to set up plants (multi-plant or multinational firms).
In addition, recent work by \citet{lee2021factor} and \citet{yu2021threshold} develops methods for parametric threshold models. We view the development of parametric regression-kink estimators and specification tests as a challenging and important of possible future research.

\newpage
\bibliographystyle{apalike}
\bibliography{references}
\newpage
\newpage
\appendix
\section{Details of the Asymptotic Properties}\label{sec:detail}
We now study the asymptotic behavior of the proposed estimator. 
Recall the kink regression model \eqref{eq:kink2}, which is 
\begin{equation}
\pi _{i}=\beta _{G0}\left( g_{i}-\gamma_{0}\left( m_{i}\right) \right)_{-}+\beta_{X0}\left( g_{i}-\gamma_{0}\left( m_{i}\right) \right)_{+}+x_{i}^{\prime }\beta_{C0}+u_{i}\text{,}  \label{eq:model}
\end{equation}
where the observations $\{(\pi_{i},x_{i}',g_{i},m_{i})'\in \mathbb{R}^{1+\dim(x)+1+1}; i=1,\dots,n\}$ are i.i.d. 
We will use the subscript $0$ to denote the true parameter values. 
The threshold function $\gamma_{0}:\mathbb{R}\rightarrow \Gamma$ and the regression coefficients $\beta_{0}=(\beta_{G0},\beta_{X0},\beta_{C0}^{\prime })^{\prime }\in \mathbb{R}^{2+\dim(x)}$ are unknown parameters of interest. 
We let $\mathcal{G}\subset \mathbb{R}$ and $\mathcal{M}\subset \mathbb{R} $ denote the supports of $g_{i}$ and $m_{i}$, respectively.
We also let the space of $\gamma_{0}\left( m\right) $ for any $m$ be a compact set $\Gamma \subsetneq \mathcal{G}$.

It is convenient to write
\begin{equation*}
z_{i}\left( \gamma \left( m\right) \right) =\left( 
\begin{array}{c}
\left( g_{i}-\gamma \left( m\right) \right)_{-} \\ 
\left( g_{i}-\gamma \left( m\right) \right)_{+} \\ 
x_{i}%
\end{array}%
\right), 
\end{equation*}%
so that \eqref{eq:kink2} can be written as $\pi _{i}=\beta_{0}^{\prime}z_{i}\left( \gamma_{0}\left( m_{i}\right) \right) +u_{i}$.

We first establish the identification, which requires the following conditions.

\begin{assumption}[Identification]\label{ass:id}
\begin{enumerate}
    \item $\mathbb{E}\left[ u_{i}|x_{i},g_{i},m_{i}\right] =0$ almost surely.
    \item $\mathbb{E}\left[ 1\left[ g_{i}\in A\right] z_{i}\left( \gamma \left( m\right) \right) z_{i}\left( \gamma \left( m\right) \right) ^{\prime }|m_{i}=m\right] >0$\textit{\ for any }$\left( \gamma ,m\right) \in \Gamma \times \mathcal{M}$ and any set $A\in \mathcal{G} $ such that $\mathbb{P}\left( g_{i}\in A\right) >0$.
    \item $\Gamma$ is a compact subset of $\mathcal{G}$. Also, $g_{i}$ is continuously distributed with a conditional density $f_{g|m}(s|t)$ satisfying $0<C_{1}<f_{g|m}(s|t)<C_{2}<\infty $  for all $\left( s,t\right) \in \mathcal{G}\times \mathcal{M}$ and some constants $C_{1}$ and $C_{2}$.
\end{enumerate}
\end{assumption}

Assumption \ref{ass:id} is relatively mild. 
Assumption 1.1 rules out endogeneity, a restriction we will relax in Section \ref{sec:endo}. 
Assumption 1.2 imposes a full-rank condition, which is standard for identifying $\beta_{0}$. 
Assumption 1.3 ensures that the threshold does not lie at the boundary of the support of $q_{i}$ for any $m\in \mathcal{M}$.
This condition is essential for identification and is commonly imposed in the threshold regression literature \citep[e.g.,][]{hansen2017regression}.

Under Assumption \ref{ass:id}, the following theorem identifies the semiparametric threshold regression model.

\begin{theorem}\label{thm:id}
Under Assumption \ref{ass:id}, $(\beta_{0}',\gamma_{0}(m))'$ is the unique minimizer of 
\begin{equation*}
  L\left( \beta ,\gamma \left( m\right) \right) =\mathbb{E}[( \pi_{i}-\beta^{\prime }z_{i}\left( \gamma \left( m\right) \right))^{2}|m_{i}=m] 
\end{equation*}
for each $m\in \mathcal{M}$.
\end{theorem}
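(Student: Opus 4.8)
The plan is to run the textbook least-squares identification argument, using the conditional mean-zero restriction to kill the usual cross term and then exploiting the piecewise-linear shape of $z_i(\cdot)$ together with the full-rank and support conditions to pin down both $\beta$ and the scalar $\gamma_0(m)$; for a fixed $m$ the functional nature of $\gamma_0$ plays no role, so I write $\gamma_0$ for $\gamma_0(m)$. First I would substitute $\pi_i=\beta_0'z_i(\gamma_0)+u_i$ to get $\pi_i-\beta'z_i(\gamma)=u_i+\big(\beta_0'z_i(\gamma_0)-\beta'z_i(\gamma)\big)$. Since $z_i(\gamma_0)$ and $z_i(\gamma)$ are measurable functions of $(g_i,x_i,m_i)$, Assumption \ref{ass:id}.1 and the law of iterated expectations eliminate the cross term conditional on $m_i=m$, giving
\[
L(\beta,\gamma)=\mathbb{E}\!\left[u_i^2\mid m_i=m\right]+\mathbb{E}\!\left[\big(\beta_0'z_i(\gamma_0)-\beta'z_i(\gamma)\big)^2\mid m_i=m\right].
\]
The first term does not depend on $(\beta,\gamma)$ and the second is nonnegative and vanishes at $(\beta_0,\gamma_0)$, so $(\beta_0',\gamma_0)'$ is a minimizer; any other minimizer $(\beta',\gamma)'$ must make the second term vanish, i.e.\ $\beta_0'z_i(\gamma_0)=\beta'z_i(\gamma)$ a.s.\ on $\{m_i=m\}$.

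The core of the argument is then to show this a.s.\ identity forces $\beta=\beta_0$ and $\gamma=\gamma_0$. Here I would invoke Assumption \ref{ass:id}.3: because $\Gamma$ is a compact proper subset of $\mathcal G$ and $f_{g\mid m}$ is bounded away from zero on $\mathcal G\times\mathcal M$, both $\gamma$ and $\gamma_0$ are interior points of the conditional support of $g_i$, so the three regions $\{g_i<\min\{\gamma,\gamma_0\}\}$, $\{\min\{\gamma,\gamma_0\}<g_i<\max\{\gamma,\gamma_0\}\}$, and $\{g_i>\max\{\gamma,\gamma_0\}\}$ all carry positive conditional probability, the middle one being empty only if $\gamma=\gamma_0$. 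On each region the identity reduces to an affine equation in $(g_i,x_i)$ holding a.s., and the no-multicollinearity content of Assumption \ref{ass:id}.2 (on the relevant subset, with the identically-zero kink component dropped) forces all its coefficients to be zero. The left region yields $\beta_G=\beta_{G0}$, the right region yields $\beta_X=\beta_{X0}$, and — crucially — if $\gamma\neq\gamma_0$ the middle region additionally forces the two slope values on either side of the already-resolved kink to coincide, i.e.\ $\beta_{G0}=\beta_{X0}$, contradicting the presence of a genuine kink. Hence $\gamma=\gamma_0$, and then $(\beta_0-\beta)'z_i(\gamma_0)=0$ a.s.\ together with Assumption \ref{ass:id}.2 taken with $A=\mathcal G$ gives $\beta=\beta_0$, which is the claimed uniqueness.

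The main obstacle is the middle-region step: the entire mechanism that separates $\gamma$ from $\gamma_0$ rests on the non-differentiability of $z_i(\cdot)$ at the true threshold and on verifying that the interval between $\gamma$ and $\gamma_0$ has positive conditional mass — exactly the interiority that $\Gamma\subsetneq\mathcal G$ and the density bound in Assumption \ref{ass:id}.3 buy us. A secondary point to keep track of is that $\gamma=\gamma_0$ is forced only when the kink is nondegenerate, $\beta_{G0}\neq\beta_{X0}$; I would treat this as part of the maintained identification conditions, noting that it is implied by the economic structure, where $\beta_{X0}=(1+B)\beta_{G0}$ with $B>0$ and $\beta_{G0}=1/\mu>0$, since otherwise a shift in $\gamma$ could be absorbed into the intercept contained in $x_i$. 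Everything else — the iterated-expectations step, the region-by-region affine-coefficient matching, and collecting terms — is routine.
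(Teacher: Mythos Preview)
Your proposal is correct and follows essentially the same route as the paper: both reduce the problem via orthogonality to showing that $\mathbb{E}[(\beta_0'z_i(\gamma_0)-\beta'z_i(\gamma))^2\mid m_i=m]$ vanishes only at the truth, and both exploit the three-region decomposition induced by $\gamma$ and $\gamma_0$ together with the full-rank condition and the nondegeneracy $\beta_{G0}\neq\beta_{X0}$. The only organizational difference is that the paper splits into the cases ``$\beta\neq\beta_0$ regardless of $\gamma$'' (handled on the union of the two outer regions, where Assumption~\ref{ass:id}.2 applies with both kink components nontrivial) versus ``$\beta=\beta_0$ but $\gamma\neq\gamma_0$'' (handled by an explicit three-term expansion), whereas you proceed region by region to first pin down $\beta_G,\beta_X$ and then use the middle region to force $\gamma=\gamma_0$; your explicit flagging of the implicit $\beta_{G0}\neq\beta_{X0}$ requirement is, if anything, more careful than the paper, which invokes it in the last line of case (ii) without listing it among the assumptions.
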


We impose the following assumptions for asymptotic analysis.

\begin{assumption}[Asymptotics]\label{ass:a}
\begin{enumerate}
    \item $\mathbb{E}\left[ \pi_{i}^{4}|m_i = m\right] <\infty $ and $\mathbb{E}\left[ \left\vert \left\vert z_{i}\left( \gamma \right)\right\vert \right\vert ^{4} | m_i = m\right] <\infty $ uniformly over $\gamma$ and $m$. The parameter space of $\beta$ is compact. 
    \item The kernel function $K\left( \cdot \right) $ satisfies that $K(\cdot)>0$ and is continuously differentiable with uniformly bounded derivatives, $K(t) =K( -t) $ for $t\in \mathbb{R}$ and $\lim_{\left\vert t\right\vert \rightarrow \infty}t^{2}K\left(t\right) =0$. 
    \item The bandwidth $b_{n}$ satisfies that $b_{n}\rightarrow 0$ and $nb_n\rightarrow\infty$ as $n\rightarrow\infty$.
    \item The function $L\left( \beta ,\gamma \right) $ is twice continuously differentiable with uniformly bounded derivatives over $\theta=\left( \beta' ,\gamma \right)' $. The densities $f_m(m)$ and $f_{g|m}(g|m)$ are also twice continuously differentiable with uniformly bounded derivatives and satisfies that $f_{m}(m) >0$ for all $m$. 
    \item The kink function $\gamma_0(\cdot)$ is twice continuously differentiable with uniformly bounded derivatives.
\end{enumerate}
\end{assumption}

Assumption \ref{ass:a} is also mild and aligns with standard practice in the literature. 
Assumption \ref{ass:a}.1 imposes bounded moments and assumes a compact parameter space. 
Assumption \ref{ass:a}.2 places regularity conditions on the kernel function. 
Assumption \ref{ass:a}.3 governs the choice of the bandwidth and accommodates the commonly used rule-of-thumb choice $b_n = O(n^{-1/5})$. 
Assumption \ref{ass:a}.4 requires smoothness of the limiting criterion function and boundedness of the density. 
Finally, Assumption \ref{ass:a}.5 imposes smoothness on the true kink function.

For any query point $m\in \textrm{int}(\mathcal{M})$, denote $\theta_0(m) = (\beta_0',\gamma_0(m))'$ and $\hat{\theta}(m) = (\hat{\beta}',\hat{\gamma}(m))'$. 
The following theorem establishes the consistency of $\hat{\theta}(m)$ at given $m$.
\begin{theorem}\label{thm:consistency}
Under Assumptions \ref{ass:id} and \ref{ass:a}, for any $m\in \mathrm{int}(\mathcal{M})$, $\hat{\theta}(m)-\theta _{0}(m)=o_{p}\left( 1\right)$.
\end{theorem}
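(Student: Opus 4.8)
The plan is to treat $\hat{\theta}(m)$ as an approximate extremum estimator of the kernel-localized criterion $Q_n(\theta)=\sum_{i=1}^{n}k_{i,n}\bigl(\pi_i-\beta'z_i(\gamma)\bigr)^2$, with $\theta=(\beta',\gamma)'$, over the compact set $\Theta=\mathcal{B}\times\Gamma$, where $\mathcal{B}$ is the compact parameter space of $\beta$ from Assumption~\ref{ass:a}.1; then I would apply the standard consistency argument for extremum estimators, namely uniform convergence of $Q_n$ to a limit that has a well-separated unique minimizer at $\theta_0(m)$. Fix $m\in\mathrm{int}(\mathcal{M})$ throughout. Operationally, as in Section~\ref{sec:overview}, the minimization is carried out by profiling $\beta$ out in closed form for each candidate $\gamma$ (using that $Q_n$ is quadratic in $\beta$), yielding the continuous concentrated objective $S_n(\gamma)$ in \eqref{eq:coeff}, and then searching over the grid $\Gamma_n$; since $\Gamma_n$ becomes dense in $\Gamma$ and $S_n$ is continuous, $\hat{\theta}(m)=(\hat{\beta}',\hat{\gamma}(m))'$ is an approximate minimizer of $Q_n$ over $\Theta$ with optimization error $o_p(1)$.

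The key step is to show $\sup_{\theta\in\Theta}\bigl|Q_n(\theta)-L(\beta,\gamma)\bigr|=o_p(1)$, where $L$ is the population criterion of Theorem~\ref{thm:id}. Pointwise convergence is standard kernel-regression algebra: the normalized weights $k_{i,n}$ form a Nadaraya--Watson ratio, so the factor $f_m(m)>0$ (Assumption~\ref{ass:a}.4) cancels between numerator and denominator and $Q_n(\theta)\to_p\mathbb{E}\bigl[(\pi_i-\beta'z_i(\gamma))^2\mid m_i=m\bigr]=L(\beta,\gamma)$, using the bounded fourth moments (Assumption~\ref{ass:a}.1), the kernel conditions (Assumption~\ref{ass:a}.2), the bandwidth conditions $b_n\to 0$ and $nb_n\to\infty$ (Assumption~\ref{ass:a}.3), and the smoothness of $f_m$, $f_{g\mid m}$, and $\gamma_0$ (Assumptions~\ref{ass:a}.4 and~\ref{ass:a}.5). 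To upgrade to uniformity over $\Theta$, I would use that $\theta\mapsto(\pi_i-\beta'z_i(\gamma))^2$ is Lipschitz on the compact set $\Theta$ with an envelope whose conditional second moment is bounded — because $\mathcal{B}$ is compact and the maps $(\cdot)_\pm$ are $1$-Lipschitz in $\gamma$ — so a finite $\epsilon$-net on $\Theta$ together with the Lipschitz bound controls the oscillation of the kernel-weighted sums between net points; the non-differentiability of $z_i(\cdot)$ at $g_i=\gamma$ is harmless here, since the argument needs only continuity. The full-rank condition in Assumption~\ref{ass:id}.2 also enters: it guarantees that the kernel-weighted design matrix $\sum_i k_{i,n}z_i(\gamma)z_i(\gamma)'$ is invertible uniformly over $\Gamma$ with probability approaching one, so that $S_n(\gamma)$ is well defined, and that $L(\cdot,\gamma)$ is strictly convex in $\beta$.

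Identification then follows directly from Theorem~\ref{thm:id}: $L(\beta,\gamma)$ has the unique minimizer $\theta_0(m)=(\beta_0',\gamma_0(m))'$; since $L$ is continuous (Assumption~\ref{ass:a}.4) and $\Theta$ is compact, this minimizer is well separated, i.e. $\inf\{L(\beta,\gamma):(\beta',\gamma)'\in\Theta,\ \|(\beta',\gamma)'-\theta_0(m)\|\geq\delta\}>L(\beta_0,\gamma_0(m))$ for every $\delta>0$. Combining the uniform convergence from the previous paragraph with this well-separated identification and the $o_p(1)$ optimization error from the grid search, the standard consistency theorem for approximate extremum estimators over a compact parameter space delivers $\hat{\theta}(m)-\theta_0(m)=o_p(1)$.

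I expect the main obstacle to be the uniform-in-$\gamma$ convergence of the kernel-weighted objective for the relevant non-smooth, $m_i$-dependent class: not only is $z_i(\gamma)$ non-differentiable in $\gamma$, but within the bandwidth window the correct threshold is $\gamma_0(m_i)$ rather than $\gamma_0(m)$, so one must show that the kernel-weighted average of $\beta_0'z_i(\gamma_0(m_i))$ converges to $\beta_0'z_i(\gamma_0(m))$. Both issues are resolved by the Lipschitz continuity of $\gamma_0(\cdot)$ (Assumption~\ref{ass:a}.5) and of $z_i(\cdot)$ (the maps $(\cdot)_\pm$ are $1$-Lipschitz), together with the uniformly bounded moments of Assumption~\ref{ass:a}.1: the threshold-discrepancy contribution is of order $b_n=o(1)$ on the kernel support, and the remaining stochastic fluctuation is controlled by the covering/Lipschitz argument sketched above. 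Everything else is routine kernel-regression and extremum-estimator reasoning.
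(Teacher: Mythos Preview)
Your proposal is correct and follows essentially the same route as the paper: both establish uniform convergence of the kernel-weighted criterion to $L(\beta,\gamma)$ via the $1$-Lipschitz property of $(\cdot)_\pm$ in $\gamma$ (the paper verifies the bracketing conditions of van der Vaart--Wellner Theorem~2.8.1 rather than an explicit $\epsilon$-net, but this is the same device), then invoke Theorem~\ref{thm:id} and the standard extremum-estimator argument. The ``threshold discrepancy'' you flag in the last paragraph is handled in the paper not by a separate argument but simply by the bias calculation $\mathbb{E}[S_n(\theta)]-L(\theta)=O(b_n^2)$ (Lemma~\ref{lem:bias}), since the smoothness of $m'\mapsto\mathbb{E}[(\pi_i-\beta'z_i(\gamma))^2\mid m_i=m']$ in Assumption~\ref{ass:a}.4 already absorbs the variation of $\gamma_0(m_i)$ around $\gamma_0(m)$.
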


To derive the asymptotic distribution, we introduce the following notation.
Denote
\begin{eqnarray*}
D_{i}\left( \theta \right)  &=&-\frac{\partial }{\partial \theta }\left( \pi_{i}-\beta ^{\prime }z_{i}\left( \gamma \right) \right)  \\
&=&\binom{z_{i}\left( \gamma \right) }{-\beta_{G}1\left[g_{i}<\gamma \right] -\beta_{X}1\left[ g_{i}>\gamma \right] }
\end{eqnarray*}
and $D_{i}=D_{i}\left( \theta _{0}\right) $. 
Define
\begin{eqnarray*}
Q(m) &=&\mathbb{E}\left[ D_{i}D_{i}^{\prime }|m_{i}=m \right]  \\
V(m) &=&\mathbb{E}\left[ u_{i}^{2}D_{i}D_{i}^{\prime }|m_{i}=m\right]\times \left( \frac{\int K^{2}\left(u\right)du}{f_{m}\left( m\right) }\right).
\end{eqnarray*}%
The following theorem establishes the asymptotic normality of our estimator locally at any interior point $m$.
Denote $\gamma_0^{(k)}(m)=\partial^k\gamma_0(m)/\partial m^k$ as the $k$-th derivative of the function $\gamma_0(m)$, and similarly for $f^{(k)}(m)$. 

\begin{theorem}
\label{thm:normal}Under Assumptions \ref{ass:id} and \ref{ass:a}, for any $m\in \mathrm{int}(\mathcal{M})$, 
\begin{equation*}
\sqrt{nb_{n}}\left\{ \hat{\theta}\left( m\right) -\theta _{0}\left( m\right)-b_{n}^{2}Q\left( m\right) ^{-1}B\left( m\right) \right\} \overset{d}{\rightarrow} \mathcal{N}\left( 0,Q\left( m\right) ^{-1}V\left( m\right) Q\left( m\right)^{-1}\right) ,
\end{equation*}
where 
\begin{eqnarray*}
B\left( m\right) &=&\int u^{2}K\left( u\right) du\times \left\{ Q\left( m\right) \left( \gamma _{0}^{(1)}\left( m\right) f_{m}^{(1)}\left( m\right) +\frac{1}{2}\gamma _{0}^{(2)}(m) \right)
\right. \\
&&\left. +\frac{\beta _{X0}+\beta _{G0}}{2}\mathbb{E}\left[ \left. D_{i} \right\vert g_{i}=\gamma_{0}\left( m\right),m_{i}=m\right] f_{g|m}\left( \gamma _{0}\left( m\right) |m\right) \gamma_{0}^{(1)}(m) ^{2}\right\} .
\end{eqnarray*}
\end{theorem}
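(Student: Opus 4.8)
The plan is to view the estimator as an M-estimator solving the locally weighted least-squares problem and to obtain its asymptotic distribution from a standard expansion of the first-order conditions, paying particular attention to the two sources of bias that are specific to this functional-threshold design. First I would write the (kernel-weighted) score as $\Psi_n(\theta) = \frac{1}{n}\sum_{i=1}^n k_{i,n}^{*} D_i(\theta)\bigl(\pi_i - \beta'z_i(\gamma)\bigr)$, where $k_{i,n}^{*} = b_n^{-1}K\bigl((m_i-m)/b_n\bigr)$ is the unnormalized kernel weight; since $S_n(\gamma)$ is continuous and (by Assumption \ref{ass:id}.2) strictly convex in $\beta$ for fixed $\gamma$, and continuous and piecewise-quadratic in $\gamma$, the estimator $\hat\theta(m)$ satisfies $\Psi_n(\hat\theta(m)) = o_p\bigl((nb_n)^{-1/2}\bigr)$ almost surely even though $D_i(\theta)$ is discontinuous at $g_i=\gamma$. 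Given consistency (Theorem \ref{thm:consistency}), I would then Taylor-expand: $0 = \Psi_n(\hat\theta) = \Psi_n(\theta_0) + H_n(\bar\theta)\bigl(\hat\theta - \theta_0\bigr)$, where $H_n$ is the derivative of the score. The Hessian term converges in probability to $-f_m(m)Q(m)$: the only subtlety is differentiating the pieces involving the indicator $1[g_i>\gamma]$, whose derivative in $\gamma$ produces a Dirac contribution at $g_i=\gamma_0(m)$; because $\beta_{G0}=\beta_{X0}$ would be needed to cancel it but they differ, one must check that this boundary term is of smaller order after kernel smoothing (it contributes $O(b_n)$ relative to the leading $O(1)$ term, hence is negligible in $H_n$ but, crucially, \emph{not} in the bias — see below). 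So $\sqrt{nb_n}(\hat\theta-\theta_0) = f_m(m)^{-1}Q(m)^{-1}\sqrt{nb_n}\,\Psi_n(\theta_0) + o_p(1)$.

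Next I would analyze $\sqrt{nb_n}\,\Psi_n(\theta_0)$. Writing $\Psi_n(\theta_0) = \frac{1}{n}\sum_i k_{i,n}^{*} D_i u_i + \frac{1}{n}\sum_i k_{i,n}^{*} D_i\bigl(\beta_0'z_i(\gamma_0(m)) - \beta_0'z_i(\gamma_0(m_i))\bigr)$, the first term is the variance-generating piece: by the Lindeberg CLT for triangular arrays (moments controlled by Assumption \ref{ass:a}.1, kernel by \ref{ass:a}.2) and a change of variables $u=(m_i-m)/b_n$, $\sqrt{nb_n}\cdot\frac{1}{n}\sum_i k_{i,n}^{*}D_iu_i \overset{d}{\to}\mathcal N\bigl(0, f_m(m)^2\, \tilde V(m)\bigr)$ with $\tilde V(m) = f_m(m)^{-1}\int K^2 \cdot \mathbb{E}[u_i^2 D_iD_i'|m_i=m]$; sandwiching with $f_m(m)^{-1}Q(m)^{-1}$ delivers exactly the stated variance $Q(m)^{-1}V(m)Q(m)^{-1}$ (noting the $\int K^2/f_m(m)$ factor is already absorbed in the definition of $V(m)$). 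The bias term is where the functional nature of $\gamma_0$ enters. I would expand $\gamma_0(m_i) = \gamma_0(m) + \gamma_0^{(1)}(m)(m_i-m) + \tfrac12\gamma_0^{(2)}(m)(m_i-m)^2 + o((m_i-m)^2)$, plug into $z_i(\gamma_0(m_i)) - z_i(\gamma_0(m))$, and take expectations against the kernel. Two effects of order $b_n^2$ survive: (i) the usual local-constant smoothing bias, which after the change of variables and using $\int K = 1$, $\int uK = 0$, $\int u^2K = \int u^2 K\,du$ combines the density-gradient term $\gamma_0^{(1)}(m)f_m^{(1)}(m)$ and the curvature term $\tfrac12\gamma_0^{(2)}(m)$ — both multiplied by $Q(m)$; and (ii) a genuinely new term coming from the fact that the map $\gamma\mapsto z_i(\gamma)$ is only \emph{Lipschitz}, not $C^1$, so that $\mathbb{E}[D_i(\beta_0'z_i(\gamma_0(m)) - \beta_0'z_i(\gamma_0(m_i)))\,|\,m_i]$, when expanded, picks up a second-order contribution concentrated near $g_i=\gamma_0(m)$: a short computation with $(g-\gamma)_\pm$ shows this equals $\tfrac12(\beta_{X0}+\beta_{G0})\,\mathbb{E}[D_i|g_i=\gamma_0(m),m_i=m]\,f_{g|m}(\gamma_0(m)|m)\,(\gamma_0(m_i)-\gamma_0(m))^2 + \ldots$, whose kernel average produces the $\gamma_0^{(1)}(m)^2$ term in $B(m)$. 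Collecting, $\sqrt{nb_n}\,\Psi_n(\theta_0) = \sqrt{nb_n}\,b_n^2 f_m(m) B(m)/ (\ldots) + \mathcal N(\ldots)$; pushing the $b_n^2$ term through $f_m(m)^{-1}Q(m)^{-1}$ reproduces the centering $b_n^2 Q(m)^{-1}B(m)$.

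The main obstacle I anticipate is the rigorous handling of the non-differentiability of $z_i(\cdot)$ at the kink, which shows up twice in opposite guises: it must be shown to be \emph{harmless} for the Hessian and the linearization (a stochastic-equicontinuity / empirical-process argument, or a direct second-moment bound on $\frac1n\sum_i k_{i,n}^{*}\bigl(D_i(\theta)-D_i(\theta_0)\bigr)$ uniformly over a $(nb_n)^{-1/2}$-neighborhood, using that the discontinuity set has vanishing probability and that $f_{g|m}$ is bounded by Assumption \ref{ass:id}.3), yet it \emph{actively contributes} to the $O(b_n^2)$ bias through the curvature of $(g-\gamma)_\pm$ interacting with the quadratic-in-$(m_i-m)$ part of the expansion of $\gamma_0(m_i)$. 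Getting the constant $\tfrac12(\beta_{X0}+\beta_{G0})$ and the derivative $\gamma_0^{(1)}(m)^2$ exactly right requires carefully tracking which term in $(g_i - \gamma_0(m_i))_\pm = (g_i-\gamma_0(m))_\pm \mp (\gamma_0(m_i)-\gamma_0(m))1[\pm(g_i-\gamma_0(m))>0] + (\text{boundary correction})$ survives under $\mathbb{E}[\cdot|m_i=m]$ and integration against $u^2K(u)$; the linear-in-$(m_i-m)$ part contributes nothing to the bias after integration (it is odd) but its \emph{square} does, which is precisely the origin of the unusual $\gamma_0^{(1)}(m)^2$ term distinguishing this result from standard regression-kink asymptotics. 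Everything else — the CLT, the change of variables, the moment bookkeeping — is routine given Assumptions \ref{ass:id} and \ref{ass:a}.
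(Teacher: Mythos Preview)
Your proposal is correct and follows the same overall route as the paper: expand the first-order condition of the weighted least-squares problem, obtain the variance from a triangular-array CLT on $\sum_i k_{i,n} D_i u_i$, and recover the two bias contributions from $\beta_0'\bigl(z_i(\gamma_0(m_i))-z_i(\gamma_0(m))\bigr)$ by Taylor-expanding $\gamma_0(m_i)$ around $m$ and carefully tracking the kink. Your identification of the two bias sources and the origin of the $\gamma_0^{(1)}(m)^2$ term is exactly right.

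The one noteworthy technical difference lies in the linearization step. You propose a mean-value expansion $\Psi_n(\hat\theta)=\Psi_n(\theta_0)+H_n(\bar\theta)(\hat\theta-\theta_0)$ and plan to patch the non-differentiability of $\Psi_n$ in $\gamma$ via stochastic equicontinuity. The paper avoids this entirely by a more elementary direct decomposition: it writes the FOC at $\hat\theta$ as $A_{1n}+A_{2n}+A_{3n}+A_{4n}$, where $A_{2n}$ is exactly linear in $\hat\beta-\beta_0$, $A_{3n}=\sum_i k_{i,n}D_i(\hat\theta)\beta_0'\bigl(z_i(\gamma_0(m))-z_i(\hat\gamma)\bigr)$, and $A_{4n}$ is your bias term. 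Because $z_i(\gamma)$ is \emph{piecewise linear} in $\gamma$, $A_{3n}$ is computed exactly by case analysis on the position of $g_i$ relative to $\gamma_0(m)$ and $\hat\gamma$; the contribution from the in-between region $[\gamma_0(m)\wedge\hat\gamma,\,\gamma_0(m)\vee\hat\gamma]$ is $o_p$ by consistency and boundedness of $f_{g|m}$, and the outer regions give precisely $(\hat\gamma-\gamma_0(m))$ times the last column of $Q(m)$. Combining, $A_{2n}+A_{3n}=-Q(m)\sqrt{nb_n}(\hat\theta-\theta_0)+o_p(1)$ with no Hessian and no empirical-process machinery. This buys simplicity: the piecewise-linear structure of the model does the work that your stochastic-equicontinuity argument would otherwise have to do. The bias analysis ($A_{4n}$ in the paper) matches your sketch closely; the paper also splits into four explicit pieces, with two delivering the standard smoothing bias and two delivering the kink-specific $\gamma_0^{(1)}(m)^2$ term.
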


Theorem \ref{thm:normal} derives the pointwise asymptotic normality of $\hat{\theta}(m) = (\hat{\beta}',\hat{\gamma}(m))'$. 
The proof of this theorem is non-trivial because of the kink structure, and accordingly the bias term is substantially more complicated than in the standard case. 
In particular, the first term in $B(m)$ corresponds to the asymptotic bias in the standard kernel regression \citep[e.g.,][Chapter 2]{li2007nonparametric}.  
The second term characterizes the effect of the kink structure. 
Interestingly, this term becomes zero if the threshold function reduces to a constant, i.e., $\gamma_0^{(1)}(m)=0$.
To our knowledge, this result is new to the literature. 

Note that, unlike $\gamma_0=\gamma_0(m)$ for $m\in \mathcal{M}$, the slope parameter $\beta_0$ is a global constant that does not vary across $m$. 
Therefore, we can improve the estimation of $\beta_0$ to achieve the root-$n$ rate. 
To this end, we construct the leave-one-out estimator $\hat{\gamma}_{-i}(m_i)$ without using the $i$-th observation and perform the OLS estimator
\begin{equation}\label{eq:beta_star}
    \hat{\beta}^{*} = \left( \sum_{i=1}^n z_i(\hat{\gamma}_{-i}(m_i))z_i(\hat{\gamma}_{-i}(m_i))'1[m_i\in\mathrm{int}(\mathcal{M})] \right)^{-1} 
                      \left( \sum_{i=1}^n z_i(\hat{\gamma}_{-i}(m_i))y_i 1[m_i\in\mathrm{int}(\mathcal{M})]  \right),
\end{equation}
where $1[\cdot]$ denotes the indicator function. 
We note that the threshold estimator $\hat{\gamma}(m)$ might perform poorly at boundary values of $m$. 
Therefore, we only use the interior values $m\in\mathrm{int}(\mathcal{M})$ by employing the indicator function. 
In practice, we set $\mathrm{int}(\mathcal{M})$ as the middle 98\% percent of $m_i$ as in Section \ref{sec:empirical}.

The following theorem derives the asymptotic property of $\hat{\beta}^{*}$. 
Denote $\iota = (0,\dots,0,1)'$ and $d_i = -\beta_{G0}1[g_{i}<\gamma_0(m_i) ] -\beta_{X0} 1[g_{i}>\gamma_0(m_i)]$
\begin{theorem}
\label{thm:rootn} Suppose Assumptions \ref{ass:id} and \ref{ass:a} hold and $b_n = O(n^{-\lambda})$ for some $\lambda\in(1/4,1/3)$, then
\begin{equation*}
\sqrt{n}\left( \hat{\beta}^{\ast }-\beta _{0}\right) =\frac{1}{\sqrt{n}}\sum_{j=1}^{n}\psi _{j}+o_{p}\left( 1\right) ,
\end{equation*}
where
\begin{equation*}
\psi_{j}=\left( z_{j}\left( \gamma _{0}\left( m_{j}\right) \right) -\delta\left( m_{j}\right) \right) u_{j}1\left[ m_{j}\in \mathrm{int}\left( \mathcal{M}\right) \right] 
\end{equation*}
and
\begin{equation*}
\delta \left( m_{j}\right) = \left. \mathbb{E}\left[ z_{i}\left( \gamma _{0}\left(m_{i}\right) \right) d_{i}\iota ^{\prime }Q\left( m_{i}\right) ^{-1}|m_{i}=m\right]\right\vert_{m=m_{j}}D_{j}(\theta_0(m_j)).
\end{equation*}
\end{theorem}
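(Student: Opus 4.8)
The plan is to treat $\hat\beta^{\ast}$ as an OLS estimator with a generated regressor and expand it around the infeasible estimator that uses the true threshold $\gamma_0(m_i)$. Write $\hat z_i = z_i(\hat\gamma_{-i}(m_i))$, $z_i = z_i(\gamma_0(m_i))$, and $1_i = 1[m_i\in\mathrm{int}(\mathcal M)]$. Starting from $\sqrt n(\hat\beta^\ast-\beta_0) = \left(\frac1n\sum_i \hat z_i\hat z_i' 1_i\right)^{-1}\frac{1}{\sqrt n}\sum_i \hat z_i(\pi_i-\hat z_i'\beta_0)1_i$ and using $\pi_i = z_i'\beta_0+u_i$, the numerator becomes $\frac{1}{\sqrt n}\sum_i \hat z_i u_i 1_i + \frac{1}{\sqrt n}\sum_i \hat z_i(z_i-\hat z_i)'\beta_0\, 1_i$. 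First I would establish, using Theorem \ref{thm:normal} and the smoothness of $\gamma_0$, a uniform (over $m\in\mathrm{int}(\mathcal M)$) expansion of the leave-one-out estimator of the form $\hat\gamma_{-i}(m_i)-\gamma_0(m_i) = -\iota'Q(m_i)^{-1}\frac{1}{nb_n f_m(m_i)}\sum_{k\ne i}K\!\left(\frac{m_k-m_i}{b_n}\right)D_k u_k + (\text{bias } O(b_n^2)) + o_p(n^{-1/2})$ per observation, together with the sup-norm rate $\sup_m|\hat\gamma_{-i}(m)-\gamma_0(m)| = O_p((nb_n)^{-1/2}\sqrt{\log n}+b_n^2)$, which under $\lambda\in(1/4,1/3)$ is $o_p(n^{-1/4})$.

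Next I would linearize $z_i(\gamma)$ in $\gamma$. Because $(g_i-\gamma)_{\mp}$ is piecewise linear with derivative $-1[g_i<\gamma]$ and $-1[g_i>\gamma]$ in its first two coordinates and $0$ in the $x_i$ block, we have $z_i(\hat\gamma_{-i}(m_i)) - z_i(\gamma_0(m_i)) = (\gamma_0(m_i)-\hat\gamma_{-i}(m_i))\binom{1[g_i<\gamma_0(m_i)]}{1[g_i>\gamma_0(m_i)]}{0} + (\text{remainder})$, where the remainder is supported on the event $g_i$ between $\gamma_0(m_i)$ and $\hat\gamma_{-i}(m_i)$, hence $O_p((nb_n)^{-1/2}\sqrt{\log n})$ in probability for each $i$ by Assumption \ref{ass:id}.3 (bounded conditional density), and negligible after summing. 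Substituting this into both the $\sum_i(\hat z_i-z_i)u_i 1_i$ term and the $\sum_i\hat z_i(z_i-\hat z_i)'\beta_0 1_i$ term, and recalling $d_i = -\beta_{G0}1[g_i<\gamma_0(m_i)]-\beta_{X0}1[g_i>\gamma_0(m_i)]$ so that $(z_i-\hat z_i)'\beta_0 \approx (\gamma_0(m_i)-\hat\gamma_{-i}(m_i))\cdot(-d_i)\cdot(-1)$ — I would track the sign bookkeeping carefully here — the two correction terms combine. The leading correction is $-\frac{1}{\sqrt n}\sum_i z_i d_i\,(\hat\gamma_{-i}(m_i)-\gamma_0(m_i))\,1_i$ after the algebra; plugging in the leave-one-out expansion and swapping the order of summation gives a V-statistic whose Hájek projection is $\frac{1}{\sqrt n}\sum_j \left(\mathbb E[z_i d_i\iota'Q(m_i)^{-1}|m_i=m_j]\,D_j\right)u_j 1_j = \frac{1}{\sqrt n}\sum_j \delta(m_j)u_j 1_j$, with the degenerate part $o_p(1)$ by the usual U-statistic variance bound. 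The bias contributions enter multiplied by $z_i d_i$ and aggregated; under $b_n^2\cdot\sqrt n = o(1)$, i.e. $\lambda>1/4$, they vanish, which is exactly where the lower bound on $\lambda$ is used. The upper bound $\lambda<1/3$ keeps the remainder terms (the squared error $(\hat\gamma-\gamma_0)^2$ pieces and the density/matrix-inversion estimation error in the plug-in of $Q,f_m$) at $o_p(n^{-1/2})$. Finally the Gram-matrix factor $\frac1n\sum_i \hat z_i\hat z_i'1_i \to \mathbb E[z_i z_i' 1_i]$, which by the normal equations of the population problem equals $\mathbb E[D_iD_i'1_i]$ on the relevant block and cancels against the $Q(m)^{-1}$ inside $\delta$ in the usual way, leaving $\psi_j = (z_j(\gamma_0(m_j))-\delta(m_j))u_j 1_j$ as claimed; the $o_p(1)$ collects everything above.

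The main obstacle is the uniform control of the generated-regressor error, specifically handling the non-differentiability of $z_i(\gamma)$ at $g_i=\gamma$. The naive first-order Taylor argument is not valid pointwise in $g_i$, so I would instead bound the $L^2$ norm of the remainder directly: conditional on $m_i$, the event $\{g_i \text{ between } \gamma_0(m_i) \text{ and } \hat\gamma_{-i}(m_i)\}$ has probability $O(|\hat\gamma_{-i}(m_i)-\gamma_0(m_i)|)$ by Assumption \ref{ass:id}.3, and on it the increment $\hat z_i - z_i$ is itself $O(|\hat\gamma_{-i}(m_i)-\gamma_0(m_i)|)$, so the remainder is $O_p(|\hat\gamma_{-i}-\gamma_0|^{3/2})$ in mean-square — comfortably $o_p(n^{-1/2})$ under $\lambda<1/3$. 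A secondary technical point is that $\hat\gamma_{-i}(m_i)$ is a leave-one-out object evaluated at the random design point $m_i$, so the expansion must be uniform over $m$ (not just pointwise as in Theorem \ref{thm:normal}); this requires a stochastic-equicontinuity / empirical-process argument over the smooth class $\{K((\cdot-m)/b_n): m\in\mathrm{int}(\mathcal M)\}$, which is standard (VC-type, finite bracketing entropy) given Assumption \ref{ass:a}.2, but must be stated. Everything else — the Gram-matrix convergence, the U-statistic projection, the bias bookkeeping — is routine once these two pieces are in place.
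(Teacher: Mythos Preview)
Your strategy is essentially the paper's: the same $A_n^{-1}(B_{1n}+B_{2n})$ decomposition of the OLS-with-generated-regressor, the same uniform rate $\sup_{m\in\mathrm{int}(\mathcal M)}|\hat\gamma_{-i}(m)-\gamma_0(m)|=O_p((nb_n)^{-1/2}\ln n+b_n^2)$ drawn from the Theorem~\ref{thm:normal} expansion, the same piecewise-linear algebra for $z_i(\gamma_0(m_i))-z_i(\hat\gamma_{-i}(m_i))$ with the kink remainder bounded by $|\Delta\gamma_i|\,1[\gamma_i^{\min}<g_i\le\gamma_i^{\max}]$, and the same swap-and-project treatment of the resulting double sum (your ``H\'ajek projection'' is the paper's split of $T_{11n}$ into $\mathbb E[\phi_j\mid m_j]$ plus a centered piece whose second moment is $O((nb_n)^{-1})$). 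Your use of $\lambda>1/4$ for the bias and $\lambda<1/3$ for the $3/2$-power remainder also matches the paper's bookkeeping in $T_{12n}$ and $B_{22n}$.

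The one genuine error is the last move. The Gram matrix $A_n\to \mathbb E[z_iz_i'1_i]$ does \emph{not} cancel against the $Q(m)^{-1}$ inside $\delta(m_j)$: the former is a single unconditional matrix, while $Q(m_j)^{-1}=\bigl(\mathbb E[D_iD_i'\mid m_i=m_j]\bigr)^{-1}$ varies with $m_j$ and sits inside the conditional expectation defining $\delta(m_j)$, so it cannot be factored out, let alone canceled. The paper's proof does not perform any such cancellation either; it establishes \eqref{eq:bstar_A}--\eqref{eq:bstar_B2} and then writes that the conclusion follows by ``combining these terms,'' leaving the $A_n^{-1}$ premultiplication implicit. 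Carry $A_n^{-1}$ through rather than inventing a cancellation; everything else in your outline is on target.
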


Theorem \ref{thm:rootn} derives the influence function for the estimator in \eqref{eq:beta_star}. 
This result is novel and warrants careful discussion.

First, the influence function reflects the two-step nature of the estimator: the first step involves the nonparametric estimation of $\hat{\gamma}_{-i}(m_i)$, while the second step is a linear regression.
Two-step semiparametric estimation under smoothness assumptions has been extensively studied in the econometrics literature \citep[e.g.,][]{newey1994asymptotic,newey1994large}. However, our setting presents additional challenges due to the presence of a kink structure, where the threshold function $\gamma_0(\cdot)$ enters non-differentiably.

Second, the term $\delta(m_j)$ captures how the first-step nonparametric estimation impacts the second-step regression. 
In classical semiparametric estimation frameworks \citep[e.g.,][]{ichimura1993semiparametric}, such terms typically reflect orthogonality conditions that ensure robustness to a first-step estimation error.
\citet{ichimura2022influence} recently develop a direct method to compute these effects. 
In contrast, our estimator does not satisfy the orthogonality condition: $\hat{\gamma}(m)$ and $\hat{\beta}$ are not orthogonal at any $m$ as shown in Theorem \ref{thm:normal}. 
As a result, $\delta(m_j)$ involves the term $\iota'Q(m_j)^{-1}D_j(\theta_0(m_j))$, which arises from the leading term in the last component of $\hat{\theta}(m)-\theta_0(m)$. 

Third, the estimator $\hat{\gamma}(\cdot)$ inherits the standard bandwidth selection requirements of kernel regression. 
The optimal choice takes the rate $b_n = O(n^{-1/5})$. 
However, achieving root-$n$ consistency for $\hat\beta^{*}$ necessitates undersmoothing to control the bias from the first-step estimation. 
As a consequence, this step essentially reinforces Theorem \ref{thm:consistency} by establishing the uniform convergence of $\hat{\gamma}(m)$ over $m$.

Specifically, we require $b_n = O(n^{-\lambda})$ for some $\lambda\in(1/4,1/3)$ in line with similar conditions in the semiparametric estimation literature \citep[e.g.,][]{klein1993efficient}. 
In our simulation studies, we adopt $b_n = n^{-3.5}$, which satisfies this requirement.
In the empirical application, we report results using both a rule-of-thumb bandwidth ($b_n = n^{-1/5}$) and the undersmoothing choice ($b_n = n^{-3.5}$) to assess robustness. 

Finally, the asymptotic variance of $\hat\beta^{*}$ is given by $\mathbb{E}[\psi_j\psi_j']$, which in principle can be estimated via a plug-in approach.
However, the complexity of the $\delta(m_j)$ term renders this approach difficult to implement. 
We therefore adopt the wild bootstrap procedure \citep[e.g.,][]{mammen1993bootstrap,davidson2008wild}.
Specifically, given the estimates $\hat{\beta}^{*}$ and $\hat{\gamma}_{-i}(m_i)$, we compute the residuals
\[
\hat{u}_i = \pi_i - z_i(\hat{\gamma}_{-i}(m_i))'\hat{\beta}^{*}.
\]
We then draw i.i.d.~Rademacher multipliers $\varepsilon_i \in \{-1,1\}$, with equal probability and construct the bootstrap outcome:
\[
\pi_i^* = z_i(\hat{\gamma}_{-i}(m_i))'\hat{\beta}^{*} + \hat{u}_i \varepsilon_i.
\]
Keeping the covariates $z_i(\hat{\gamma}_{-i}(m_i))$ fixed, we regress $\pi_i^*$ on $z_i(\hat{\gamma}_{-i}(m_i))$ to re-estimate $\hat{\beta}^{*}$. 
Repeating this over many bootstrap samples allows us to compute standard errors and construct confidence intervals based on the empirical distribution of the bootstrapped estimates.
We note that, although standard bootstrap methods are known to fail for inference on threshold parameters (\citealp{yu2014bootstrap}), this issue does not arise here because we use the bootstrap only for inference on the regression coefficient $\beta$, which is root-$n$ estimable and asymptotically linear.

\section{Monte Carlo Simulation}\label{sec:simu}
In this section, we evaluate the finite sample performance of the proposed estimator. 
Subsection \ref{sec:simu:exo} studies the case with exogenous $g_i$, and Subsection \ref{sec:simu:end} addresses the case of an endogenous $g_i$. 

\subsection{Exogenous $g_i$}\label{sec:simu:exo}

We start with the model without endogeneity and generate random draws from the following data-generating process
$$
\pi _{i}=\beta _{G0}\left( g_{i}-\gamma_{0}\left( m_{i}\right) \right)_{-}+\beta_{X0}\left( g_{i}-\gamma_{0}\left( m_{i}\right) \right)_{+}+\beta_{C10}+x_{i}\beta_{C20}+0.5 u_{i},
$$
where $(g_i,m_i,x_i,u_i)\sim \mathcal{N}(0,I_4)$ and $I_k$ denotes the $k\times k$ identity matrix.  
The threshold function $\gamma_0(m) = (m+1)^3/8$, which is increasing for $m>-1$. 
As shown in Figures \ref{fig:simu_threshold} and \ref{fig:simu_threshold_iv} below, this function mimics our empirical findings in Figures \ref{fig:emp_threshold1}-\ref{fig:emp_threshold3}.
The true coefficients are $\beta_{G0} \in \{1,2,3,4\}$ and $\beta_{X0}=\beta_{C10}=\beta_{C20}=0$. 
Since we choose $\beta_{X0}=0$ in the simulations, $\beta_{G0}$ parametrizes compactly the difference in the regression slope to the left versus the right of the kink.
The sample size is $n \in \{100,200,500\}$, and the results are based on 1,000 simulation draws.
We use the standard Gaussian kernel and the bandwidth of $b_n = n^{-1/3.5}$, which satisfies the condition in Theorem \ref{thm:rootn}. 

\begin{table}[ht]
\centering
\begin{tabular}{lccccc}
\hline
&  $n/\ \beta_{G0} $    & 1    & 2    & 3    & 4    \\ \hline
\textbf{SNR}    &      & 4.90 & 19.58 & 44.07 & 78.34 \\ \hline
\textbf{Bias}   & 100  & 0.00 & 0.00 & 0.00 & 0.00 \\
                & 200  & 0.00 & 0.00 & 0.00 & 0.00 \\
                & 500  & 0.00 & 0.00 & 0.00 & 0.00 \\ \hline
\textbf{RMSE}   & 100  & 0.10 & 0.11 & 0.11 & 0.10 \\
                & 200  & 0.07 & 0.07 & 0.07 & 0.07 \\
                & 500  & 0.04 & 0.05 & 0.04 & 0.04 \\ \hline
\end{tabular}
\caption{Signal-to-noise ratio (SNR), bias and RMSE of the estimates of $\beta_{G0}$. See the main text for details of the data-generating process.}
\label{tab:simu_theta1}
\end{table}

Table \ref{tab:simu_theta1} presents the bias and the root mean-squared error (RMSE) of the estimate $\hat{\beta}_{G0}^{*}$ as in \eqref{eq:beta_star}. 
We see that the bias is very small for any sample size, and the RMSE is decreasing in the sample size as expected. 
Also, the size of the kink effect $\beta_{G0}$ is relatively immaterial for these results. 
To quantify variance of the signal relative to the noise in a way that is comparable across values of $\beta_{G0}$, we also report the signal-to-noise ratio (SNR)
\[
\mathrm{SNR}\;\equiv\;\frac{Var\!\left(\beta_{G0}(g_i-\gamma_0(m_i))_-\right)}{Var\!\left(0.5\,u_i\right)}\;=\; \frac{\beta_{G0}^2\,Var\!\left((g_i-\gamma_0(m_i))_-\right)}{0.25}.
\]
We compute $Var\!\left((g_i-\gamma_0(m_i))_-\right)$ under the stated DGP $(g_i,m_i)\sim \mathcal{N}(0,I_2)$ and $\gamma_0(m)=(m+1)^3/8$. 

\begin{table}[ht]
\centering
\resizebox{\textwidth}{!}{
\begin{tabular}{lcccccccccccccc}
\hline
         &      & \multicolumn{4}{c}{$m=0$} & \multicolumn{4}{c}{$m=0.25$} & \multicolumn{4}{c}{$m=0.5$} \\
         & $n/\ \beta_{G0}$    & 1    & 2    & 3    & 4    & 1    & 2    & 3    & 4    & 1    & 2    & 3    & 4    \\ \hline
\textbf{SNR}    &      & 1.57 & 6.27 & 14.11 & 25.08 & 1.77 & 7.08 & 15.93 & 28.32  & 2.08 & 8.31 & 18.70 & 33.25 \\ \hline      
\textbf{Bias}   & 100  & -0.01 & 0.01 & 0.01 & 0.02 & -0.02 & -0.01 &  0.00 &  0.00 & -0.13 & -0.05 & -0.04  & -0.05 \\
                & 200  & 0.01 &  0.02 & 0.01 & 0.01 & -0.01 &  0.01 &  0.00 &  0.01 & -0.03 & -0.02 & -0.02  & -0.02 \\
                & 500  & 0.01 &  0.01 & 0.00 & 0.01 &  0.01 &  0.00 &  0.00 &  0.00 & -0.01 & -0.01 & -0.01  & -0.01 \\ \hline
\textbf{RMSE}   & 100  & 0.53  & 0.25  & 0.14  & 0.11  & 0.59  & 0.27  & 0.18  & 0.13  & 0.72  & 0.32  & 0.24  & 0.18 \\
                & 200  & 0.39  & 0.14  & 0.09  & 0.07  & 0.44  & 0.17  & 0.11  & 0.09  & 0.50  & 0.22  & 0.14  & 0.12 \\
                & 500  & 0.22  & 0.09  & 0.06  & 0.05  & 0.25  & 0.11  & 0.07  & 0.05  & 0.30  & 0.12  & 0.08  & 0.06 \\ \hline
\end{tabular}
}
\caption{Signal-to-noise ratio (SNR), bias and RMSE of the estimates of $\gamma_0(m)$ at different $m$. See the main text for details of the data-generating process.}
\label{tab:simu_gamma}
\end{table}

Table \ref{tab:simu_gamma} summarizes the bias and RMSE of $\hat{\gamma}(m)$ at $m\in\{0,0.5,1\}$. 
The SNR is also calculated at each $m$ respectively.
Both bias and RMSE of $\hat{\gamma}(m)$ are decreasing in $n$, as expected. 
Additionally, the bias and RMSE both decrease in $\beta_{G0}$, which is because a larger threshold effect $\beta_{G0}$ makes the threshold $\gamma_{0}$ easier identifiable. 
This finding is similar to the one for the customary threshold-regression model \citep[e.g.,][]{hansen2000sample}.

Figure \ref{fig:simu_threshold} depicts the average of 1,000 estimates $\hat{\gamma}(\cdot)$ against the true function $\gamma_0(\cdot)$. 
Figure \ref{fig:simu_density} depicts the histogram based on the 1,000 estimates $\hat{\beta}_{G0}$, which suggests a good normal approximation. 
Both figures are based on a sample size of $n=1,000$ and use the same data-generating process as in the previous tables with $\beta_{G0}=4$.

\begin{figure}[H]
\centering
\includegraphics[width=0.7\linewidth]{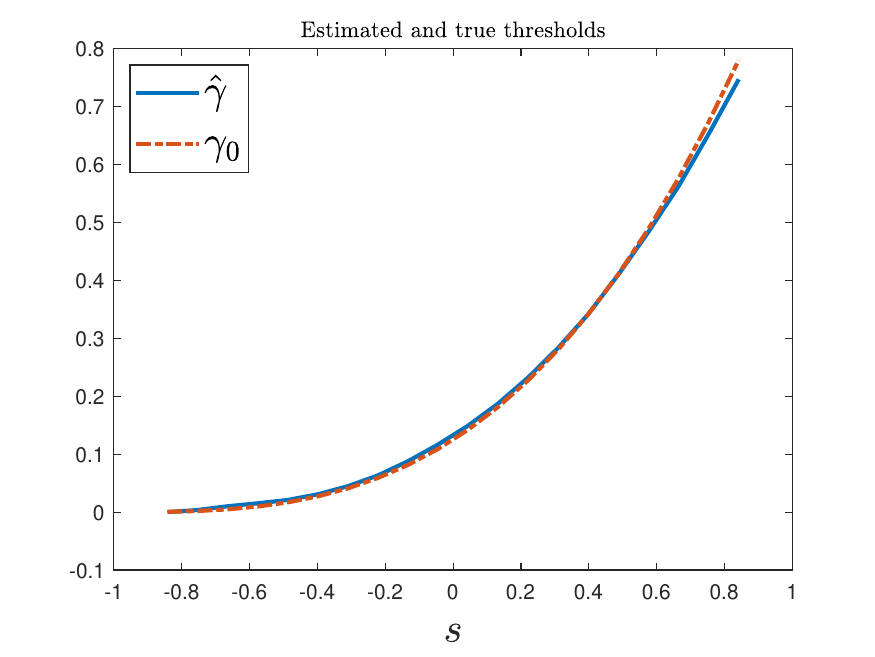}
\caption{This figure depicts the average of $\hat{\gamma}(\cdot)$ over 1,000 simulation draws against the true value $\gamma_0(\cdot)$. } 
\label{fig:simu_threshold}
\end{figure}
\begin{figure}[H]
\centering
\includegraphics[width=0.7\linewidth]{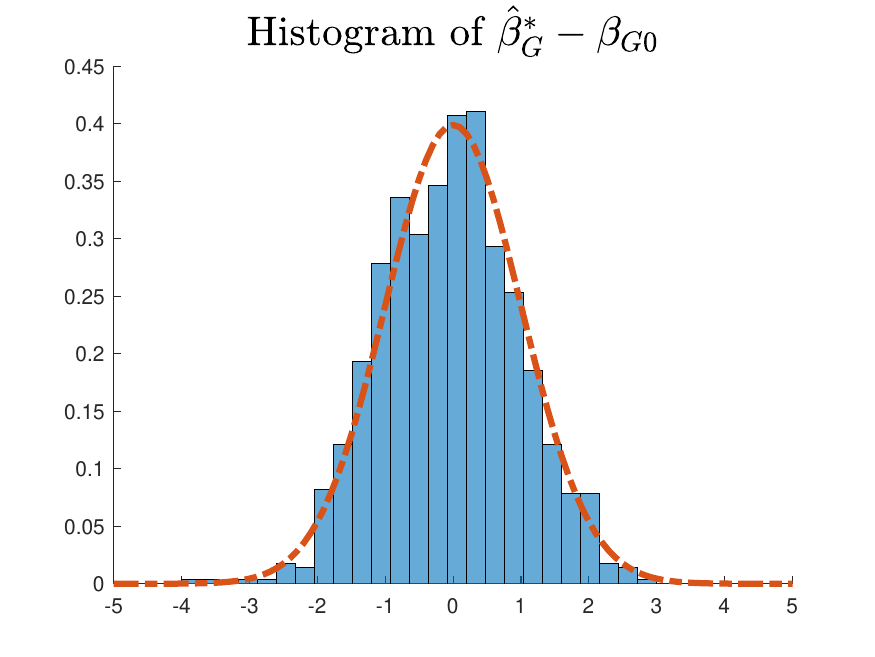}
\caption{This figure depicts the histogram plot using 1,000 simulation draws of $\hat{\beta}_{G}^{*}-\beta_{G0}$. The red dash line is the standard normal density.}
\label{fig:simu_density}
\end{figure}

\subsection{Endogenous $g_i$}\label{sec:simu:end}
Our second study examines the estimator when $g_i$ is endogenous. We generate random draws from the following data-generating-process:
$$
\pi _{i}=\beta _{G0}\left( g_{i}-\gamma_{0}\left( m_{i}\right) \right)_{-}+\beta_{X0}\left( g_{i}-\gamma_{0}\left( m_{i}\right) \right)_{+}+\beta_{C10}+x_{i}\beta_{C20}+ u_{i},
$$
where $u_i = 0.5\epsilon_i + 0.5v_i$, $g_i = v_i + w_i$, and $(m_i,x_i,\epsilon_i,v_i,w_i)\sim \mathcal{N}(0,I_5)$. 
This model captures the dependence structure that $g_i$ is endogenous and $w_i$ is the valid instrument that is correlated with $g_i$ and uncorrelated with the error term $u_i$. For this setting, we employ the estimation procedure outlined in Section \ref{sec:endo}. 
All other parameters are the same as in the simulation setup in the previous subsection. 

Figures \ref{fig:simu_threshold_iv} and \ref{fig:simu_density_iv} respectively plot the average estimated threshold function $\hat{\gamma}(\cdot)$ across simulations and the histogram of the estimated coefficient $\hat{\beta}_G^{*}$ centered at its true value $\beta_{G0}$. 
Tables \ref{tab:simu_gamma_iv} and \ref{tab:simu_delta_iv} respectively summarize the bias and the RMSE of $\hat{\beta}_G^{*}$ and $\hat{\gamma}(m)$. We report the signal-to-noise ratio
\[
\mathrm{SNR}\;\equiv\;\frac{Var\!\bigl(\beta_{G0}(g_i-\gamma_0(m_i))_{-}\bigr)}{Var(u_i)}=\frac{\beta_{G0}^2\,Var\!\bigl((g_i-\gamma_0(m_i))_{-}\bigr)}{0.5},
\]
where the denominator uses $Var(u_i)=0.5$ under the stated DGP.
The findings are similar to the previous ones without endogeneity. Hence, the proposed estimators permit identification of the threshold (or cutoff) value $\gamma_0$ of the running variable of interest, $g_i$, as a nonparametric function of $m_i$, irrespective of whether $g_i$ is exogenous or not. Clearly, the latter requires the availability of relevant and adequate instruments, but this is standard.

\begin{figure}[H]
\centering
\includegraphics[width=0.7\linewidth]{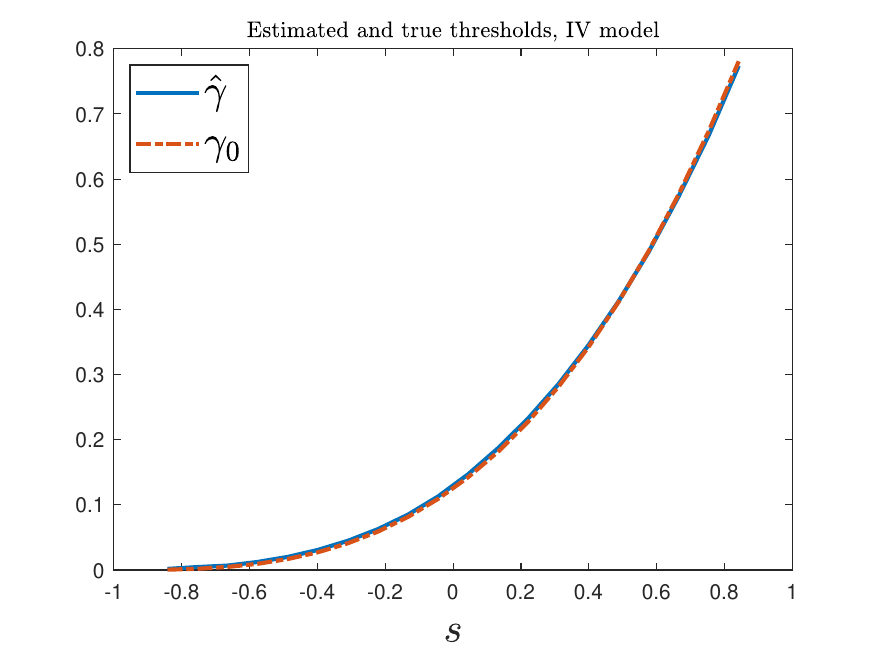}
\caption{This figure depicts the average of $\hat{\gamma}(\cdot)$ over 1,000 simulation draws against the true value $\gamma_0(\cdot)$ with endogenous $g_i$. See the main text for details of the data-generating process. } 
\label{fig:simu_threshold_iv}
\end{figure}
\begin{figure}[H]
\centering
\includegraphics[width=0.7\linewidth]{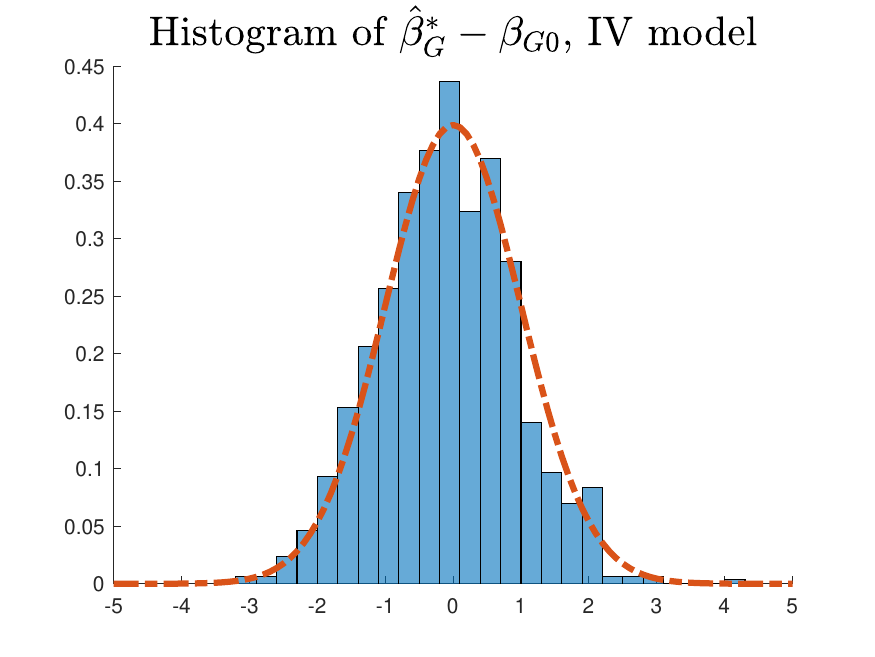}
\caption{This figure depicts the histogram plot using 1,000 simulation draws of $\hat{\beta}_{G}^{*}-\beta_{G0}$ with endogenous $g_i$. The red dash line is the standard normal density. See the main text for details of the data-generating process.}
\label{fig:simu_density_iv}
\end{figure}

\begin{table}[ht]
\centering
\begin{tabular}{lccccc}
\hline
         &  $n/\ \beta_{G0} $    & 1    & 2    & 3    & 4    \\ \hline
SNR             &  & 3.17 & 12.68 & 28.53 & 50.72 \\ \hline
\textbf{Bias}   & 100  & 0.00 & 0.00 & 0.00 & 0.00 \\
                & 200  & 0.00 & 0.00 & 0.00 & 0.00 \\
                & 500  & 0.00 & 0.00 & 0.00 & 0.00 \\ \hline
\textbf{RMSE}   & 100  & 0.10 & 0.10 & 0.10 & 0.10 \\
                & 200  & 0.07 & 0.07 & 0.07 & 0.07 \\
                & 500  & 0.04 & 0.04 & 0.04 & 0.04 \\ \hline
\end{tabular}
\caption{Bias and RMSE of the estimates of $\beta_{G0}$ with endogenous $g_i$. See the main text for details of the data-generating process.}
\label{tab:simu_gamma_iv}
\end{table}

\begin{table}[ht]
\centering
\resizebox{\textwidth}{!}{
\begin{tabular}{lcccccccccccccc}
\hline
         &      & \multicolumn{4}{c}{$m=0$} & \multicolumn{4}{c}{$m=0.25$} & \multicolumn{4}{c}{$m=0.5$} \\
         & $n/\ \beta_{G0}$    & 1    & 2    & 3    & 4    & 1    & 2    & 3    & 4    & 1    & 2    & 3    & 4    \\ \hline
\textbf{SNR}    &      & 1.51 & 6.03 & 13.56 & 24.11 & 1.65 & 6.59 & 14.83 & 26.37  & 1.86 & 7.46 & 16.78 & 29.82 \\ \hline 
\textbf{Bias}   & 100  & -0.01 &  0.00 & 0.01 & 0.00 & -0.06 &  0.00 &  0.00 &  0.00 & -0.08 & -0.03 &  0.00  & -0.01 \\
                & 200  & -0.02 &  0.02 & 0.01 & 0.01 & -0.01 &  0.00 &  0.00 &  0.01 & -0.06 &  0.00 & -0.01  & 0.00 \\
                & 500  &  0.00 &  0.01 & 0.00 & 0.00 &  0.00 &  0.00 &  0.00 &  0.00 & -0.01 &  0.00 &  0.00  & 0.00 \\ \hline
\textbf{RMSE}   & 100  & 0.58  & 0.26  & 0.16  & 0.11  & 0.61  & 0.27  & 0.18  & 0.13  & 0.74  & 0.33 & 0.22  & 0.17 \\
                & 200  & 0.43  & 0.17  & 0.11  & 0.08  & 0.45  & 0.17  & 0.11  & 0.09  & 0.46  & 0.21 & 0.14  & 0.11 \\
                & 500  & 0.26  & 0.11  & 0.07  & 0.05  & 0.26  & 0.11  & 0.08  & 0.06  & 0.29  & 0.13 & 0.08  & 0.06 \\ \hline
\end{tabular}
}
\caption{Bias and RMSE of the estimates of $\gamma_0(m)$ at different $m$ with endogenous $g_i$. See the main text for details of the data-generating process.}
\label{tab:simu_delta_iv}
\end{table}

\section{Additional Empirical Result}\label{sec:additional}
Section \ref{sec:empirical} presents the results using the rule-of-thumb bandwidth $b_n = n^{-1/5}$. 
For robustness, we now repeat the analysis using the undersmoothing bandwidth $b_n = n^{-1/3.5}$, with all other setups the unchanged. 

We first present the kernel estimates of the threshold $\hat{\gamma}(\cdot)$ based on the undersmoothing kernel in Figures \ref{fig:emp_threshold1_undersmooth}-\ref{fig:emp_threshold3_undersmooth}, corresponding to Figures \ref{fig:emp_threshold1}-\ref{fig:emp_threshold3}, respectively. 
As expected, the undersmoothing bandwidth leads to less smooth estimates in all models. 
But the shape of the threshold contour remains qualitatively robust. 

\begin{figure}[H]
\centering
\includegraphics[width=0.45\linewidth]{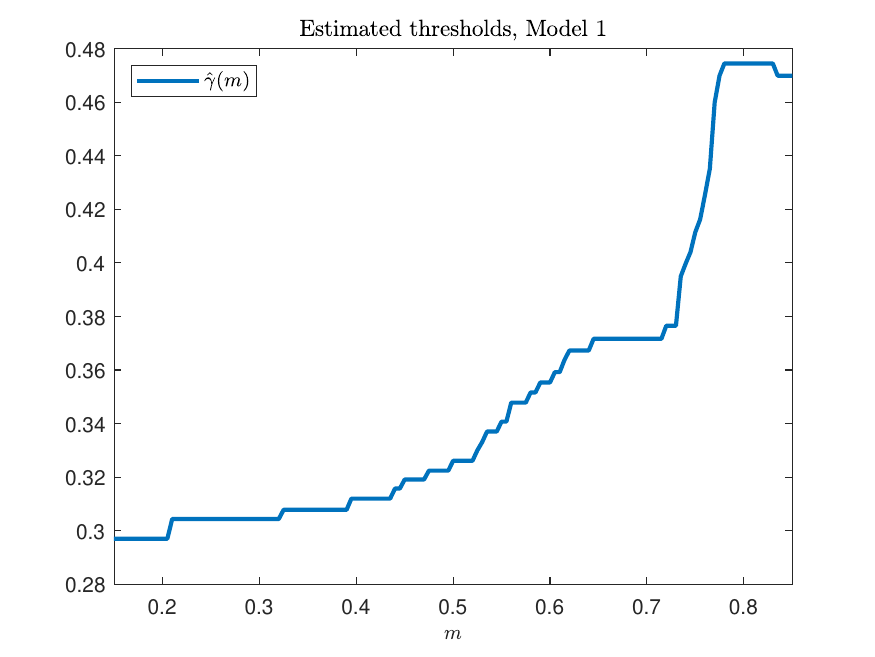}
\includegraphics[width=0.45\linewidth]{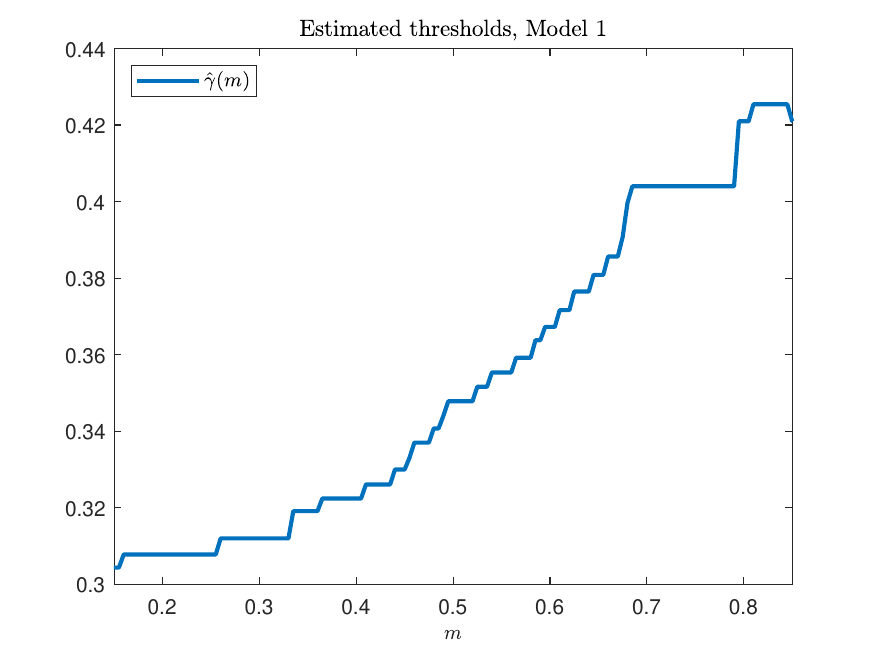}
\caption{This figure depicts the estimated threshold $\hat{\gamma}(m)$ and the undersmoothing bandwidth $b_n=n^{-1/3.5}$, with $m$ referring to the $m$-th quantile of $m_i$ in the range of $m\in [0.15,0.85]$. The left panel refers to models that employ quantiles of \texttt{Financial cost}$_i$  and the right panel to ones that employ \texttt{Fixed assets}$_i$ as an observable fixed-cost metric.} 
\label{fig:emp_threshold1_undersmooth}
\end{figure}

\begin{figure}[H]
\centering
\includegraphics[width=0.45\linewidth]{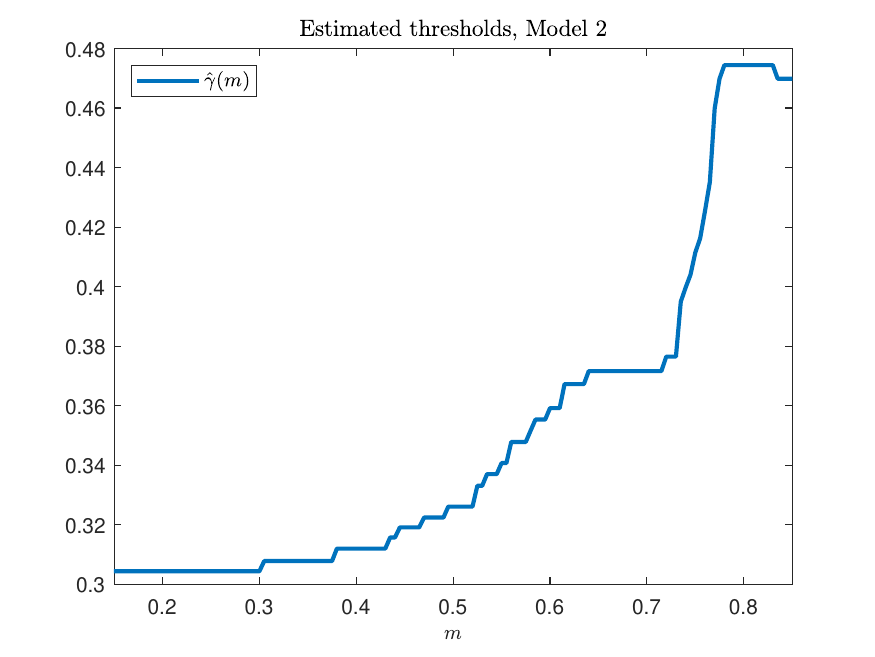}
\includegraphics[width=0.45\linewidth]{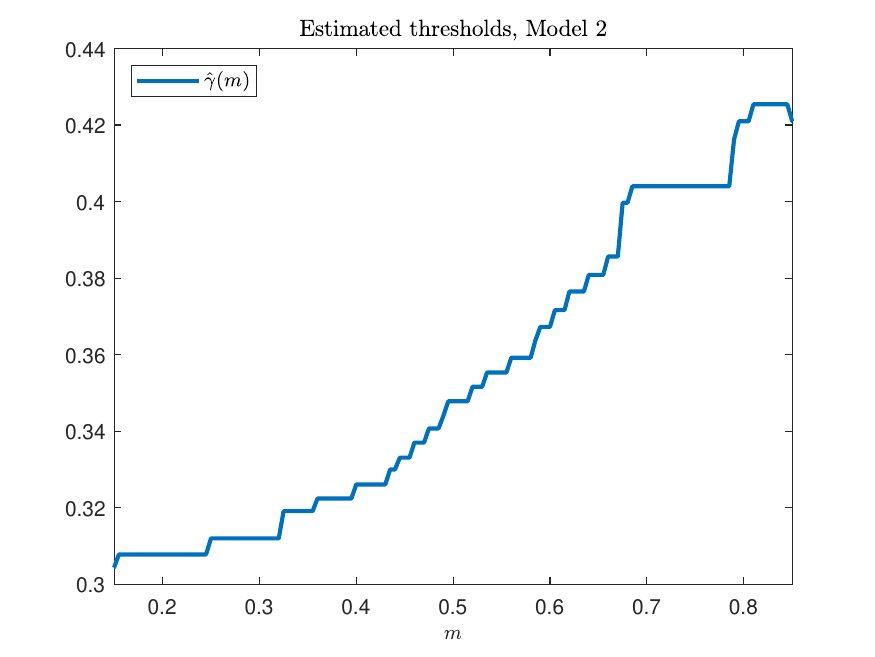}
\caption{This figure depicts the estimated threshold $\hat{\gamma}(m)$ using Model 2 and the undersmoothing bandwidth $b_n=n^{-1/3.5}$, with $m$ referring to the $m$-th quantile of $m_i$ in the range of $m\in [0.15,0.85]$. 
The left panel refers to models that employ quantiles of \texttt{Financial cost}$_i$  and the right panel to ones that employ \texttt{Fixed assets}$_i$ as an observable fixed-cost metric.} 
\label{fig:emp_threshold2_undersmooth}
\end{figure}

\begin{figure}[H]
\centering
\includegraphics[width=0.45\linewidth]{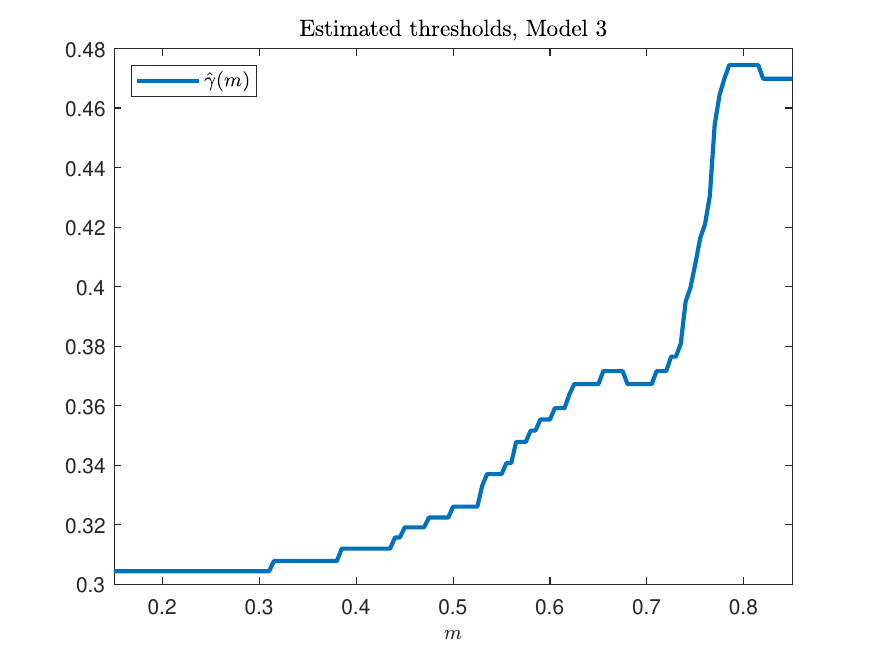}
\includegraphics[width=0.45\linewidth]{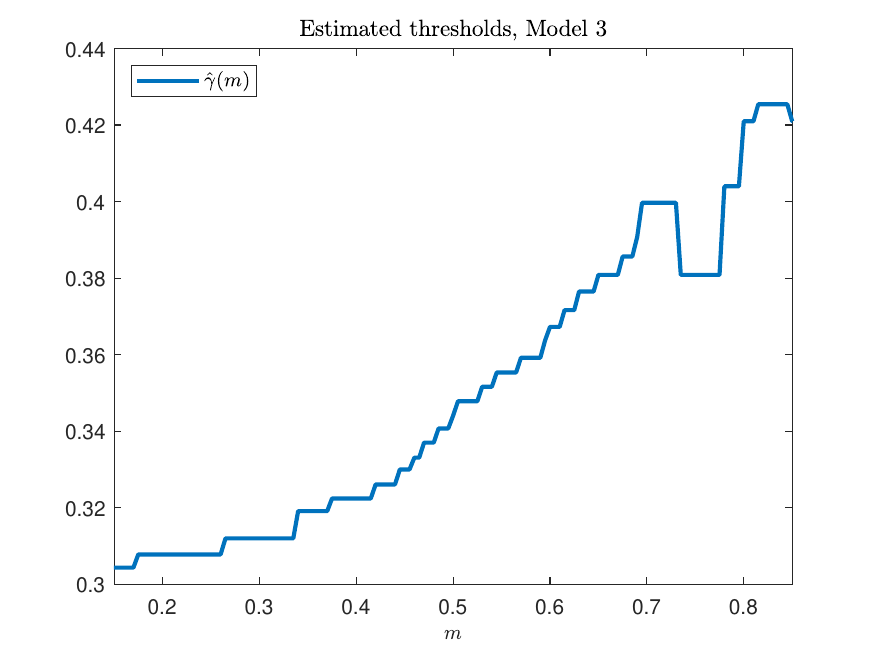}
\caption{This figure depicts the estimated threshold $\hat{\gamma}(m)$ using Model 3 and the undersmoothing bandwidth $b_n=n^{-1/3.5}$, with $m$ referring to the $m$-th quantile of $m_i$ in the range of $m\in [0.15,0.85]$. The left panel refers to models that employ quantiles of \texttt{Financial cost}$_i$  and the right panel to ones that employ \texttt{Fixed assets}$_i$ as an observable fixed-cost metric.} 
\label{fig:emp_threshold3_undersmooth}
\end{figure}

Now we present the regression results using the undersmoothing bandwidth in Table \ref{tab:regression_results_undersmooth}. 
The numbers are very similar to those in Table \ref{tab:regression_results}, suggesting that our findings are robust to the bandwidth choice.

\begin{table}[ht]
\centering
\begin{threeparttable}

\begin{adjustbox}{width=\textwidth}
\begin{tabular}{lccclccc}
\toprule
\multicolumn{4}{c}{ $m_i=\texttt{Financial cost}_i$}  &  \multicolumn{4}{c}{ $m_i=\texttt{Fixed assets}_i$}  \\
\midrule
\textbf{Variable} & \textbf{Model 1}  &  \textbf{Model 2} & \textbf{Model 3}    & \textbf{Variable}  & \textbf{Model 1} & \textbf{Model 2} & \textbf{Model 3}  \\
\midrule
$(g_i-\gamma_0(m_i))_{-}$     & $0.891^{***}$ & $0.239^{***}$  & $0.215^{***}$  & $(g_i-\gamma_0(m_i))_{-}$     & $1.964^{***}$  & $0.026 $       & $-0.603^{***}$\\
                              & (0.011)       & (0.036)        & (0.036)        &                               &   (0.010)      & (0.032)        & (0.038)  \\
$(g_i-\gamma_0(m_i))_{+}$     & $0.164^{***}$ & $-1.159^{***}$ & $-1.184^{***}$ & $(g_i-\gamma_0(m_i))_{+}$     & $0.011^{**}$   & $-1.871^{***}$ & $-2.468^{***}$\\
                              & (0.020)       & (0.074)        & (0.074)        &                               &   (0.005)      & (0.031)        & (0.035) \\
\texttt{Rel.for.cap.}$_i$     & $0.007^{*}$   & $0.006^{*}$    & $0.009^{*}$    &  \texttt{Rel.for.cap.}$_i$    & $0.011$        & $0.011$        & $0.017^{*} $ \\
                              & (0.004)       & (0.004)        & (0.004)        &                               & (0.008)        & (0.008)        & (0.010) \\      
\texttt{Avg.neighb.sales}$_i$& $0.006^{**}$   & $0.008^{***}$  & $0.691^{***}$  & \texttt{Avg.neighb.sales}$_i$  & $0.005^{**}$  & $0.010^{***}$  & $0.989^{***}$ \\
                              & (0.002)       & (0.003)        & (0.021) &                                      & (0.002)        & (0.004)        & (0.019) \\       
                              \hline
$R^2$ & 0.101 & 0.228 & 0.235 & $R^2$ & 0.271 & 0.291 & 0.302 \\
Observations & 187,720 & 187,720  & 187,720 & Observations & 187,720 & 187,720 & 187,720 \\
\bottomrule
\end{tabular}
\end{adjustbox}
\caption{Estimation Results using the undersmoothing bandwidth $b_n=n^{-1/3.5}$. Dependent variable are profits, $\pi_i$. Running variable are standardized domestic sales $g_i$. Standard errors in parentheses. *$p<0.1$, **$p<0.05$, ***$p<0.01$. Models 1-3 are specified as in \eqref{eq:emp:model1}-\eqref{eq:emp:model3}, respectively. }\label{tab:regression_results_undersmooth}
\end{threeparttable}
\end{table}

Finally, we repeat the heatmap analysis. 
Figure \ref{fig:emp_export_undersmooth} appears less smooth than Figure \ref{fig:emp_export} due to the undersmoothing bandwidth. 
But the shape of the threshold contour is again robust. 

\begin{figure}[H]
\centering
\includegraphics[width=0.45\linewidth]{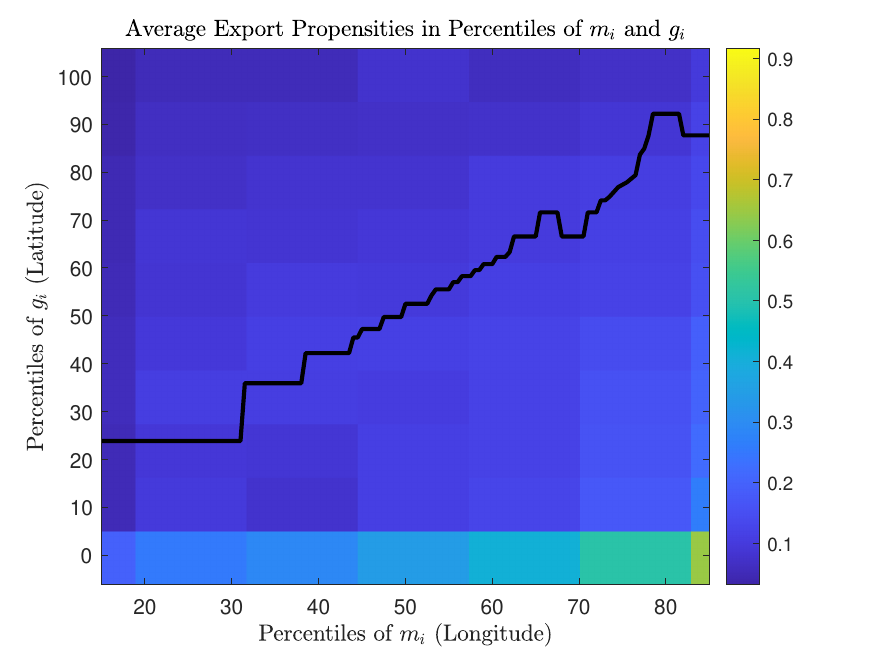}
\includegraphics[width=0.45\linewidth]{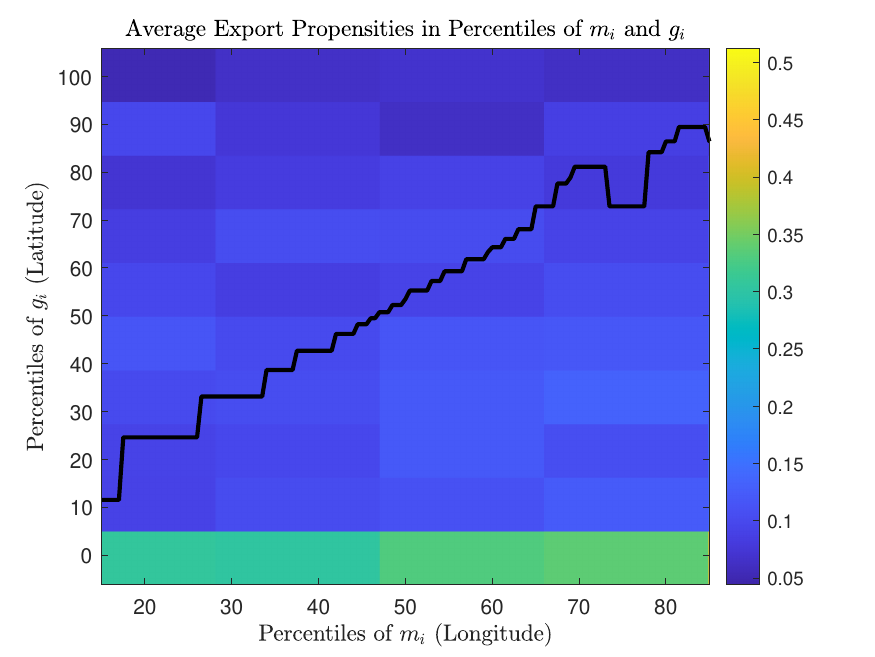}
\caption{This figure depicts the fraction of exporting firms in each tenth (10ppt.) percentile of $m_i$ and $g_i$ and the estimated threshold $\hat{\gamma}(m)$ using Model 3 and the undersmoothing bandwidth $b_n=n^{-1/3.5}$. The horizontal axis is the percentile of $m_i$, scaled from 0 to 100. The left panel refers to models that employ quantiles of \texttt{Financial cost}$_i$  and the right panel to ones that employ \texttt{Fixed assets}$_i$ as an observable fixed-cost metric $m_i$.} 
\label{fig:emp_export_undersmooth}
\end{figure}

\section{Proofs}\label{sec:proof}
\begin{proof}[Proof of Theorem \ref{thm:id}]
Since the proof is pointwise in $m$, we use the short-handed notation $\gamma =\gamma \left( m\right) $ for simplicity. 
For any constant $\gamma \in \Gamma $ with given $m\in \mathcal{M}$, we define a conditional $L_{2}$-loss as%
\begin{eqnarray*}
&&R(\beta ,\gamma |m) \\
&=&L\left( \beta ,\gamma \right) -L\left( \beta_{0},\gamma _{0}\left( m\right) \right)  \\
&=&\mathbb{E}\left[ \left. \left\{ \pi _{i}-\beta ^{\prime }z_{i}\left(\gamma  \right) \right\} ^{2}\right\vert m_{i}=m\right] -\mathbb{E}\left[ \left. \left\{ \pi _{i}-\beta _{0}^{\prime }z_{i}\left(\gamma _{0}\left( m\right) \right) \right\} ^{2}\right\vert m_{i}=m\right] \\
&=&\mathbb{E}\left[ \left\{ \beta _{0}^{\prime }z_{i}\left( \gamma_{0}\left( m\right) \right) -\beta ^{\prime }z_{i}\left( \gamma \right) \right\} ^{2}|m_{i}=m\right]  \\
&=&\mathbb{E}\left[ \left. \left\{ 
\begin{array}{l}
x_{i}^{\prime }\left( \beta_{C0}-\beta_C \right) +\left( \beta _{G0}-\beta_{G}\right) \left( g_{i}-\gamma _{0}\left( m_{i}\right) \right) _{-} \\ 
+\left( \beta _{X0}-\beta _{X}\right) \left( g_{i}-\gamma _{0}\left(m_{i}\right) \right) _{+} \\ 
+\beta _{G}\left( \left( g_{i}-\gamma \right) _{-}-\left(g_{i}-\gamma _{0}\left( m\right) \right) _{-}\right)  \\ 
+\beta _{X}\left( \left( g_{i}-\gamma \right) _{+}-\left(g_{i}-\gamma _{0}\left( m\right) \right) _{+}\right)
\end{array}%
\right\} ^{2}\right\vert m_{i}=m\right] \text{,}
\end{eqnarray*}
which is continuous in $(\beta _{G},\beta _{X},\beta_C^{\prime },\gamma)^{\prime }$. 
By construction, $R(\beta_{G},\beta _{X},\beta_C^{\prime},\gamma |m)\geq 0$ for any $(\beta _{G},\beta _{X},\beta_C^{\prime },\gamma )^{\prime }\in \mathbb{R}^{2+\dim (x)}\times \Gamma $ and $R(\beta_{G0},\beta _{X0},\beta_{C0}^{\prime },\gamma _{0}\left( m\right)|m)=0$. 
Hence, it suffices to show that $R(\beta _{G},\beta _{X},\beta_C^{\prime },\gamma |m)>0$ for any vector $(\beta _{G},\beta_{X},\beta_C^{\prime },\gamma )\neq (\beta _{G0},\beta _{X0},\beta_{C0}^{\prime},\gamma _{0}\left( m\right) )^{\prime }$ given $m\in \mathcal{M}$. 
To this end, we split the event $(\beta_{G},\beta_{X},\beta_C^{\prime },\gamma)\neq (\beta_{G0},\beta_{X0},\beta_{C0}^{\prime },\gamma _{0}\left( m\right) )^{\prime }$ into two disjoint cases: 
(i) $(\beta_{G},\beta_{X},\beta_C^{\prime })^{\prime }\neq (\beta _{G0},\beta_{X0},\beta_{C0}^{\prime })^{\prime }$ regardless of $\gamma $ ; 
or (ii) $\gamma \neq
\gamma _{0}(m)$ but $(\beta_{G},\beta_{X},\beta_C^{\prime })^{\prime}=(\beta_{G0},\beta_{X0},\beta_{C0}^{\prime })^{\prime }$. These two cases are comprehensive.

For case (i), let $I_{i\gamma }(m)=\{g_{i}\leq \min \{\gamma ,\gamma _{0}\left( m\right) \}\}\cup \{g_{i}\geq \max \{\gamma ,\gamma _{0}\left( m\right) \}\}$ and note that 
\begin{eqnarray*}
&&R(\beta,\gamma |s) \\
&\geq &\mathbb{E}\left[ \left. 1[I_{i\gamma }(m)]\left\{ 
\begin{array}{l}
x_{i}^{\prime }\left( \beta_{C0}-\beta_C \right) +\left( \beta_{G0}-\beta_{G}\right) \left( g_{i}-\gamma _{0}\left( m_{i}\right) \right) _{-} \\ 
+\left( \beta _{X0}-\beta_{X}\right) \left( g_{i}-\gamma _{0}\left( m_{i}\right) \right) _{+} \\ 
+\beta_{G}\left( \left( g_{i}-\gamma  \right) _{-}-\left( g_{i}-\gamma _{0}\left( m\right) \right) _{-}\right)  \\ 
+\beta_{X}\left( \left( g_{i}-\gamma  \right) _{+}-\left( g_{i}-\gamma _{0}\left( m\right) \right) _{+}\right) 
\end{array}%
\right\} ^{2}\right\vert m_{i}=m\right]  \\
&=&\mathbb{E}\left[ 1[I_{i\gamma }(m)]\left. \left\{ 
\begin{array}{c}
x_{i}^{\prime }\left( \beta_{C0}-\beta_C \right) +\left( \beta_{G0}-\beta_{G}\right) \left( g_{i}-\gamma_{0}\left( m_{i}\right) \right) _{-} \\ 
+\left(\beta_{X0}-\beta_{X}\right) \left( g_{i}-\gamma _{0}\left(m_{i}\right) \right)_{+}
\end{array}%
\right\} ^{2}\right\vert m_{i}=m\right]  \\
&>& 0
\end{eqnarray*}
when $(\beta_{G},\beta_{X},\beta_C^{\prime })^{\prime }\neq (\beta_{G0},\beta_{X0},\beta_{C0}^{\prime })^{\prime }$ from the Assumption \ref{ass:id}.2.

For case (ii), we consider $\gamma  <\gamma _{0}(m)$ without loss of generality. Note that 
\begin{eqnarray*}
&&R(\beta_{0},\gamma |m) \\
&=&\mathbb{E}\left[ \left. 
\begin{array}{l}
\{\beta _{G0}\left( \left( g_{i}-\gamma \right) _{-}-\left(g_{i}-\gamma _{0}\left( m\right) \right) _{-}\right)  \\ 
+ \beta _{X0}\left( \left( g_{i}-\gamma \right) _{+}-\left(g_{i}-\gamma _{0}\left( m\right) \right) _{+}\right) \}^{2}%
\end{array}%
\right\vert m_{i}=m\right]  \\
&=&\mathbb{E}\left[ \left. 
\begin{array}{l}
\{1\left[ g_{i}\leq \gamma \right] \beta_{G0}\left( \gamma _{0}\left( m\right) -\gamma \right)  \\ 
+1\left[ g_{i}>\gamma _{0}\left( m\right) \right] \beta _{X0}\left( \gamma _{0}\left( m\right) -\gamma \right)  \\ 
+1\left[ \gamma  <g_{i}\leq \gamma _{0}\left( m\right) \right] \left[ \beta_{G0}\left( \gamma _{0}\left( m\right) -g_{i}\right) +\beta_{X0}\left( g_{i}-\gamma \right) \right] \}^{2}%
\end{array}%
\right\vert m_{i}=m\right]  \\
&=&\mathbb{E}\left[ 1\left[ g_{i}\leq \gamma \left( m\right) \right] \beta_{G0}^{2}\left( \gamma _{0}\left( m\right) -\gamma \left( m\right) \right) ^{2}|m_{i}=m\right]  \\
&&+\mathbb{E}\left[ 1\left[ g_{i}>\gamma _{0}\left( m\right) \right] \beta_{X0}^{2}\left( \gamma _{0}\left( m\right) -\gamma \left( m\right) \right) ^{2}|m_{i}=m\right]  \\
&&+\mathbb{E}\left[ 1\left[ \gamma \left( m\right) <g_{i}\leq \gamma_{0}\left( m\right) \right] \left[ \beta _{G0}\left( \gamma_{0}\left( m\right) -g_{i}\right) +\beta_{X0}\left( g_{i}-\gamma \left( m\right) \right) \right] ^{2}|m_{i}=m\right]  \\
&>&0,
\end{eqnarray*}
where the last probability is strictly positive, because $\beta_{G0}\neq \beta_{X0}$ and $\Gamma \subset \mathcal{G}$ from Assumption \ref{ass:id}.3. 
\end{proof}

\begin{proof}[Proof of Theorem \ref{thm:consistency}]

Recall the notation $\theta=(\beta',\gamma(m))'$ for fixed $m$.
With slight abuse of notations, we write
\begin{equation*}
\hat{\theta}=\arg \min_{\theta \in \Theta }S_{n}\left( \theta \right) ,
\end{equation*}
where $\Theta$ denotes the parameter space of $\theta$,
\begin{equation*}
S_{n}\left( \theta \right) =\frac{1}{n}\sum_{i=1}^{n}\frac{1}{b_{n} f_m(m)}K\left( \frac{m_{i}-m}{b_{n}}\right) \left( \pi _{i}-\beta ^{\prime }z_{i}\left(\gamma \right) \right)^{2}(1+o_p(1)),
\end{equation*}
and 
\begin{equation*}
z_{i}\left( \gamma \right) =\left( 
\begin{array}{c}
\left( g_{i}-\gamma \right) _{-} \\ 
\left( g_{i}-\gamma \right) _{+} \\ 
x_{i}
\end{array}
\right).
\end{equation*}
The $o_p(1)$ term results from the approximation
\begin{equation*}
    \frac{1}{nb_n}\sum_{i=1}^n K\Big(\frac{m_i-m}{b_n}\Big) = f_m(m)+o_p(1),
\end{equation*}
which is from standard arguments \citep[e.g.,][Chapter 1]{li2007nonparametric}.

Define
\begin{equation*}
h_{i,n}\left( \theta \right) =\frac{1}{b_{n}f_m(m)}K\left( \frac{m_{i}-m} {b_{n}}\right) \left( \pi _{i}-\beta ^{\prime }z_{i}\left( \gamma \right) \right)^{2},
\end{equation*}
which is a row-wise i.i.d.~triangular array. We aim to show that
\begin{equation}
\sup_{\theta \in \Theta }\left\vert \frac{1}{n}\sum_{i=1}^{n}h_{i,n}\left(\theta \right) -\mathbb{E}\left[ h_{i,n}\left( \theta \right) \right]
\right\vert \overset{p}{\rightarrow }0.  \label{eq:ulln}
\end{equation}%
Then, the consistency follows from Theorem \ref{thm:id}, Lemma \ref{lem:bias}, and the standard argument \citep[e.g.,][Theorem 4.1.1]{amemiya1985advanced}. 
Since $h_{i,n}\left( \theta \right) $ is a quadratic function of $\beta $, the only issue is to establish the stochastic equicontinuity with respect to $\gamma$. 
We thus simplify the notation by writing $h_{i,n}\left( \theta \right) =h_{i,n}\left( \gamma \right) $ and illustrate with $\beta =\beta_{0}$.
Moreover, since we focus on a fixed $m$, we write $\gamma =\gamma \left(m\right) $ for simplicity. 
It suffices to check the following three steps. 
Then (\ref{eq:ulln}) follows by checking the conditions of Theorem 2.8.1 of \citet{van1996weak}:

1. $h_{i,n}\left( \gamma \right) $ is Lipschitz continuous with constant $\kappa _{i,n}=\kappa \left( \pi _{i},z_{i},m_{i},g_i\right) $. 

2. $\left\vert \left\vert \kappa _{i,n}^{2}\right\vert \right\vert $ and $h_{i,n}\left( \gamma \right) ^{2}$ are dominated by a function $H\left( \pi_{i},z_{i},m_{i},g_i\right) $. 

3. $\sup_{n}\mathbb{E}\left[ H\left( \pi_{i},z_{i},m_i,g_i\right) \right] <\infty $. 

For 1, without loss of generality, we suppress the linear term $x_i$. Then
\begin{eqnarray*}
&&\left\vert h_{i,n}\left( \gamma \right)-h_{i,n}\left( \gamma ^{\prime }\right)\right\vert  \\
&=&\frac{1}{b_{n}f_m(m)}K\left( \frac{m_{i}-m}{b_{n}}\right) \left\vert \left( \pi _{i}-\beta _{0}^{\prime }z_{i}\left( \gamma \right) \right) ^{2}-\left( \pi _{i}-\beta _{0}^{\prime }z_{i}\left( \gamma ^{\prime }\right) \right)^{2}\right\vert  \\
&\leq &\frac{1}{b_{n}f_m(m)}K\left( \frac{m_{i}-m}{b_{n}}\right) \left( 
\begin{array}{c}
\left\vert \beta_{G0}\left( g_{i}-\gamma \right)_{-}-\beta _{G0}\left(g_{i}-\gamma ^{\prime }\right) _{-}\right\vert  \\ 
+\left\vert \beta_{X0}\left( g_{i}-\gamma \right) _{+}-\beta_{X0}\left(g_{i}-\gamma ^{\prime }\right) _{+}\right\vert 
\end{array}%
\right)  \\
&&\times \left\vert 2\pi _{i}-\beta_{0}^{\prime }z_{i}\left( \gamma \right)-\beta_{0}^{\prime }z_{i}\left( \gamma ^{\prime }\right) \right\vert  \\
&\leq &\left\vert \gamma -\gamma ^{\prime }\right\vert \times \frac{1}{b_{n}f_m(m)}K\left( \frac{m_{i}-m}{b_{n}}\right) \left( \left\vert \beta_{G0}\right\vert +\left\vert \beta_{X0}\right\vert \right) \left\vert 2\pi_{i}-\beta_{0}^{\prime }z_{i}\left( \gamma \right) -\beta_{0}^{\prime}z_{i}\left( \gamma ^{\prime }\right) \right\vert  \\
&\equiv &\left\vert \gamma -\gamma ^{\prime }\right\vert \times \kappa_{i,n}.
\end{eqnarray*}

For 2, set 
\begin{equation*}
H\left( \pi _{i},z_{i},m_{i},g_i\right) =\frac{4}{b_{n}^{2}f_m(m)^2}K^{2}\left( \frac{m_{i}-m}{b_{n}}\right)\sup_{\beta_{G},\beta_{X}}\left( \left\vert\beta_{G}\right\vert +\left\vert \beta_{X}\right\vert \right)^{2}\sup_{\beta ,\gamma }\left\vert \pi _{i}-\beta ^{\prime }z_{i}\left(\gamma \right) \right\vert ^{2}.
\end{equation*}%
Then 
\begin{equation*}
\kappa _{i,n}^{2}\leq H\left( \pi_{i},z_{i},m_{i},g_i\right) \text{ and }h_{i,n}\left( \gamma \right) ^{2}\leq H\left( \pi_{i},z_{i},m_{i},g_i\right) .
\end{equation*}

For 3, by Assumptions \ref{ass:a}.1 and \ref{ass:a}.2,
\begin{eqnarray*}
\mathbb{E}\left[ H\left( \pi _{i},z_{i},m_{i},g_i\right) \right]  
&\leq &C\sup_{t\in \mathbb{R}^{+}}t^{2}K\left( t\right) \mathbb{E}\left[ \sup_{\beta ,\gamma}\left\vert \pi _{i}-\beta ^{\prime }z_{i}\left( \gamma \right) \right\vert^{2}\right]  \\
&\leq &C\sup_{t\in \mathbb{R}^{+}}t^2 K\left( t\right) \mathbb{E}\left[ \pi _{i}^{2}+2\sup_{\beta_{G},\beta _{X}}\left( \left\vert \beta _{G}\right\vert +\left\vert \beta_{X}\right\vert \right) ^{2}\sup_{\gamma\in\Gamma }\left( g_{i}-\gamma \right) ^{2}\right]  \\
&<&\infty ,
\end{eqnarray*}%
which holds uniformly over $n$. 

\end{proof}
\begin{lemma}
\label{lem:bias}Under Assumptions \ref{ass:id} and \ref{ass:a}, 
\begin{equation*}
\mathbb{E}\left[ S_{n}\left( \theta \right) \right] -L\left( \theta \right)=O\left( b_{n}^{2}\right) .
\end{equation*}
\end{lemma}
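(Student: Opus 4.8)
The plan is to recognize $\mathbb{E}[S_n(\theta)]$ as a standard kernel-smoothed mean and to bound its gap from $L(\theta)$ by a second-order Taylor expansion in $b_n$. I work with the leading term $\frac1n\sum_i h_{i,n}(\theta)$ of $S_n(\theta)$, where $h_{i,n}(\theta)=\frac{1}{b_n f_m(m)}K\big(\frac{m_i-m}{b_n}\big)(\pi_i-\beta'z_i(\gamma))^2$; since the row summands are i.i.d., $\mathbb{E}[S_n(\theta)]=\mathbb{E}[h_{1,n}(\theta)]$. Writing $\rho_\theta(t):=\mathbb{E}\big[(\pi_i-\beta'z_i(\gamma))^2\,\big|\,m_i=t\big]$ and $g_\theta(t):=\rho_\theta(t)f_m(t)$, I condition on $m_i$ and substitute $u=(t-m)/b_n$ to obtain
\[
\mathbb{E}[S_n(\theta)]=\frac{1}{f_m(m)}\int K(u)\,g_\theta(m+b_n u)\,du .
\]

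The only step that is not completely routine is to show that $t\mapsto g_\theta(t)$ is twice continuously differentiable, with second derivative bounded uniformly over $\theta$ in the compact parameter space $\Theta$. This is where the kink has to be controlled. Although $z_i(\gamma)$ contains the non-differentiable maps $(g_i-\gamma)_\pm$ and $\pi_i$ depends on $m_i$ through $\gamma_0(m_i)$, the conditioning on $m_i=t$ integrates $g_i$ (and the linearly entering $x_i$) out against the conditional density $f_{g\mid m}(\cdot\mid t)$, which is $C^2$ in $t$ with uniformly bounded derivatives by Assumption \ref{ass:a}.4. Terms of the form $\int (g-\gamma_0(t))_{-}\,(\cdot)\,f_{g\mid m}(g\mid t)\,dg$ can be differentiated in $t$ under the integral sign with no boundary contribution, because the integrand is continuous across $g=\gamma_0(t)$; together with the smoothness of $\gamma_0(\cdot)$ (Assumption \ref{ass:a}.5) and the uniform fourth-moment bounds (Assumption \ref{ass:a}.1), this delivers $\sup_{\theta\in\Theta}\sup_t|g_\theta''(t)|\le C<\infty$. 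The point to stress is that one differentiates only in the conditioning point $t$, never in $\gamma$, so the non-smoothness of the model in $\gamma$ never appears here.

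From here the argument is the textbook kernel-bias computation. Taylor-expand $g_\theta(m+b_n u)=g_\theta(m)+b_n u\,g_\theta'(m)+\frac12 b_n^2 u^2\,g_\theta''(\xi_u)$ for some $\xi_u$ between $m$ and $m+b_n u$; the kernel normalization $\int K(u)\,du=1$ makes the leading term equal to $g_\theta(m)/f_m(m)=\rho_\theta(m)=L(\theta)$; the symmetry $K(u)=K(-u)$ from Assumption \ref{ass:a}.2 kills the first-order term since $\int uK(u)\,du=0$; and the remainder is at most $\frac{b_n^2}{2 f_m(m)}\big(\sup_\theta\sup_t|g_\theta''(t)|\big)\int u^2K(u)\,du=O(b_n^2)$, uniformly in $\theta$, using $f_m(m)>0$ (Assumption \ref{ass:a}.4) and finiteness of $\int u^2K(u)\,du$. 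This yields $\mathbb{E}[S_n(\theta)]-L(\theta)=O(b_n^2)$.

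I expect no serious obstacle: the Taylor/kernel part is entirely standard (cf.\ \citealp{li2007nonparametric}, Ch.~2). The one place where care is needed is the smoothness-in-$t$ verification of $g_\theta$, i.e.\ confirming that at a \emph{fixed} $\gamma$ the model's kink does not contaminate the smoothing bias — precisely the feature that changes in Theorem \ref{thm:normal}, where $\gamma$ is the varying function $\gamma_0(m)$ and the kink does generate an additional bias term. One should also carry along the harmless $(1+o_p(1))$ factor arising from the empirical normalization $\frac{1}{nb_n}\sum_i K\big(\frac{m_i-m}{b_n}\big)=f_m(m)+o_p(1)$, which is split off in the consistency proof and does not affect the $O(b_n^2)$ statement.
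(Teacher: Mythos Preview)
Your proposal is correct and follows essentially the same route as the paper: write $\mathbb{E}[S_n(\theta)]$ as $\frac{1}{f_m(m)}\int K(u)\,\rho_\theta(m+b_n u)f_m(m+b_n u)\,du$ via conditioning and change of variables, then Taylor-expand in $b_n$ and use the symmetry of $K$ to kill the first-order term. The paper's own proof is in fact more terse---it simply invokes Assumptions \ref{ass:a}.1 and \ref{ass:a}.4 and jumps to the $O(b_n^2)$ conclusion---whereas you supply the extra (and useful) justification that the kink in $z_i(\gamma)$ does not spoil the $C^2$ smoothness of $t\mapsto g_\theta(t)$ because one differentiates only in the conditioning point $t$, not in $\gamma$.
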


\begin{proof}[Proof of Lemma \protect\ref{lem:bias}]
By Taylor expansion and Assumptions \ref{ass:a}.1 and \ref{ass:a}.4, 
\begin{eqnarray*}
\mathbb{E}\left[ S_{n}\left( \theta \right) \right]  
&=&\frac{1}{b_{n}}%
\mathbb{E}\left[ \frac{1}{f_{m}\left( m\right) }K\left( \frac{m_{i}-m}{b_{n}}\right) \left( \pi _{i}-\beta ^{\prime }z_{i}\left(\gamma \right) \right) ^{2}\right]  \\
&=&\int \mathbb{E}\left[ \left( \pi _{i}-\beta ^{\prime }z_{i}\left( \gamma\right) \right) ^{2}|m_{i}=m+b_{n}t\right] K\left( t\right) \frac{f_{m}\left( m +b_{n}t\right) }{f_{m}\left( m\right) }dt\\
&=&\mathbb{E}\left[ \left( \pi _{i}-\beta ^{\prime }z_{i}\left( \gamma\right) \right) ^{2}|m_{i}=m\right] +O\left(b_{n}^{2}\right) ,
\end{eqnarray*}%
where the $O\left( b_{n}^{2}\right) $ is uniform over $\theta$. 
\end{proof}

\begin{proof}[Proof of Theorem \ref{thm:normal}]
Since this theorem applies pointwise at $m$, we suppress $m$ to simplify notations. 
Denote $e_{i}\left( \theta \right) =\pi _{i}-\beta ^{\prime }z_{i}\left(\gamma \right) $. 
The estimator $\hat{\theta} = \hat{\theta}(m)$ approximately solves the FOC
\begin{eqnarray*}
0 &=&\sum_{i=1}^{n}k_{i,n}D_{i}\left( \theta \right) e_{i}\left( \theta\right) \\
&=&\sum_{i=1}^{n}k_{i,n}D_{i}\left( \theta \right) \left( u_{i}+\beta_{0}^{\prime }z_{i}\left( \gamma _{0}\left( m_{i}\right) \right) -\beta^{\prime }z_{i}\left( \gamma \right) \right) \\
&=&\sum_{i=1}^{n}k_{i,n}D_{i}\left( \theta \right)u_{i}+\sum_{i=1}^{n}k_{i,n}D_{i}\left( \theta \right) \left( \beta_{0}-\beta \right) ^{\prime }z_{i}\left( \gamma \right) \\
&&+\sum_{i=1}^{n}k_{i,n}D_{i}\left( \theta \right) \beta _{0}^{\prime}\left( z_{i}\left( \gamma _{0}\left( m\right) \right) -z_{i}\left( \gamma\right) \right) \\
&&+\sum_{i=1}^{n}k_{i,n}D_{i}\left( \theta \right) \beta _{0}^{\prime}\left( z_{i}\left( \gamma _{0}\left( m_{i}\right) \right) -z_{i}\left(\gamma _{0}\left( m\right) \right) \right) \\
&\equiv &A_{1n}+A_{2n}+A_{3n}+A_{4n}.
\end{eqnarray*}%
We study the limits of $A_{1n},A_{2n},A_{3n},A_{4n}$ one-by-one. 
By standard argument, 
\begin{equation*}
\frac{1}{nb_{n}}\sum_{i=1}^{n}K\left( \frac{m_{i}-m}{b_{n}}\right) = f_{m}\left( m\right) +O_{p}\left( b_{n}^{2}+\frac{1}{\sqrt{nb_{n}}}\right) ,
\end{equation*}
provided that $f_{m}(m)>0$ and it is twice continuously differentiable with uniformly bounded derivatives. 
We therefore write 
\begin{equation*}
k_{i,n}=\frac{1}{nb_{n}}\sum_{i=1}^{n}K\left( \frac{m_{i}-m}{b_{n}}\right) \frac{1}{f_{m}\left( m\right)}(1+o_p(1)).
\end{equation*}

For $A_{1n}$, we have that 
\begin{eqnarray*}
&&\sqrt{nb_{n}}A_{1n} \\
&=&\frac{1}{\sqrt{nb_{n}}f_{m}\left( m\right) }\sum_{i=1}^{n}K\left( \frac{m_{i}-m}{b_{n}}\right) D_{i}\left( \theta_{0}\right) u_{i} (1+o_p(1))\\
&&+\frac{1}{\sqrt{nb_{n}}f_{m}\left( m\right) }\sum_{i=1}^{n}K\left( \frac{m_{i}-m}{b_{n}}\right) \left( D_{i}\left( \theta \right) -D_{i}\left( \theta_{0}\right) \right) u_{i}+o_{p}\left( 1\right) \\
&\equiv &\sqrt{nb_{n}}A_{11n}+\sqrt{nb_{n}}A_{12n}+o_{p}\left( 1\right),
\end{eqnarray*}
where we evaluate $\theta = \hat{\theta}$.  
The first item is asymptotically normal by the standard CLT with asymptotic variance
\begin{equation*}
V(m)=\frac{1}{f_{m}\left( m\right) }\mathbb{E}\left[ D_{i}\left( \theta_{0}\right) D_{i}\left( \theta _{0}\right) ^{\prime }u_{i}^{2}|m_{i}=m\right]\int K\left( u\right) ^{2}du.
\end{equation*}
We now show that the second term is $o_{p}\left( 1\right) $ using Cauchy-Schwarz inequality. Specifically, 
\begin{eqnarray*}
&&nb_{n}\mathbb{E}\left[ A_{12n}^{2}\right] \\
&=&\frac{1}{b_{n}f_{m}\left( m\right) ^{2}}\mathbb{E}\left[ K\left( \frac{m_{i}-m}{b_{n}}\right) ^{2}\left\vert \left\vert D_{i}\left( \theta \right)-D_{i}\left( \theta _{0}\right) \right\vert \right\vert ^{2}u_{i}^{2}\right] \\
&=&\frac{1}{f_{m}\left( m\right) ^{2}}\int \mathbb{E}\left[ \left\vert\left\vert D_{i}\left( \theta \right) -D_{i}\left( \theta _{0}\right) \right\vert \right\vert ^{2}u_{i}^{2}|m_{i}=m+b_{n}u\right] f_{m}\left(m+b_{n}u\right) K\left( u\right) du \\
&\leq &\frac{\left\vert \left\vert \beta -\beta _{0}\right\vert \right\vert }{f_{m}\left( m\right) }\left( \mathbb{E}\left[ u_{i}^{2}|m_{i}=m\right]+o(1)\right) \\
&&+\frac{\left\vert \gamma -\gamma _{0}\left( m\right) \right\vert \times \left\vert \left\vert \bar{\beta}\right\vert \right\vert ^{2}}{f_{m}\left(m\right) }\left( \mathbb{E}\left[ u_{i}^{2}|m_{i}=m,g_{i}=\gamma _{0}\left(m\right) \right] f_{g|m}\left( \gamma _{0}\left( m\right) |m\right) +o\left(1\right) \right) ,
\end{eqnarray*}
where the last inequality follows from Assumption \ref{ass:a}.4 that $f_{g|m}\left(r|m\right) $ and $\mathbb{E}\left[ u_{i}^{2}|m_{i}=m,g_{i}=r\right] $ are both continuous differentiable w.r.t.~$m$ and $r$ with uniformly bounded derivatives, and $||\bar\beta||=\sup_{\beta}||\beta||<\infty$. Therefore, $\sqrt{nb_{n}}A_{12n}=o_{p}(1)$ follows from that $\hat{\theta}-\theta _{0}=o_{p}(1)$. Combining $A_{11n}$ and $A_{12n}$ yields that 
\begin{equation*}
\sqrt{nb_{n}}A_{1n} \overset{d}{\rightarrow} \mathcal{N}\left( 0,V\left( m\right) \right) .
\end{equation*}

We now study $A_{2n}$. By the uniform convergence in \eqref{eq:ulln} and consistency $\hat{\theta}-\theta_{0}=o_{p}(1)$, we have that
\begin{equation*}
\frac{1}{nb_{n}}\sum_{i=1}^{n}K\left( \frac{m_{i}-m}{b_{n}}\right) \binom{z_{i}\left( \gamma \right) z_{i}\left( \gamma \right) ^{\prime }}{d_{i}\left( \theta \right) z_{i}\left( \gamma \right) ^{\prime }}=\mathbb{E}\left[ \binom{z_{i}\left( \gamma _{0}\right) z_{i}\left( \gamma _{0}\right)^{\prime }}{d_{i}\left( \theta _{0}\right) z_{i}\left( \gamma _{0}\right)^{\prime }}\Big\vert m_{i}=m\right] f_{m}\left( m\right) +o_{p}\left( 1\right),
\end{equation*}
where $d_i(\theta) = -\beta_G 1[g_i<\gamma(m)]-\beta_X 1[g_i>\gamma(m)]$.
It follows that
\begin{eqnarray*}
\sqrt{nb_{n}}A_{2n} &=&\sum_{i=1}^{n}k_{i,n}D_{i}\left( \theta \right) \left( \beta _{0}-\hat\beta \right) ^{\prime }z_{i}\left( \gamma \right) \\
&=&\frac{1}{\sqrt{nb_{n}}f_{m}\left( m\right) }\sum_{i=1}^{n}K\left( \frac{m_{i}-m}{b_{n}}\right) \binom{z_{i}\left( \gamma \right) z_{i}\left( \gamma \right) ^{\prime }}{d_{i}\left( \theta \right) z_{i}\left( \gamma \right)^{\prime }}\left( \beta_{0}-\hat\beta \right) +o_{p}\left( 1\right) \\
&=&\mathbb{E}\left[ \binom{z_{i}\left( \gamma _{0}\right) z_{i}\left( \gamma_{0}\right) ^{\prime }}{d_{i}\left( \theta _{0}\right) z_{i}\left( \gamma_{0}\right) ^{\prime }} \Big\vert m_{i}=m\right] \sqrt{nb_{n}}\left( \beta_{0}-\hat\beta\right) +o_{p}\left( 1\right) .
\end{eqnarray*}

For $A_{3n}$, without loss of generality, we assume $\gamma _{0}\left(m\right) <\gamma $. Some linear algebra yields that
\begin{eqnarray*}
&&\left( 
\begin{array}{c}
\left( g_{i}-\gamma _{0}\left( m\right) \right) _{-} \\ 
\left( g_{i}-\gamma _{0}\left( m\right) \right) _{+} \\ 
x_{i}
\end{array}
-
\begin{array}{c}
\left( g_{i}-\gamma \right) _{-} \\ 
\left( g_{i}-\gamma \right) _{+} \\ 
x_{i}
\end{array}
\right) ^{\prime }\beta _{0} \\
&=&\left\{ 
\begin{array}{ll}
\beta _{X0}\left( \gamma -\gamma _{0}\left( m\right) \right) & \text{if }g_{i}>\gamma \\ 
\left( \beta _{X0}-\beta _{G0}\right) \left( g_{i}-\gamma _{0}\left(m\right) \right) +\beta _{G0}\left( \gamma -\gamma _{0}\left( m\right) \right) & \text{if }\gamma _{0}\left( m\right) \leq g_{i}\leq \gamma \\ 
\beta _{G0}\left( \gamma -\gamma _{0}\left( m\right) \right) & \text{if }g_{i}<\gamma _{0}\left( m\right) .
\end{array}
\right. ,
\end{eqnarray*}
By the uniform convergence in \eqref{eq:ulln} and consistency $\hat{\theta}-\theta_{0}=o_{p}(1)$ again, we have that 
\begin{eqnarray*}
A_{3n} &=&\frac{1}{b_{n}f_{m}\left( m\right) }\mathbb{E}\left[ K\left( \frac{m_{i}-m}{b_{n}}\right) D_{i}\left( \theta \right) \beta _{0}^{\prime }\left(z_{i}\left( \gamma _{0}\left( m\right) \right) -z_{i}\left( \gamma \right)\right) \right] +o_{p}\left( 1\right) \\
&=&\mathbb{E}\left[ \left. \binom{z_{i}\left( \gamma \right) }{d_{i}\left(\theta \right) }\left( 
\begin{array}{c}
\left( g_{i}-\gamma _{0}\left( m\right) \right) _{-} \\ 
\left( g_{i}-\gamma _{0}\left( m\right) \right) _{+} \\ 
x_{i}
\end{array}
-
\begin{array}{c}
\left( g_{i}-\gamma \right) _{-} \\ 
\left( g_{i}-\gamma \right) _{+} \\ 
x_{i}
\end{array}
\right) ^{\prime }\beta _{0}\right\vert m_{i}=m\right] +o_{p}\left( 1\right)
\\
&=&\mathbb{E}\left[ \left. \binom{z_{i}\left( \gamma \right) }{d_{i}\left(\theta \right) }\left\{ 
\begin{array}{l}
\beta _{X0}\left( \gamma -\gamma _{0}\left( m\right) \right) 1[g_{i}>\gamma ]\\ 
+\left( \beta _{X0}-\beta _{G0}\right) \left( g_{i}-\gamma _{0}\left(m\right) \right) 1\left[ \gamma _{0}\left( m\right) \leq g_{i}\leq \gamma \right] \\ 
-\beta _{X0}\left( \gamma -\gamma _{0}\left( m\right) \right) 1\left[ \gamma_{0}\left( m\right) \leq g_{i}\leq \gamma \right] \\ 
+\beta _{G0}\left( \gamma -\gamma _{0}\left( m\right) \right) 1[g_{i}<\gamma_{0}\left( m\right) ]
\end{array}
\right\} \right\vert m_{i}=m\right] +o_{p}\left( 1\right) \\
&=&\left( \gamma -\gamma _{0}\left( m\right) \right) \mathbb{E}\left[ \left. \binom{z_{i}\left( \gamma _{0}\right) }{d_{i}\left( \theta _{0}\right) }\left\{ \beta _{G0}1(g_{i}<\gamma _{0})+\beta _{X0}1(g_{i}>\gamma )\right\}\right\vert m_{i}=m\right] +o_{p}\left( 1\right) .
\end{eqnarray*}
By combining $A_{2n}$ and $A_{3n}$ and stacking $\hat{\theta}=( \hat{\beta}^{\prime },\hat{\gamma}) ^{\prime }$, we then obtain that%
\begin{equation*}
\sqrt{nb_{n}}\left( A_{2n}+A_{3n}\right) = - Q\left( m\right) \sqrt{nb_{n}}\left( \hat{\theta}-\theta_{0}\right) +o_{p}(1),
\end{equation*}
where we recall
\begin{equation*}
Q\left( m\right) =\mathbb{E}\left[ D_{i}\left( \theta _{0}\right) D_{i}\left( \theta _{0}\right) ^{\prime }|m_{i}=m\right],
\end{equation*}
with
\begin{eqnarray*}
D_{i}\left( \theta \right)  &=&-\frac{\partial }{\partial \theta }\left( \pi_{i}-\beta ^{\prime }z_{i}\left( \gamma \right) \right)  \\
&=&\binom{z_{i}\left( \gamma \right) }{-\beta_{G}1\left[g_{i}<\gamma \right] -\beta_{X}1\left[ g_{i}>\gamma \right] }.
\end{eqnarray*}

It remains to study $A_{4n}$, which introduces the asymptotic bias. Note that 
\begin{eqnarray}
&&\mathbb{E}\left[ A_{4n}\right]  \notag \\
&=&\frac{1}{\left( f_{m}\left( m\right) +o_{a.s.}(1)\right) b_{n}}\mathbb{E}\left[K\left( \frac{m_{i}-m}{b_{n}}\right) D_{i}\left( \theta \right) \left(
z_{i}\left( \gamma _{0}\left( m_{i}\right) \right) -z_{i}\left( \gamma_{0}\left( m\right) \right) \right) ^{\prime }\right] \beta _{0}  \notag \\
&=&\int \mathbb{E}\left[ \left. D_{i}\left( \theta \right) \left(z_{i}\left( \gamma _{0}\left( m+ub_{n}\right) \right) -z_{i}\left( \gamma
_{0}\left( m\right) \right) \right) ^{\prime }\right\vert m_{i}=m\right] \frac{f_{m}\left( m+ub_{n}\right) }{f_{m}\left( m\right) +o_{a.s.}(1)}K\left(u\right) du,  \label{eq:EA4n}
\end{eqnarray}
where the $o_{a.s.}(1)$ term follows from the almost sure convergence of the kernel estimator \citep[e.g.,][]{hansen2008uniform}. 
Repeat the calculation in $A_{3n}$ with $\gamma $ replaced by $\gamma_{0}\left( m_{i}\right) $. 
Without loss of generality, consider $\gamma _{0}\left( m+ub_{n}\right) >\gamma _{0}\left( m\right)$. 
We have that 
\begin{eqnarray*}
&&\int\mathbb{E}\left[ \left. D_{i}\left( \theta \right) \left(z_{i}\left( \gamma _{0}\left( m+ub_{n}\right) \right) -z_{i}\left( \gamma
_{0}\left( m\right) \right) \right) ^{\prime }\right\vert m_{i}=m\right] \frac{f_{m}\left( m+ub_{n}\right) }{f_{m}\left( m\right) +o_{a.s.}(1)}K\left(
u\right) du \\
&=&\int\mathbb{E}\left[ \left. D_{i}\left( \theta \right) \left\{ 
\begin{array}{c}
B_{1}(g_{i},ub_{n})+B_{2}(g_{i},ub_{n}) \\ 
+B_{3}(g_{i},ub_{n})+B_{4}(g_{i},ub_{n}) 
\end{array}%
\right\} \right\vert m_{i}=m\right] \left( 1+f_{m}^{(1)}\left( m\right) ub_{n}+O\left( b_{n}^{2}\right) \right) K\left( u\right) du,
\end{eqnarray*}
where
\begin{equation*}
\begin{array}{l}
B_{1}(g_{i},ub_{n})=\beta _{X0}\left( \gamma _{0}\left( m+ub_{n}\right)-\gamma _{0}\left( m\right) \right) 1[g_{i}>\gamma _{0}\left(m+ub_{n}\right) ] \\ 
B_{2}(g_{i},ub_{n})=\left( \beta _{X0}-\beta _{G0}\right) \left(g_{i}-\gamma _{0}\left( m\right) \right) 1\left[ \gamma _{0}\left( m\right)\leq g_{i}\leq \gamma _{0}\left( m+ub_{n}\right) \right] \\ 
B_{3}(g_{i},ub_{n})=\beta _{G0}\left( \gamma _{0}\left( m+ub_{n}\right)-\gamma _{0}\left( m\right) \right) 1\left[ \gamma _{0}\left( m\right) \leq g_{i}\leq \gamma _{0}\left( m+ub_{n}\right) \right] \\ 
B_{4}(g_{i},ub_{n})=\beta _{G0}\left( \gamma _{0}\left( m+ub_{n}\right) -\gamma _{0}\left( m\right) \right) 1[g_{i}<\gamma _{0}\left( m\right)].
\end{array}
\end{equation*}
For $B_{1}(g_{i},ub_{n})$ and $B_{4}(g_{i},ub_{n})$, substituting $\gamma_{0}\left( m+ub_{n}\right) -\gamma _{0}\left( m\right) =\gamma_{0}^{(1)}(m) u b_{n}+\frac{1}{2}\gamma_{0}^{(2)}(m) u^{2}b_{n}^2+o(b_{n}^{2}) $ (which is implied by Assumption 2.5) yields that 
\begin{eqnarray*}
&&\int \mathbb{E}\left[ \left. D_{i}\left( \theta \right) \left\{B_{1}(g_{i},ub_{n})+B_{4}(g_{i},ub_{n})\right\} \right\vert m_{i}=m\right]\left( 1+f_{m}^{(1)}\left( m\right) ub_{n}+O\left( b_{n}^{2}\right) \right) K\left( u\right) du \\
&=&b_{n}^{2}\mathbb{E}\left[ \left. D_{i}\left( \theta _{0}\right) d_{i}\left( \theta _{0}\right) \right\vert m_{i}=m\right] \int u^{2}K
\left( u\right) du\left( \gamma_{0}^{(1)}\left( m\right) f_{m}^{(1)}\left( m\right) +\frac{1}{2}\gamma_{0}^{(2)}\left( m\right) \right) +o(b_{n}^{2}).
\end{eqnarray*}
For $B_{2}\left( g_{i},ub_{n}\right) $, we have that 
\begin{eqnarray*}
&&\int\mathbb{E}\left[ \left. D_{i}\left( \theta \right) B_{2}(g_{i},ub_{n})\right\vert m_{i}=m\right] \left( 1+f_{m}^{(1)}\left(
m\right) ub_{n}+O\left( b_{n}^{2}\right) \right) K\left( u\right) du \\
&=&\left( \beta _{X0}-\beta _{G0}\right) \int \left( \int_{0}^{\gamma_{0}\left( m+ub_{n}\right) -\gamma _{0}\left( m\right) }s\mathbb{E}\left[\left. D_{i}\left( \theta \right)\right\vert m_{i}=m\right] f_{g|m}\left(s+\gamma _{0}\left( m\right) |m\right) ds\right) K\left( u\right) du\left(1+O(b_{n})\right) \\
&=&b_{n}^{2}\frac{\beta_{X0}-\beta_{G0}}{2}\gamma_{0}^{(1)}(m)^{2}\mathbb{E}\left[ \left. D_{i}\left( \theta _{0}\right)
\right\vert g_{i}=\gamma_{0}\left( m\right),m_{i}=m\right] f_{g|m}\left(\gamma _{0}\left( m\right) |m\right) \int u^{2}K\left( u\right)
du+o(b_{n}^{2}).
\end{eqnarray*}
For $B_{3}\left( g_{i},ub_{n}\right) $, we have that
\begin{eqnarray*}
&&\int\mathbb{E}\left[ \left. D_{i}\left( \theta \right) B_{3}(g_{i},ub_{n})\right\vert m_{i}=m\right] \left( 1+f_{m}^{(1)}\left(
m\right) ub_{n}+O\left( b_{n}^{2}\right) \right) K\left( u\right) du \\
&=&\beta _{G0}\int \left( \gamma _{0}\left( m+ub_{n}\right) -\gamma_{0}\left( m\right) \right) \left( \int_{\gamma _{0}\left( m\right)
}^{\gamma _{0}\left( m+ub_{n}\right) }\mathbb{E}\left[ \left. D_{i}\left(\theta \right) \right\vert g_{i}=g,m_{i}=m\right] f_{g|m}\left( g|m\right) dg\right) K\left( u\right) du\left( 1+O(b_{n})\right) \\
&=&b_{n}^{2}\beta _{G0}\gamma_{0}^{(1)}(m)^{2}\mathbb{E}\left[ \left. D_{i}\left( \theta _{0}\right) \right\vert g_{i}=\gamma
_{0}\left( m\right) ,m_{i}=m\right] f_{g|m}\left( \gamma _{0}\left( m\right)|m\right) \int u^{2}K\left( u\right) du+o(b_{n}^{2}).
\end{eqnarray*}
The argument for $\gamma _{0}\left( m+ub_{n}\right) \leq \gamma _{0}\left( m\right)$ is the same. 
Combining $B_{1}\left( g_{i},ub_{n}\right) $ to $B_{4}\left(g_{i},ub_{n}\right) $ yields that 
\begin{equation*}
\mathbb{E}\left[ A_{4n}\right] =b_{n}^{2}\times B\left( m\right),
\end{equation*}
where 
\begin{eqnarray*}
B\left( m\right) &=&\int u^{2}K\left( u\right) du\times \left\{ \mathbb{E}\left[ \left. D_{i}\left( \theta _{0}\right) d_{i}\left( \theta _{0}\right) \right\vert m_{i}=m\right] \left( \gamma _{0}^{(1)}\left( m\right)
f_{m}^{(1)}\left( m\right) +\frac{1}{2}\gamma _{0}^{(2)}\left( m\right) \right) \right. \\
&&\left. +\frac{\beta_{X0}+\beta_{G0}}{2}\mathbb{E}\left[ \left. D_{i}\left( \theta _{0}\right) \right\vert g_{i}=\gamma _{0}\left( m\right)
,m_{i}=m\right] f_{g|m}\left( \gamma _{0}\left( m\right) |m\right) \gamma_{0}^{(1)}(m)^{2}\right\} .
\end{eqnarray*}
By standard argument and the consistency, we can show that $Var\left[ A_{4n}\right] =O\left( \frac{b_{n}^{2}}{nb_{n}}\right) $, which is of smaller order than $O(\frac{1}{nb_{n}})$. Therefore,
\begin{equation*}
\sqrt{nb_{n}}A_{4n}=\sqrt{nb_{n}}b_{n}^{2}\times B\left( m\right) \left(1+o_{p}(1)\right) .
\end{equation*}%
Combining $A_{1n}$ to $A_{4n}$ yields that%
\begin{eqnarray*}
&&\sqrt{nb_{n}}\left\{ Q\left( m\right) \left( \hat{\theta}-\theta_{0}\right) -b_{n}^{2}\times B\left( m\right) \right\} \\
&=&-\sqrt{nb_{n}}A_{1n}+o_{p}(1) \\
&\overset{d}{\rightarrow} &\mathcal{N}\left( 0,V\left( m\right) \right),
\end{eqnarray*}%
as desired.
\end{proof}

\begin{proof}[Proof of Theorem \ref{thm:rootn}]
We introduce the short-handed notation: 
$\gamma _{i}^{\text{max}}=\max\{\gamma_{0}(m_{i}), \hat{\gamma}_{-i}(m_{i}) \}$,
$\gamma _{i}^{\text{min}}=\min \{\gamma _{0}\left( m_{i}\right),
\hat{\gamma}_{-i}\left( m_{i}\right) \}$, 
$\Delta \gamma _{i}=\hat{\gamma}_{-i}\left( m_{i}\right) -\gamma _{0}\left( m_{i}\right) $,
and $1_i\mathcal{M} = 1[m_i\in\mathcal{M}]$. 
Also denote
\begin{equation*}
z_{i}\left( \gamma _{0}\left( m_{i}\right) \right) -z_{i}\left( \hat{\gamma}_{-i}\left( m_{i}\right) \right) =z_{i}\left( \Delta \gamma_{i}\right) .
\end{equation*}
Then, substitute the true model to obtain that
\begin{eqnarray}
\sqrt{n}\left( \hat{\beta}^{\ast }-\beta _{0}\right) 
&=&\left( \frac{1}{n}\sum_{i=1}^{n}z_{i}\left( \hat{\gamma}_{-i}\left(m_{i}\right) \right) z_{i}\left( \hat{\gamma}_{-i}\left( m_{i}\right)\right)^{\prime }1_{i}\mathcal{M}\right) ^{-1}  \notag \\
&&\times \left( 
\begin{array}{l}
\frac{1}{\sqrt{n}}\sum_{i=1}^{n}z_{i}\left( \hat{\gamma}_{-i}\left(m_{i}\right) \right) u_{i}1_{i}\mathcal{M}+ \\ 
\frac{1}{\sqrt{n}}\sum_{i=1}^{n}z_{i}\left( \hat{\gamma}_{-i}\left(m_{i}\right) \right) \beta _{0}^{\prime }z_{i}\left( \Delta \gamma
_{i}\right) 1_{i}\mathcal{M}
\end{array}
\right)  \label{eq:bstar} \\
&\equiv &A_{n}^{-1}\times \left( B_{1n}+B_{2n}\right) .  \notag
\end{eqnarray}

We now establish
\begin{eqnarray}
A_{n} &=&\mathbb{E}\left[ z_{i}\left( \gamma _{0}\left( m_{i}\right) \right)z_{i}\left( \gamma _{0}\left( m_{i}\right) \right) ^{\prime }1_{i}\mathcal{M}\right] +o_{p}(1)  \label{eq:bstar_A} \\
B_{1n} &=&\frac{1}{\sqrt{n}}\sum_{i=1}^{n}z_{i}\left( \gamma _{0}\left(m_{i}\right) \right) u_{i}1_{i}\mathcal{M}+o_{p}(1) \label{eq:bstar_B1} \\
B_{2n} &=&\frac{1}{\sqrt{n}}\sum_{i=1}^{n}\delta \left( m_{i}\right) u_{i}1_{i}\mathcal{M}+o_{p}(1),  \label{eq:bstar_B2}
\end{eqnarray}
and the proof is completed by combining these terms.

We first study $A_{n}$. 
By the proof of Theorem \ref{thm:normal}, we have that
\begin{equation}
\hat{\gamma}_{-i}\left( m\right) -\gamma \left( m\right) =Q\left( m\right)^{-1}\left( \sum_{i=1}^{n}k_{i,n}D_{i}\left( \theta_{0}(m) \right)
u_{i}+b_{n}^{2}B\left( m\right) \right) \left( 1+o_{p}(1)\right) \label{eq:gam_first_order} .
\end{equation}
By standard argument (e.g., Li and Racine, 2007, Chapter 2) and Assumption \ref{ass:a}.4, the bias term $Q\left( m\right) ^{-1}B\left( m\right) $ is uniformly bounded. 
Also, the stochastic term satisfies 
\begin{equation*}
\sup_{m\in \mathrm{int}(\mathcal{M})}\left\vert \sum_{i=1}^{n}k_{i,n}D_{i}\left(\theta_{0}(m) \right) u_{i}\right\vert =O_{p}\left( \frac{\ln n}{\sqrt{nb_{n}}}\right) ,
\end{equation*}
yielding that
\begin{equation}
\sup_{m\in \mathrm{int}(\mathcal{M})}\left\vert \hat{\gamma}_{-i}\left( m\right)-\gamma \left( m\right) \right\vert =O_{p}\left( \frac{\ln n}{\sqrt{nb_{n}}} +b_{n}^{2}\right) .  \label{eq:gam_uniform}
\end{equation}
Then, (\ref{eq:bstar_A}) follows from the continuous mapping theorem, law of large numbers, and \eqref{eq:gam_uniform}.

Now we study $B_{1n}$, which can be decomposed as
\begin{equation*}
B_{1n}=\frac{1}{\sqrt{n}}\sum_{i=1}^{n}z_{i}\left( \gamma _{0}\left(m_{i}\right) \right) u_{i}1_{i}\mathcal{M}-R_{n},
\end{equation*}
where
\begin{equation*}
R_{n}=\frac{1}{\sqrt{n}}\sum_{i=1}^{n}z_{i}\left( \Delta \gamma _{i}\right)u_{i}1_{i}\mathcal{M}.
\end{equation*}
It suffices to show that $R_{n}=o_{p}(1)$. To this end, the leave-one-out estimator $\hat{\gamma}_{-i}\left( m_{i}\right) $, the i.i.d.ness, and \eqref{eq:gam_uniform} imply that 
\begin{equation*}
\mathbb{E}\left[ \left\vert \left\vert R_{n}\right\vert \right\vert ^{2}\right] \leq \frac{1}{n}\sum_{i=1}^{n}\mathbb{E}\left[ \left( \Delta \gamma_{i}\right) ^{2}u_{i}^{2}1_{i}\mathcal{M}\right] =O\left( b_{n}^{4}+\frac{\left( \ln n\right) ^{2}}{nb_{n}}\right) .
\end{equation*}%
Then $R_{n}=o_{p}(1)$ follows from the Cauchy-Schwarz inequality, yielding \eqref{eq:bstar_B1}.

Finally we study $B_{2n}$. By simple algebra, we have that
\begin{eqnarray*}
\beta _{0}^{\prime }z_{i}\left( \Delta \gamma _{i}\right) 
&=&\beta _{X0}\Delta \gamma _{i}1[g_{i}>\gamma _{i}^{\text{max}}]+\beta_{G0}\Delta\gamma _{i}1[ g_{i}\leq \gamma _{i}^{\text{min}} ] \\
&&+\left( \beta _{X0}-\beta _{G0}\right) \left( g_{i}-\gamma _{i}^{\text{min}}\right) 1[ \gamma _{i}^{\text{min}}<g_{i}\leq \gamma_{i}^{\text{max}} ] \\
&&-\beta _{X0}\Delta \gamma _{i}1\left[ \gamma _{i}^{\text{min}}<g_{i}\leq \gamma _{i}^{\text{max}}\right] \\
&=&-\Delta \gamma _{i}d_{i}+r_{n,i},
\end{eqnarray*}
where we recall that
\begin{equation*}
d_{i} =-\beta _{X0}1 [ g_{i}>\gamma _{0}\left(m_{i}\right) ]-\beta _{G0}1 [ g_{i}\leq \gamma _{0}\left(m_{i}\right) ]
\end{equation*}
and $r_{n,i}$ satisfies
\begin{equation*}
\left\vert r_{n,i}\right\vert \leq 2\left( \left\vert \beta _{X0}\right\vert+\left\vert \beta _{G0}\right\vert \right) |\Delta \gamma _{i}| 1\left[ \gamma_{i}^{\text{min}}<g_{i}\leq \gamma _{i}^{\text{max}}\right] .
\end{equation*}

Substituting $r_{n,i}$ into \eqref{eq:bstar} yields 
\begin{eqnarray*}
B_{2n} &=&\frac{1}{\sqrt{n}}\sum_{i=1}^{n}z_{i}\left( \hat{\gamma}_{-i}\left( m_{i}\right) \right) \beta _{0}^{\prime }z_{i}\left( \Delta
\gamma _{i}\right) 1_{i}\mathcal{M} \\
&=&-\frac{1}{\sqrt{n}}\sum_{i=1}^{n}z_{i}\left( \gamma _{0}\left(m_{i}\right) \right) \Delta \gamma _{i}d_{i}1_{i}\mathcal{M}+\frac{1}{\sqrt{n%
}}\sum_{i=1}^{n}z_{i}\left( \gamma _{0}\left( m_{i}\right) \right) r_{n,i}1_{i}\mathcal{M} \\
&&-\frac{1}{\sqrt{n}}\sum_{i=1}^{n}z_{i}\left( \Delta \gamma _{i}\right)\Delta \gamma _{i}d_{i}1_{i}\mathcal{M}+\frac{1}{\sqrt{n}}\sum_{i=1}^{n}z_{i}\left( \Delta \gamma _{i}\right) r_{n,i}1_{i}\mathcal{M} \\
&\equiv &-B_{21n}+B_{22n}-B_{23n}+B_{24n}.
\end{eqnarray*}%
By simple algebra again, $B_{23n}$ and $B_{24n}$ both involve an additional $\Delta \gamma _{i}$ and hence are of smaller order than $B_{21n}$ and $B_{22n}$, respectively. We therefore focus on $B_{21n}$ and $B_{22n}$.

We first study $B_{22n}$ using \eqref{eq:gam_uniform}. The Cauchy-Schwarz inequality implies that 
\begin{eqnarray*}
&&\mathbb{E}\left[ \left\vert \left\vert B_{22n}\right\vert \right\vert \right] \\
&\leq &\sqrt{n}\mathbb{E}\left[ \left\vert \left\vert z_{i}\left( \gamma_{0}\left( m_{i}\right) \right) \right\vert \right\vert |r_{n,i}|1_{i}\mathcal{M}\right] \\
&\leq &\sqrt{n}C\left( \mathbb{E}\left[ \left\vert \left\vert z_{i}\left(\gamma _{0}\left( m_{i}\right) \right) \right\vert \right\vert ^{4}\right]\right)^{1/4} \mathbb{E}\left[ \left\vert \Delta \gamma_{i}\right\vert ^{4}1_{i} \mathcal{M}\right]^{1/4} \left(\mathbb{E}\left[ 1\left[ \gamma _{i}^{\text{min}}<g_{i}\leq \gamma _{i}^{\text{max}}\right] \right] \right) ^{1/2} \\
&\leq &\sqrt{n}C\left( \left( \ln n\right) \left( nb_{n}\right)^{-1/2}+b_{n}^{2}\right) ^{3/2}\rightarrow 0,
\end{eqnarray*}
which is guaranteed by $b_{n}=O\left( n^{\lambda}\right) $ with $\lambda\in (-1/4,-1/3)$ and $f_{g|m}\left( g|m\right) $ is uniformly bounded over $\left( g,m\right) $. Therefore, $B_{22n}=o_{p}(1)$.
Here, $C$ denotes a generic finite constant, whose value may vary across lines.

It remains to derive $B_{21n}$. To this end, we write that
\begin{eqnarray*}
&&\frac{1}{\sqrt{n}}\sum_{i=1}^{n}z_{i}\left( \gamma _{0}\left(m_{i}\right) \right) d_{i}\Delta \gamma _{i}1_{i}\mathcal{M} \\
&=&\frac{1}{\sqrt{n}}\sum_{i=1}^{n}z_{i}\left( \gamma _{0}\left(m_{i}\right) \right) d_{i}\left( \hat{\gamma}_{-i}\left( m_{i}\right) -\mathbb{E}\left[ \hat{\gamma}_{-i}\left( m\right) \right] _{m=m_{i}}\right)1_{i}\mathcal{M} \\
&&+\frac{1}{\sqrt{n}}\sum_{i=1}^{n}z_{i}\left( \gamma _{0}\left(m_{i}\right) \right) d_{i}\left( \mathbb{E}\left[ \hat{\gamma}_{-i}\left(
m\right) \right] _{m=m_{i}}-\gamma _{0}\left( m_{i}\right) \right) 1_{i}\mathcal{M} \\
&\equiv &T_{11n}+T_{12n}.
\end{eqnarray*}%
For $T_{11n}$, plug in the first-order approximation of $\hat{\gamma}_{-i}\left( m_{i}\right) $ as in (\ref{eq:gam_first_order}) to obtain that
\begin{equation*}
\hat{\gamma}_{-i}\left( m_{i}\right) -\mathbb{E}\left[ \hat{\gamma}_{-i}\left( m\right) \right]_{m=m_{i}} =\iota'Q\left(m_{i}\right)^{-1}\sum_{j\neq i}^{n}w_{ij}D_{j}u_{j},
\end{equation*}%
where $\iota =(0,\dots,0,1)'$, $D_j$ is short for $D_j(\theta_0(m_j))$, and 
\begin{equation}
w_{ij}=\frac{K\left( \frac{m_{j}-m_{i}}{b_{n}}\right) }{\sum_{j\neq i}^{n}K\left( \frac{m_{j}-m_{i}}{b_{n}}\right) }=\frac{K\left( \frac{m_{j}-m_{i}}{b_{n}}\right) }{nb_{n} f_{m} ( m_{i} )  }(1+o_{a.s.}(1)) \equiv \frac{\tilde{w}_{ij}}{nb_n}(1+o_{a.s.}(1)). \label{eq:wij}
\end{equation}
The $o_{a.s}(1) $ is uniform over $m_{i}\in \mathrm{int}(\mathcal{M})$. 
Denote $\Delta _{i}\left( m_{i}\right) =z_{i}\left( \gamma_{0}\left( m_{i}\right) \right) d_{i}\iota ^{\prime }Q\left( m_{i}\right)^{-1}$. 
Then, 
\begin{eqnarray}
T_{11n}&=&\frac{1}{\sqrt{n}}\sum_{i=1}^{n}\Delta _{i}\left( m_{i}\right) \sum_{j\neq i}^{n}w_{ij}D_{j}u_{j}\cdot 1_{i}\mathcal{M}  \notag \\
&=&\frac{1}{\sqrt{n}}\sum_{j=1}^{n}\left( \sum_{i\neq j}w_{ij}\Delta_{i}\left( m_{i}\right)1_{i}\mathcal{M}  \right) D_{j} u_{j}  \notag \\
&=&\frac{1}{\sqrt{n}}\sum_{j=1}^{n}\mathbb{E}\left[ \phi _{j}|m_{j}\right]D_{j}u_{j}+\frac{1}{\sqrt{n}}\sum_{j=1}^{n}\left( \phi _{j}-\mathbb{E}\left[\phi _{j}|m_{j}\right] \right) D_{j}u_{j},  \label{eq:T11n}
\end{eqnarray}
where
\begin{equation*}
\phi_{j}=\sum_{i\neq j}w_{ij}\Delta _{i}\left( m_{i}\right) \cdot 1_{i}\mathcal{M}.
\end{equation*}%
For the first term in (\ref{eq:T11n}), by the change of variable $m_i-m_j = ub_n$, we obtain 
\begin{eqnarray*}
\mathbb{E}\left[ \phi _{j}|m_{j}\right]  &=&\frac{1}{b_{n}}\mathbb{E}\left[\left. \frac{K\left( \frac{m_{j}-m_{i}}{b_{n}}\right) }{f_{m}\left(
m_{j}\right) }\Delta _{i}\left( m_{i}\right) \cdot 1_{i}\mathcal{M}\right\vert m_{j}\right]  \\
&=&\int K\left( u\right) \frac{f_{m}\left(m_i \right) \mathbb{E}\left[ \Delta _{i}\left( m\right) |m_{i}=m\right] |_{m=m_{j}+ub_{n}}}{f_{m}\left( m_{j}+ub_n \right) }1\left[ m_{j}+ub_{n}\in \mathrm{int}\left( \mathcal{M}\right) \right] du \\
&=&\mathbb{E}\left[ \Delta _{i}\left( m_{i}\right) |m_{i}=m\right]_{m=m_{j}}1_{j}\mathcal{M}+O\left( b_{n}^{2}\right) ,
\end{eqnarray*}
where the $O\left( b_{n}^{2}\right) $ is uniform over $m_{j}\in \mathrm{int}(\mathcal{M})$. 

For the second term in (\ref{eq:T11n}), by the i.i.d.ness and the fact that $\mathbb{E[}u_{j}D_{j}^{\prime }(\phi _{j}-\mathbb{E}\left[ \phi _{j}|m_{j}\right])|m_j ]=0$, we have that
\begin{eqnarray*}
&&\mathbb{E}\left[ \left\vert \left\vert \frac{1}{\sqrt{n}}\sum_{j=1}^{n}u_{j}D_{j}^{\prime }\left( \phi _{j}-\mathbb{E}\left[ \phi_{j}|m_{j}\right] \right) \right\vert \right\vert ^{2}\right]  \\
&=&\mathbb{E}\left[ \left\vert \left\vert \frac{1}{\sqrt{n}}\sum_{j=1}^{n}u_{j}D_{j}^{\prime }\left( \sum_{i\neq j}\tilde{w}_{ij}\Delta
_{i}\left( m_{i}\right) \cdot 1_{i}\mathcal{M}-\mathbb{E}\left[ \phi_{j}|m_{j}\right] \right) \right\vert \right\vert^{2}\right]  \\
&=&\frac{1}{n^{3}b_{n}^{2}}\mathbb{E}\left[ 
\begin{array}{c}
\sum_{j}\sum_{j'}\sum_{i\neq j}\sum_{i'\neq j'}\left( \tilde{w}_{ij}\Delta _{i}\left( m_{i}\right) 1_{i}\mathcal{M}-\mathbb{E}
\left[ \phi _{j}|m_{j}\right] \right)  \\ 
\times \left( \tilde{w}_{i^{\prime }j'}\Delta _{i^{\prime }}\left( m_{i^{\prime}}\right) 1_{i^{\prime }}\mathcal{M}-\mathbb{E}\left[ \phi _{j'}|m_{j'}\right]\right) u_{j^{\prime }}D_{j^{\prime }}^{\prime }D_{j}u_{j}
\end{array}
\right]  \\
&=&\frac{1}{n^{3}b_{n}^{2}}\sum_{j}\mathbb{E}\left[ 
\begin{array}{c}
\sum_{i\neq j}\sum_{i^{\prime }\neq j}\left( \tilde{w}_{ij}\Delta_{i}\left(m_{i}\right)1_{i}\mathcal{M}-\mathbb{E}\left[\phi_{j}|m_{j}\right]\right)\\ 
\times \left( \tilde{w}_{i^{\prime }j}\Delta _{i^{\prime }}\left( m_{i^{\prime}}\right) 1_{i^{\prime }}\mathcal{M}-\mathbb{E}\left[ \phi _{j}|m_{j}\right]\right) u_{j}^{2}||D_{j}||^2
\end{array}
\right]  \\
&\leq &\frac{1}{n^{2}b_{n}^{2}}\mathbb{E}\left[ \sum_{i\neq j}\left\vert \tilde{w}_{ij}\Delta_{i}\left(m_{i}\right)1_{i}\mathcal{M}-\mathbb{E}\left[\phi_{j}|m_{j}\right] \right\vert^2 u_{j}^{2}||D_{j}||^2\right]  \\
& = & O\left(\frac{1}{nb_n}\right) \rightarrow 0,
\end{eqnarray*}
where $\tilde{w}_{ij}$ is defined in \eqref{eq:wij}.
Therefore, 
\begin{equation*}
T_{11n}=\frac{1}{\sqrt{n}}\sum_{j=1}^{n}\delta \left( m_{j}\right) u_{j}1_{j}\mathcal{M} + o_p(1),
\end{equation*}
where
\begin{eqnarray*}
\delta \left( m_{j}\right)  &=&\mathbb{E}\left[ \Delta _{i}\left( m_{i}\right) |m_{i}=m\right] |_{m=m_{j}}D_{j} \\
&=&\left. \mathbb{E}\left[ z_{i}\left( \gamma _{0}\left(m_{i}\right)\right)d_{i}\iota ^{\prime }Q\left( m_{i}\right)^{-1}|m_{i}=m\right]\right\vert_{m=m_{j}}D_{j}.
\end{eqnarray*}

For $T_{12n}$, it follows from (\ref{eq:gam_uniform}) that%
\begin{eqnarray*}
\left\vert \left\vert T_{12n}\right\vert \right\vert &\leq &\sqrt{n}b_{n}^{2}\cdot C\frac{1}{n}\sum_{i=1}^{n}\left\vert \left\vert z_{i}\left(
\gamma _{0}\left( m_{i}\right) \right) d_{i}\right\vert \right\vert 1_{i}\mathcal{M} \\
&=&o_{p}\left( 1\right),
\end{eqnarray*}%
given that $nb_{n}^{4}\rightarrow 0$. 
Then, combining $T_{11n}$ and $T_{12n}$ yields the leading term of $B_{21n}$ as desired.
\end{proof}

\end{document}